\newtheorem{remark}{Remark}[section]
\tikzset{
modal/.style={>=stealth’,shorten >=1pt,shorten <=1pt,auto,
node distance=0.6cm},
world/.style={circle,draw,minimum size=0.6cm},
point/.style={circle,draw,fill=black,inner sep=0.5mm},
reflexive/.style={->,in=120,out=60,loop,looseness=#1},
reflexive/.default={5},
reflexive point/.style={->,in=135,out=45,loop,looseness=#1},
reflexive point/.default={25},
}
\newcommand{\Prop}{\text{Prop}}
\newcommand{\simu}{\;\underline{\to}\;}
\newcommand{\bisim}{\;\underline{\leftrightarrow}\;}
\newcommand{\negbisim}{\;\underline{\not\leftrightarrow}\;}
\newcommand{\mono}{\mathcal{L}_{\Box,\Diamond,\wedge,\vee}}
\newcommand{\posexcon}{\mathcal{L}_{\Diamond,\wedge,\top}}
\newcommand{\posex}{\mathcal{L}_{\Diamond,\wedge,\vee,\top}}
\newcommand{\pos}{\mathcal{L}_{\Box,\Diamond,\wedge,\vee,\top,\bot}}
\newcommand{\emptyloop}{\circlearrowright_{\emptyset}}
\newcommand{\fullloop}{\circlearrowright_{\mathrm{Prop}}}
\newcommand{\upclosure}{\mathop{\uparrow\!}}
\newcommand{\downclosure}{\mathop{\downarrow\!}}
\newcommand{\revision}[1]{{\color{red}#1}}
\newcommand{\rrevision}[1]{{\color{blue}#1}}
\renewcommand{\revision}[1]{#1}
\renewcommand{\rrevision}[1]{#1}
  \providecommand\BibTeX{{%
    \normalfont B\kern-0.5em{\scshape i\kern-0.25em b}\kern-0.8em\TeX}}}
\begin{document}

\title{Characterising Modal Formulas with Examples}


\author{Balder ten Cate}
\orcid{0000-0002-2538-5846}
\affiliation{%
  \institution{ILLC, University of Amsterdam}
  \streetaddress{Science Park 107}
  \city{Amsterdam}
  \country{The Netherlands}}
\email{b.d.tencate@uva.nl}

\author{Raoul Koudijs}
\orcid{0000-0002-9000-4675}
\affiliation{%
  \institution{University of Bergen}
  \streetaddress{Thormøhlens Gate 55, 5008 Bergen, Norway}
  \city{Bergen}
  \country{Norway}}
\email{raoulxluoar@gmail.com}

\begin{abstract}
We study the existence of finite characterisations for modal formulas. A finite characterisation of a modal formula $\varphi$ is a finite collection of positive and negative examples that distinguishes $\varphi$ from every other, non-equivalent modal formula, where an example is a finite pointed Kripke structure. This definition can be restricted to specific frame classes and to fragments of the  modal language: a modal fragment $\mathcal{L}$ admits finite characterisations with respect to a frame class $\mathcal{F}$ if every formula $\varphi\in\mathcal{L}$ has a finite characterisation with respect to $\mathcal{L}$ consisting of examples that are based on frames in $\mathcal{F}$. Finite characterisations are useful for illustration, interactive specification, and debugging of formal specifications, and their existence is a precondition for exact learnability with membership queries. We show that the full modal language admits finite characterisations with respect to a frame class $\mathcal{F}$ only when the modal logic of $\mathcal{F}$ is locally tabular. We then study which modal fragments, freely generated by some set of connectives, admit finite characterisations. Our main result is that the positive modal language without the truth-constants $\top$ and $\bot$ admits finite characterisations w.r.t.~the class of all frames. This result is essentially optimal: finite characterizability fails when the language is extended with the truth constant $\top$ or $\bot$ or with all but very limited forms of negation.
\end{abstract}

\begin{CCSXML}
<ccs2012>
<concept>
<concept_id>10003752.10003790.10003793</concept_id>
<concept_desc>Theory of computation~Modal and temporal logics</concept_desc>
<concept_significance>500</concept_significance>
</concept>
<concept>
<concept_id>10003752.10003790.10003797</concept_id>
<concept_desc>Theory of computation~Description logics</concept_desc>
<concept_significance>300</concept_significance>
</concept>
</ccs2012>
\end{CCSXML}

\terms{Languages, Design, Theory}

\ccsdesc[500]{Theory of computation~Modal and temporal logics}
\ccsdesc[300]{Theory of computation~Description logics}

\keywords{Finite Characterisations, Unique Characterisations, Dualities, Splittings, Modal Logic, Positive Logics}

\maketitle

\section{Introduction}

By the standard Kripke semantics of modal logic, we can associate to 
every modal formula $\varphi$ the (possibly infinite) class of pointed models that satisfy it, and, by complementation, the set of pointed models that do not satisfy it. We will call the former \emph{positive examples} of $\varphi$, and
we will call the latter \emph{negative examples} of $\varphi$.
We study the question whether it is possible to \textit{characterise} a modal formula $\varphi$, by a finite set of positive and negative examples.
That is, we study the question whether there is a finite collection of positive and negative examples, such that $\varphi$ is the only formula (up to logical equivalence), that fits these examples. Furthermore, ideally, we would like these examples to be finite structures. If such a uniquely characterizing finite set of finite examples exists, we call it a \textit{finite characterisation} of $\varphi$.


Generating a uniquely characterizing set of  examples for a given logical specification can be useful for illustration, interactive specification, and debugging purposes (e.g.,~\cite{MannilaR86} for relational database queries, \cite{AlexeCKT2011} for schema mappings, and \cite{StaworkoW15} for XML queries). The exhaustive nature of the examples is useful in these settings, as, intuitively, they essentially display all the `ways' in which the specification can be satisfied or falsified.
The existence of finite characterisations is also important as a precondition for learnability in the model of \textit{exact learning with membership queries}~\cite{AngluinQueriesConcepts}: (efficient) exact learnability with membership queries is only possible when the concept
class in question admits a finite (polynomial-size) unique characterizations. Conversely, an
effective procedure for constructing finite characterizations gives rise to an effective
(but not necessarily efficient) exact learning algorithm with membership queries.
Exact learnability is an active topic of research in knowledge representation, where, in particular, the learnability of concept expressions in various description logics is studied,
corresponding to fragments of modal logic~\cite{FunkJL2022,Funk2021ELr,tencatedalmauCQ,LTLWolter,PatrikThesis}.

We begin by observing that \emph{no modal formula has a finite characterization with respect to the full modal language}, if we consider the modal logic $\textbf{K}$ (i.e., the modal logic of the class of all frames). This initial negative result  leads us to 
explore two questions: 

\begin{enumerate}
    \item For which frame classes $K$ does it hold that the modal logic of $K$ admits finite characterizations?
    \item Which fragments of the modal language admit finite characterizations (w.r.t.~all frames)?
\end{enumerate}

We answer the first question, by showing that the modal logics that admit finite characterizations are precisely the \emph{locally tabular} modal logics. Since local tabularity is quite rare, we can view this as a negative result, which provides additional motivation for studying the second question.

In order to study the second question, we consider all modal fragments $\mathcal{L}_C$ that are generated by a
set of operators 
$C\subseteq\{\land,\lor,\Diamond,\Box,\top,\bot,\neg_{at}\}$,
where $\neg_{at}$ denotes atomic negation (i.e., negation applied to propositional variables).
We almost completely classify these fragments as to whether they admit finite characterizations.
In particular, our main result states that
$\mono$ (i.e., the positive fragment without $\top$ and $\bot$) is a maximal fragment admitting finite characterizations, and that finite
characterizations can be computed effectively for formulas in this fragment.

A significant part of the paper is devoted to the study of the fragment $\mono$. Building on~\cite{KurtoninaDeRijke}, we give a semantic characterization of this fragment in terms of 
invariance for ``weak simulations''. We then proceed to study the category of finite pointed Kripke structures under weak simulations. In 
particular, we establish a method for constructing
finite dualities (a.k.a.,~splittings) in this category. Finally, we show that these finite dualities give rise to finite characterisations for $\mono$-formulas. We also study size bounds for finite characterizations, and the computational complexity of computing these.


\paragraph{Outline}
In Section~\ref{sec:Preliminaries}, 
we review basic concepts and notation
from modal logic. 
In Section~\ref{sec:FinChars}, we introduce the notion of finite characterizations, and we present our main results regarding finite characterizations for modal formulas. Section~\ref{sec:mono}
is dedicated to the proof of our result for $\mono$, and also 
develops the general theory of this specific fragment. Finally,
in Section~\ref{sec:further}, we discuss
complexity aspects, and size bounds, and we discuss further extensions of our framework.

\paragraph{Related Work}

Most of the results presented here were obtained in the MSc thesis~\cite{MyThesis}. We draw on earlier results in~\cite{AlexeCKT2011,tencatedalmauCQ} regarding finite characterizations for database schema mappings and conjunctive queries (equivalently, positive-existential FO formulas). Specifically, as we will explain
 in Section~\ref{sec:FinChars}, it follows from results in~\cite{AlexeCKT2011,tencatedalmauCQ} that $\posex$ admits finite characterizations. Recently, in~\cite{FunkJL2022,cate2023rightadjoints}  the results from~\cite{AlexeCKT2011,tencatedalmauCQ} were further 
extended to
obtain finite characterizations in the presence of certain background theories such as lightweight ontologies. In~\cite{LTLWolter,PatrikThesis}, the authors studied the existence of finite characterisations for fragments of the linear temporal logic LTL. Finally,  \cite{StaworkoW15} studied finite characterizations for XPath twig-queries, which can also be viewed as a fragment of modal logic (with child and descendant modalities) interpreted on finite trees.

\section{Preliminaries (Modal Logic)}\label{sec:Preliminaries}
In this preliminary section, we will review the basics of modal logic. We refer to~\cite{bluebook} for more details.

\begin{definition}{(Full Modal Language, Modal Fragments)}
The \textit{full modal language} $\mathcal{L}_{\text{full}}[\Prop]$ over a set $\mathrm{Prop}$ of propositional variables is recursively generated by
\[\varphi::=\;\top\mid \bot\mid p\mid \neg_{\text{at}}p\mid \revision{(\varphi\wedge\varphi)}\mid \revision{(\varphi\vee\varphi)}\mid \Diamond\varphi\mid \Box\varphi\]
where $p\in\mathrm{Prop}$ and $\neg_{at}$ has the same semantics as ordinary negation $\neg$. A \textit{modal language} $\mathcal{L}$ over the variables \Prop~(notation: $\mathcal{L}[\Prop]$) is any collection of modal formulas over \Prop, i.e.~any subset of $\mathcal{L}_{\text{full}}[\Prop]$. Occasionally we may write just $\mathcal{L}_{\text{full}}$, when the set of base variables when it is clear from context.

For $S\subseteq\{\wedge,\vee,\Diamond,\Box,\top,\bot,\neg_{at}\}$, we denote by $\mathcal{L}_S[\Prop]$ fragment of
$\mathcal{L}_{\text{full}}[\Prop]$ consisting of formulas 
generated from propositional variables using only the operators in $S$. 
\end{definition}

We will refer to $\Box$ and $\Diamond$ as \emph{modal operators}. The \emph{dual} of a connective in 
$\{\wedge,\vee,\Diamond,\Box,\top,\bot\}$
is defined
as follows: $\land, \lor$ are each others dual, $\Box,\Diamond$ are each others dual, and $\top, \bot$ are each others dual.

Note that we assume formulas to be in \textit{negation normal form} (i.e. we assume that negations can only appear in front of variables, and that double negations never occur) and make this explicit by viewing $\neg_{at}$ as a restricted connective that can only be applied to (unnegated) variables. 




\begin{definition}{(Kripke Models, Pointed Models, Frames)}
A (Kripke) \textit{model} is a triple $M=(dom(M),R,v)$ where $dom(M)$ is a set of `possible worlds', $R\subseteq dom(M)\times dom(M)$ a binary `accessibility' relation and $v:dom(M)\to\mathcal{P}(\mathrm{Prop})$ is a colouring.\footnote{An ordinary valuation function $V:\mathrm{Prop}\to\mathcal{P}(dom(M))$ induces a `colouring' $col(V):dom(M)\to\mathcal{P}(\mathrm{Prop})$ as its \textit{transpose} $col(V)(w):=\{p\in\mathrm{Prop}\mid w\in V(p)\}$.} A (Kripke) \textit{frame} is a Kripke model without a colouring. For any model $M=(dom(M),R,v)$, let $\text{Fr}(M)=(dom(M),R)$ be its underlying frame. A \textit{pointed model} is a pair $(M,s)$ of a Kripke model $M$ together with a state $s\in dom(M)$.\footnote{Sometimes we will be sloppy and write $M,s$ instead of $(M,s)$ for pointed models.}
\end{definition}

\begin{definition}{(Semantics)}
The semantic clauses for modal logic are as follows:
\begin{align*}
    M,s\models p\qquad\textrm{iff}\qquad& p\in v(s)\\
    M,s\models\neg_{at}p\qquad\textrm{iff}\qquad&\textrm{not}\;M,s\models p\\
    M,s\models\top\qquad\;\;\;\qquad&\textrm{always}\\
    M,s\models\bot\qquad\;\;\;\qquad&\textrm{never}\\
    M,s\models\varphi\wedge\psi\qquad\textrm{iff}\qquad&M,s\models\varphi\;\textrm{and}\;M,s\models\psi\\
    M,s\models\varphi\vee\psi\qquad\textrm{iff}\qquad&M,s\models\varphi\;\textrm{or}\;M,s\models\psi\\
    M,s\models\Diamond\varphi\qquad\textrm{iff}\qquad&\exists t\in R[s]\;\textrm{with}\;M,t\models\varphi\\
    M,s\models\Box\varphi\qquad\textrm{iff}\qquad&\forall t\in R[s]\;\textrm{it holds that}\;M,t\models\varphi
\end{align*}

Finally, we set: 
\[\text{mod}(\varphi):=\{(M,s)\mid M,s\models\varphi\},\qquad \text{finmod}(\varphi):=\{(M,s)\mid M,s\models\varphi,\;dom(M)\;\text{is finite}\} \] 
\revision{For any two modal formulas $\varphi$ and $\psi$, we say that $\varphi$ entails $\psi$ and write $\varphi\models\psi$ if $mod(\varphi)\subseteq mod(\psi)$, i.e. if every model of $\varphi$ is also a model of $\psi$.} Finally, for any class of frames $\mathcal{F}$ we define
\[\text{mod}_{\mathcal{F}}(\varphi):=\{(M,s)\mid M,s\models\varphi\;\&\;\text{Fr}(M)\in\mathcal{F}\}\]
\end{definition}

\begin{definition}{(Bisimulation)}
Let $(M,s),(M',s')$ be two pointed models. A relation $Z\subseteq dom(M)\times dom(M')$ is a \textit{bisimulation} (notation: $Z: M,s\bisim M',s'$), \revision{and we say that $(M,s)$ and $(M',s')$ are bisimilar}, if $(s,s')\in Z$ and for all $(t,t')\in Z$, the following clauses are satisfied:
\begin{description}
    \item[(atom)] $v^M(t)=v^{M'}(t')$
    \item[(forth)] \revision{For all $u\in dom(M)$}, $R^{M}tu$ implies $\exists u'\in M'$ with $R^{M'}t'u'$ and $(u,u')\in Z$
    \item[(back)] \revision{For all $u'\in dom(M')$}, $R^{M'}t'u'$ implies $\exists u\in M$ with $R^{M}tu$ and $(u,u')\in Z$
\end{description}
\end{definition}

\begin{theorem}\label{thm:bisimulationinvariance}
Modal formulas are invariant under bisimulations, i.e.~if $M,s\models\varphi$ and $M,s\bisim M',s'$ then $M',s'\models\varphi$ (and vice versa) for all modal formulas $\varphi$. 
\end{theorem}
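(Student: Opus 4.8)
The plan is to prove bisimulation invariance by induction on the structure of the modal formula $\varphi$. I would first fix a bisimulation $Z : M,s \bisim M',s'$ and then prove the stronger statement that for every formula $\varphi$ and every pair $(t,t') \in Z$, we have $M,t \models \varphi$ if and only if $M',t' \models \varphi$. Strengthening the claim to range over all $Z$-related pairs (rather than just the distinguished pair $(s,s')$) is essential, because the inductive step for the modal operators needs to invoke the hypothesis at successor states, which are only guaranteed to be $Z$-related by the back-and-forth conditions.

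The base cases handle the atomic formulas. For $\varphi = p$ (and likewise $\varphi = \neg_{at} p$, $\top$, $\bot$), the equivalence $M,t \models p \iff M',t' \models p$ follows immediately from the \textbf{(atom)} clause, which guarantees $v^M(t) = v^{M'}(t')$. The Boolean cases $\varphi = \psi_1 \wedge \psi_2$ and $\varphi = \psi_1 \vee \psi_2$ are routine: they follow from the induction hypothesis applied to $\psi_1$ and $\psi_2$ at the same pair $(t,t')$, together with the semantic clauses for $\wedge$ and $\vee$.

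The interesting cases are the modal operators, and here the two directions of the biconditional use the two halves of the bisimulation. Suppose $\varphi = \Diamond\psi$ and $M,t \models \Diamond\psi$; then there is some $u$ with $R^M t u$ and $M,u \models \psi$. By \textbf{(forth)} there exists $u'$ with $R^{M'} t' u'$ and $(u,u') \in Z$, so by the induction hypothesis $M',u' \models \psi$, whence $M',t' \models \Diamond\psi$. The converse direction uses \textbf{(back)} symmetrically. The case $\varphi = \Box\psi$ is dual: assuming $M,t \models \Box\psi$, take any successor $u'$ of $t'$ in $M'$; by \textbf{(back)} it has a $Z$-partner $u$ that is a successor of $t$ in $M$, so $M,u \models \psi$ and hence by the induction hypothesis $M',u' \models \psi$; since $u'$ was arbitrary, $M',t' \models \Box\psi$, and again the converse uses \textbf{(forth)}.

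I do not expect any genuine obstacle here, as this is a standard result. The only point requiring care is the bookkeeping around the quantifier alternation in the modal cases: one must be disciplined about which of \textbf{(forth)} and \textbf{(back)} supplies the required witness in each direction, and the symmetry between $\Diamond$ and $\Box$ (and between the two directions of the biconditional) should be exploited to avoid redundant argument. Since the original statement fixes only the distinguished pair $(s,s')$, the final conclusion for $\varphi$ at $(s,s')$ is simply the instance of the strengthened claim at that pair.
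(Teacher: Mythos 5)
Your proof is correct: it is the standard structural-induction argument, with the essential strengthening of the inductive claim to all $Z$-related pairs, and the correct pairing of the \textbf{(forth)}/\textbf{(back)} clauses with the two directions of the biconditional in the $\Diamond$ and $\Box$ cases. The paper states this theorem without proof, as a classical result from the modal logic literature, and your argument is precisely the canonical proof being invoked there.
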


Kripke models can be seen as a special type of first order structures, and that the modal language can be seen as a \emph{fragment} of first order logic over these special types of structures through a fully compositional translation that is known as the \emph{standard translation} that maps a modal formula $\varphi$ to a first order formula $ST_x(\varphi)$ with one free variable, $x$. In fact, there is a well-known \textit{converse} to bisimulation-invariance, known as the \textit{Van Benthem Characterisation Theorem}.

\begin{theorem}[Van Benthem Characterisation \cite{benthem1976}]\label{thm:Johan}
A first-order formula $\alpha(x)$ (in the signature with one binary relation and finitely many unary relations) is invariant under bisimulation  iff it is equivalent to the standard translation of a modal formula.
\end{theorem}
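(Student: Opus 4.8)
The plan is to prove the two directions separately. The forward direction --- that the standard translation of a modal formula is bisimulation-invariant --- is immediate from Theorem~\ref{thm:bisimulationinvariance}, so the real content is the converse: every bisimulation-invariant first-order $\alpha(x)$ is equivalent to $ST_x(\varphi)$ for some modal $\varphi$. First I would collect the \emph{modal consequences} of $\alpha$, namely $\Gamma = \{ST_x(\psi) \mid \psi \text{ modal and } \alpha\models ST_x(\psi)\}$, and show that $\Gamma\models\alpha$. Granting this, compactness yields a finite $\Gamma_0\subseteq\Gamma$ with $\Gamma_0\models\alpha$; since the standard translation is compositional, $\bigwedge\Gamma_0 = ST_x(\varphi)$ for a single modal $\varphi$, and by construction $\alpha\models\bigwedge\Gamma_0$, so $\alpha\equiv ST_x(\varphi)$.

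To establish $\Gamma\models\alpha$, I would take any pointed model with $M,s\models\Gamma$ and argue $M,s\models\alpha$. A compactness argument shows that the full modal theory of $(M,s)$ is jointly satisfiable with $\alpha$: otherwise a finite conjunction $\psi$ of modal formulas true at $(M,s)$ would satisfy $\alpha\models ST_x(\neg\psi)$ (using that $\mathcal{L}_{\text{full}}$ is closed under negation up to equivalence), putting $ST_x(\neg\psi)$ in $\Gamma$ and forcing $M,s\models\neg\psi$, contradicting $M,s\models\psi$. Hence there is a pointed model with $N,t\models\alpha$ that is modally equivalent to $(M,s)$.

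The crux is then to upgrade modal equivalence to an actual bisimulation, which fails in general but holds over $\omega$-saturated models. I would pass to $\omega$-saturated elementary extensions $M^\ast\succeq M$ and $N^\ast\succeq N$, which exist by standard model theory. Since modal formulas are first-order expressible, $(M^\ast,s)$ and $(N^\ast,t)$ remain modally equivalent, and the key lemma --- the Hennessy--Milner property --- says that over $\omega$-saturated models modal equivalence \emph{is} a bisimulation: the relation linking modally equivalent pairs satisfies (forth) and (back) because $\omega$-saturation guarantees that any modal type consistent with the successors on one side is realised by an actual successor on the other, so $M^\ast,s\bisim N^\ast,t$. Finally, $N,t\models\alpha$ gives $N^\ast,t\models\alpha$ by elementarity, bisimulation-invariance of $\alpha$ transfers this to $M^\ast,s\models\alpha$, and elementarity once more yields $M,s\models\alpha$, completing $\Gamma\models\alpha$.

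The main obstacle is the Hennessy--Milner lemma, and specifically the verification of the (forth)/(back) clauses: one must convert the statement ``some successor on one side is modally distinguishable from every successor on the other'' into the finite satisfiability of a single type of modal formulas, so that $\omega$-saturation can be invoked to produce a matching successor. Everything else reduces to routine bookkeeping with compactness and elementary extensions.
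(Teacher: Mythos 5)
Your proposal is correct and is essentially the classical argument that the paper cites from \cite{benthem1976} rather than reproving: collect the modal consequences of $\alpha$, use compactness, and upgrade modal equivalence to an actual bisimulation by passing to $\omega$-saturated elementary extensions and invoking the Hennessy--Milner property. This is also precisely the template the paper itself follows in proving its weak-simulation analogue (Theorem~\ref{thm:wsim-characterization}, via Lemma~\ref{lem:omegasaturated} and the type-inclusion relation), so your route and the paper's coincide.
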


For every pointed model $(M,s)$, let $(M\restriction s,s)$ be the \emph{point-generated submodel} of $M$
generated by $s$, i.e. the restriction of $M$ to the subset of all nodes reachable from $s$. It is easily seen that every pointed model $(M,s)$ is bisimilar to its point-generated submodel $(M\restriction s,s)$.
We say that a pointed model $(M,s)$ is point-generated if 
$M=M\restriction s$.
We say that a bisimulation $Z:M,s\bisim M',s'$
is \emph{total} if $dom(Z)=dom(M)$ and $ran(Z)=dom(M')$.

\begin{lemma}\label{lem:pointgeneratedbisimtotal}
If $Z:M,s\bisim M',s'$ and the pointed models $(M,s)$ and $(M',s')$ are point-generated, then $Z$ is total.
\end{lemma}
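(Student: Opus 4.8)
The plan is to establish the two halves of totality separately, namely $dom(Z)=dom(M)$ and $ran(Z)=dom(M')$, and to exploit the symmetry between the conditions (forth) and (back). Since $(M,s)$ is point-generated, every $t\in dom(M)$ is reachable from $s$ by some finite $R^M$-path $s=t_0\,R^M\,t_1\,R^M\,\cdots\,R^M\,t_n=t$. The inclusion $dom(Z)\subseteq dom(M)$ is immediate from $Z\subseteq dom(M)\times dom(M')$, so the real content is the reverse inclusion, which I would prove by induction on the path length $n$, showing that every node reachable from $s$ in $n$ steps lies in $dom(Z)$.

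For the base case $n=0$, we have $t_0=s\in dom(Z)$ because $(s,s')\in Z$ by the definition of bisimulation. For the inductive step, assume $t_k\in dom(Z)$, so there is some $t'_k\in dom(M')$ with $(t_k,t'_k)\in Z$. Since $R^M t_k t_{k+1}$ holds, the (forth) clause applied to the pair $(t_k,t'_k)$ yields a witness $t'_{k+1}\in dom(M')$ with $R^{M'}t'_k t'_{k+1}$ and $(t_{k+1},t'_{k+1})\in Z$, so $t_{k+1}\in dom(Z)$. As every node of $M$ is reachable from $s$, this gives $dom(M)\subseteq dom(Z)$, hence equality.

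The claim $ran(Z)=dom(M')$ is proved by the mirror-image argument: using that $(M',s')$ is point-generated, every $t'\in dom(M')$ is reachable from $s'$ by a finite $R^{M'}$-path, and an entirely symmetric induction—now invoking the (back) clause in the inductive step—shows that each such node lies in $ran(Z)$. Combining the two identities gives that $Z$ is total. I do not expect any genuine obstacle here: the only point requiring care is to invoke point-generation of $M$ for the domain claim and of $M'$ for the range claim, and to use (forth) and (back) respectively in the two inductions; the argument is otherwise a routine reachability induction.
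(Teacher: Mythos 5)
Your proof is correct and follows essentially the same approach as the paper: the paper also traces a finite path from the root (using point-generation) and applies the (forth) clause repeatedly to show every state of $M$ occurs in $dom(Z)$, with a symmetric argument via (back) for $ran(Z)$. Your explicit induction on path length is just a more formalized phrasing of the paper's ``applying the [forth] clause $n$ times.''
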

\begin{proof}
Let $Z:M,s\bisim M',s'$ where $M=M\restriction s$ and $M'=M'\restriction s'$. First, we show that for each $t\in dom(M)$ there is some $t'\in dom(M')$ with $(t,t')\in Z$. Since $(M,s)$ is point-generated, there is some finite path $(t_0,\ldots,t_n)$ in $M$ with $t_0=s$ and $t_n=t$. Applying the [forth] clause $n$ times we get a path $(t'_0,\ldots,t'_n)$ in $M'$ with $t'_0=s'$ and $(t_i,t'_i)\in Z$ for all $i\leq n$. Then $t'_n\in dom(M')$ is as desired. The proof in the other direction is entirely symmetric.
\end{proof}

Next, we define a number of notions that are
concerned with depth.

\begin{definition}{(Modal Depth)}
The \textit{modal depth} $d(\varphi)$ of a formula $\varphi$ is the length of the longest sequence of nested modal operators in $\varphi$.
\end{definition}

\begin{definition}\label{def:Paths}{(Path, Height)}
A \textit{finite path} through a model $M$ is a finite sequence $(t_0,\ldots,t_n)$ of states in $dom(M)$ such that $R(t_i,t_{i+1})$ holds for all $i<n-1$. We also write $t_0R\ldots Rt_n$ for this path. \revision{The length of a path is the number of states visited, i.e. the length of $(t_0,\ldots,t_n)$ is exactly $n-1$.} The height of a pointed model $(M,s)$ is the length of the longest finite path in $M$ starting at $s$, or $\infty$ if there is no finite upper bound.
\end{definition}

\revision{
\begin{definition}{($n$-Bisimulations)}
Let $(M,s),(M',s')$ be two pointed models and $n$ a natural number. We say that $(M,s)$ and $(M',s')$ are \emph{$n$-bisimilar}
(notation: $M,s\bisim_n\;M',s'$), if there is an indexed collection of
binary relations $Z_0,\ldots, Z_{n-1}\subseteq dom(M)\times dom(M')$  with $(s,s')\in Z_0$ such that
\begin{description}
    \item[(atom)] for all $(t,t')\in Z_0\cup\ldots\cup Z_{n-1}$ we have $v^M(t)=v^{M'}(t')$
\end{description}
and moreover for all $i< n-1$ and all $(t,t')\in Z_i$, the following clauses are satisfied:
\begin{description}
    \item[(forth)] For all $u\in dom(M)$, $R^{M}tu$ implies $\exists u'\in M'$ with $R^{M'}t'u'$ and $(u,u')\in Z_{i+1}$
    \item[(back)] For all $u'\in dom(M')$, $R^{M'}t'u'$ implies $\exists u\in M$ with $R^{M}tu$ and $(u,u')\in Z_{i+1}$
\end{description}
The collection $Z_0, \ldots, Z_{n-1}$ is called a $n$-\textit{bisimulation}.
\end{definition}
}

\begin{proposition}{\cite{bluebook}}\label{prop:boundeddepthpreservation}
Modal formulas of depth at most $n$ are preserved under $n$-bisimulations.
\end{proposition}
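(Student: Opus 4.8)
The plan is to prove Proposition~\ref{prop:boundeddepthpreservation} by induction on the structure of the formula $\varphi$, strengthened appropriately to make the induction go through. The key observation is that a single implication ``$M,s\models\varphi$ implies $M',s'\models\varphi$'' for a fixed pair $(s,s')\in Z_0$ is not strong enough to be preserved under subformulas, because the modal operators descend to successor states linked by \emph{later} relations $Z_1,Z_2,\ldots$ Instead, I would prove the following stronger statement simultaneously for all $i$: for every formula $\psi$ of modal depth at most $n-1-i$, and every pair $(t,t')\in Z_i$, we have $M,t\models\psi$ iff $M',t'\models\psi$. Setting $i=0$ and $\psi=\varphi$ (of depth $\le n$, so $n-1-0$ should be read as the depth budget matching the indexing) then yields the proposition. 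I would first fix the precise depth bookkeeping so that the indices line up: a pair in $Z_i$ should preserve all formulas of depth at most $(n-1)-i$, so that pairs in $Z_0$ preserve depth-$(n-1)$ formulas and, via the modal step, their successors in $Z_{i+1}$ preserve depth-$((n-1)-(i+1))$ formulas.

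The induction itself proceeds on the structure of $\psi$. The atomic cases $\psi=p$ and $\psi=\neg_{at}p$ follow directly from the (atom) clause, which asserts $v^M(t)=v^{M'}(t')$ for every pair in the union $Z_0\cup\cdots\cup Z_{n-1}$; note it is essential here that (atom) is imposed on \emph{all} the relations, not just $Z_0$. The constants $\top,\bot$ are trivial, and the Boolean cases $\psi=\alpha\wedge\beta$ and $\psi=\alpha\vee\beta$ follow immediately from the induction hypothesis applied to the components $\alpha,\beta$, which have the same or smaller depth and hence the same depth budget. The only substantive cases are the modal ones. For $\psi=\Diamond\alpha$ with $(t,t')\in Z_i$, the depth budget guarantees $i<n-1$, so the (forth) and (back) clauses are available: if $M,t\models\Diamond\alpha$ we pick a witness $u$ with $R^Mtu$ and $M,u\models\alpha$, apply (forth) to obtain $u'$ with $R^{M'}t'u'$ and $(u,u')\in Z_{i+1}$, and then invoke the induction hypothesis on $\alpha$ (whose depth is one less, matching the decremented budget at index $i+1$) to conclude $M',u'\models\alpha$, hence $M',t'\models\Diamond\alpha$; the converse direction uses (back) symmetrically. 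The case $\psi=\Box\alpha$ is dual, using (back) for the forward implication and (forth) for the converse.

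The main obstacle I anticipate is not conceptual but bookkeeping: making the depth-to-index correspondence precise so that the modal step never runs out of relations. Specifically, one must verify that whenever a formula of positive modal depth needs to be evaluated at a pair $(t,t')\in Z_i$, the index $i$ is strictly less than $n-1$ so that (forth)/(back) apply and produce a pair in $Z_{i+1}$; this is exactly guaranteed by the invariant that depth at index $i$ is bounded by $(n-1)-i$, since a formula of positive depth forces $(n-1)-i\ge 1$, i.e.\ $i\le n-2<n-1$. A secondary subtlety is the (possible) off-by-one discrepancy between the paper's indexing convention (relations $Z_0,\ldots,Z_{n-1}$, with the forth/back clauses quantified over $i<n-1$) and the informal statement ``depth at most $n$''; I would reconcile this carefully at the outset and, if needed, state the strengthened claim in terms of the exact number of available relations rather than the nominal bound $n$, so that the final instantiation recovers precisely the claimed preservation for depth-$n$ formulas.
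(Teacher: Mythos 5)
Your proof is correct, and there is no competing proof in the paper to compare it against: the paper states this proposition as a known fact cited to \cite{bluebook}, so the relevant benchmark is the textbook argument, which is exactly what you give — a strengthened invariant assigning a depth budget to each index $i$, structural induction on the formula, the (atom) clause (imposed on the whole union $Z_0\cup\dots\cup Z_{n-1}$) handling literals, and the budget guaranteeing that forth/back are available whenever a modal operator must be unfolded.

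One point is worth settling rather than hedging: the off-by-one you flag is real, and it lies in the paper's definition, not in your argument. With relations $Z_0,\ldots,Z_{n-1}$ and forth/back imposed only for $i<n-1$, a chain starting at $Z_0$ supports only $n-1$ modal steps, so your invariant (depth at most $(n-1)-i$ at pairs in $Z_i$) is the best possible, and instantiating $i=0$ yields preservation of formulas of depth at most $n-1$, not $n$. This is not something careful bookkeeping can repair: for $n=1$ the paper's forth/back clauses are vacuous, so a $1$-bisimulation is mere atom-agreement at the roots, which fails to preserve the depth-$1$ formula $\Diamond p$ (take $s$ with a $p$-successor and $s'$ with no successors). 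The standard definition, as in the cited textbook, uses $n+1$ relations $Z_0,\ldots,Z_n$ with forth/back for all $i<n$; under that definition your proof goes through verbatim with budget $n-i$ at $Z_i$ and gives the proposition exactly as stated. The discrepancy is harmless for the paper itself: in its only application, inside the proof of Theorem~\ref{thm:nocharswrtK}, the formula satisfies $d(\varphi)<n$, so the weaker statement that your proof establishes under the paper's literal definition already suffices there.
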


\begin{proposition}\label{prop:heightboundedbysize}
For all pointed models $(M,s)$, $height(M,s)\leq|dom(M)|$ or $height(M,s)=\infty$.
\end{proposition}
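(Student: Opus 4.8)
The plan is to establish the stated dichotomy by a pigeonhole-and-pumping argument. I would split into two cases according to whether or not some finite path from $s$ has length exceeding $|dom(M)|$. If no such path exists, then by the definition of height every finite path from $s$ has length at most $|dom(M)|$, so immediately $height(M,s)\leq|dom(M)|$. (When $dom(M)$ is infinite this first disjunct holds trivially, so throughout the interesting case I may assume $dom(M)$ is finite.) The real content is therefore to show that the existence of one sufficiently long finite path forces $height(M,s)=\infty$.

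So suppose there is a finite path $(t_0,\ldots,t_n)$ with $t_0=s$ whose length exceeds $|dom(M)|$. Since such a path then visits strictly more than $|dom(M)|$ states when counted with multiplicity, the pigeonhole principle yields indices $i<j$ with $t_i=t_j$. The segment $t_i R t_{i+1} R \cdots R t_j$ is thus a genuine cycle returning to $t_i=t_j$, and this cycle is what I would pump.

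The key step is the pumping construction: for each $k\geq 0$, I would form the sequence obtained by following $(t_0,\ldots,t_i)$, then traversing the loop $(t_{i+1},\ldots,t_j)$ exactly $k$ times, and finally continuing along $(t_{j+1},\ldots,t_n)$. Every consecutive pair in this sequence is an $R$-edge already occurring in the original path (the inserted block is a loop from $t_i$ back to itself, and the splice points $t_i\to t_{i+1}$ and $t_j\to t_{j+1}$ use edges of the original path together with $t_i=t_j$), so the result is again a finite path from $s$, and its length grows without bound as $k\to\infty$. Hence there is no finite upper bound on the lengths of finite paths starting at $s$, which by definition means $height(M,s)=\infty$, completing the case analysis. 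I expect no genuine obstacle here beyond routine bookkeeping around the off-by-one in the definition of path length and the verification that the pumped sequence still satisfies the edge condition at the two splice points.
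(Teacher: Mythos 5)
Your proof is correct and is essentially the paper's argument in contrapositive form: the paper assumes the height is a finite $n$, takes a maximal-length path, and notes all its states must be distinct ``since otherwise there would be an infinite path,'' which is exactly the pigeonhole-plus-loop-pumping observation you spell out explicitly. The only difference is that you make the paper's ``easy to see'' step concrete via the explicit pumping construction, which is fine.
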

\begin{proof}
Suppose that $height(M,s)\ne\infty$, i.e.~$height(M,s)=n$ for some $n$. Then there is some path $(s,t_1\ldots,t_{n-1})$ of length $n$ through $M$ starting at $s$. All element on this
path must be distinct, because, otherwise, it is easy to see that
there would be an infinite path.
It follows that $height(M,s)=n\leq|dom(M)|$.
\end{proof}

\revision{
\begin{proposition}\label{prop:treeunravelling}
Let $n$ be a natural number. Every pointed model $(M,s)$ with $height(M,s)\geq n$ is $n$-bisimilar to a pointed model $(M',s')$ with $height(M',s')=n$.
\end{proposition}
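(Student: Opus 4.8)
The plan is to build $(M',s')$ as a \emph{truncated tree unravelling} of $(M,s)$. Concretely, I would let the domain of $M'$ consist of all finite paths $(t_0,\ldots,t_k)$ in $M$ with $t_0=s$ that visit at most $n$ states (so $k+1\leq n$), put an edge from a path $\pi$ to a path $\pi'$ exactly when $\pi'$ extends $\pi$ by one $R$-step, colour each path by the colouring $v^M$ of its final state, and take $s'=(s)$ to be the root. By construction every path in $M'$ starting at $s'$ visits at most $n$ states, so $height(M',s')\leq n$.

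For the height to be exactly $n$ I would invoke the hypothesis $height(M,s)\geq n$. Whether $height(M,s)$ is a finite value $\geq n$ or equals $\infty$, in both cases there is a finite path in $M$ starting at $s$ that visits at least $n$ states; taking its initial segment on $n$ states and reading it as a world of $M'$ yields a path from $s'$ visiting exactly $n$ states. Hence $height(M',s')\geq n$, and combined with the previous bound this gives $height(M',s')=n$, as required.

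It remains to exhibit the $n$-bisimulation witnessing $(M,s)\bisim_n(M',s')$. I would define, for each $0\leq i\leq n-1$, the relation $Z_i=\{(t,\pi)\mid \pi \text{ is a world of } M' \text{ visiting } i+1 \text{ states and ending in } t\}$, so that the $i$-th relation connects each world of $M$ to the copies of it that lie at distance $i$ from the root. Then $(s,s')\in Z_0$ is immediate, and the \textbf{(atom)} clause holds because each path is coloured by its endpoint. For the \textbf{(forth)} and \textbf{(back)} clauses one only needs $i<n-1$: given $(t,\pi)\in Z_i$ and an $R$-successor $u$ of $t$ in $M$, the extended path $\pi\cdot u$ visits $i+2\leq n$ states and is therefore still a world of $M'$, which witnesses \textbf{(forth)} and lands in $Z_{i+1}$; conversely every $M'$-successor of $\pi$ has exactly this form, witnessing \textbf{(back)}.

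The one genuinely delicate point, which I expect to be the main obstacle, is bookkeeping the off-by-one alignment between the number of states on a path, the level $i$ indexing $Z_i$, and the range $i<n-1$ in which back-and-forth is required. The construction is tuned so that the $n$ relations $Z_0,\ldots,Z_{n-1}$ correspond exactly to the $n$ levels of the truncated tree (states visited $1,\ldots,n$), and the worlds at the final level $i=n-1$, whose genuine $M$-successors have been discarded by the truncation, are precisely the worlds for which the definition of $n$-bisimulation imposes no back-and-forth requirement. Checking that this alignment is correct is the crux; everything else is routine verification of the clauses.
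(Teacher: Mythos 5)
Your proposal is correct and follows essentially the same route as the paper: the paper also constructs the depth-$n$ tree unravelling (worlds are paths from $s$ of bounded length, edges are one-step extensions, colours taken from final states) and defines the level-indexed relations $Z_i$ matching each path to its endpoint, with the last level exempt from the back-and-forth clauses exactly as you describe. The only cosmetic difference is that the paper orients the relations from $M'$ to $M$ rather than from $M$ to $M'$, which is immaterial since the clauses of an $n$-bisimulation are symmetric.
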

\begin{proof}
We define the \emph{depth-$n$ tree unravelling} of a pointed model $(M,s)$, a well-known construction in modal logic \cite{bluebook}. Given $(M,s)$ we construct $(M',s')$ as follows. Set $dom(M')$ to be the set of all paths $(s_0,\ldots,s_k)$ through $M$ with $s_0=s$ and $k\leq n-1$, and let $s'\in dom(M')$ denote the singleton path $(s)$. The accessibility relation is given by initial segment, i.e. if $\rho=(s_0,\ldots,s_k)$ and $\rho'=(s'_0,\ldots,s'_{k'})$ are two such paths, then $R^{M'}\rho\rho'$ holds if $k'=k+1$, $s_i=s_i$ for all $i\leq k$ and $R^Ms_k,s_{k+1}$ holds. Finally, a proposition letter $p\in\Prop$ is true at a path $\rho=(s_0,s\ldots,s_k)$ if $p\in v^M(s_k)$. Clearly we have $height(M',s')=n$, since by assumption that $height(M,s)\geq n$ there exist at least one path of length $n$. For $i\leq n-1$ define the following relations:
\[Z_i:=\{(\rho,s)\in dom(M')\times dom(M) \mid s\;\text{is the last state occurring in} \;\rho\;\text{and}\;\rho\; \text{has length} \;i\}\]
We claim that $Z_0,\ldots Z_{n-1}$ forms an $n$-bisimulation between $(M',s')$ and $(M,s)$. Let $i\leq n-1$ and $(\rho,t)\in Z_i$. The atomic clause is in each case obvious by definition of the truth of atomic propositions in $M'$. For the [forth] clause, just note that if $R^{M'}\rho\rho'$ then by construction the last state of $\rho'$ is a successor of the last state of $\rho$. Lastly, for the [back] clause, if $R^Mtt'$ and $\rho=(s_0,\ldots,s_i)$ then there exists a successor $\rho':=(s_0,\ldots,s_i,s')$ of $\rho$ in $M'$ such that $R^{M'}\rho\rho'$.
\end{proof}
}

\section{Finite Characterizations for Modal Formulas}
\label{sec:FinChars}

In this section, we introduce the question of 
finite characterizability for modal formulas,
and we list the main results. Finite characterizations can be defined relative to any concept class. We therefore first briefly review the definition of a concept class, and we will explain how modal fragments can be viewed as concept classes.

\begin{definition}{(Concept Classes)}
A \emph{concept class} is a triple $\mathbb{C}=(X,C,\lambda)$, where $X$ is a class of \emph{examples}, $C$ is a collection of \revision{\emph{concepts}}\footnote{\revision{Although strictly speaking elements $c\in C$ are syntactic \emph{representations} of subsets $\lambda(c)$ of the example space $X$, we opt to call them concepts instead.}} and $\lambda:C\to 2^X$.
\end{definition}

We interpret $\lambda(c)$ as the collection of positive examples for $c$, and  the complement $X-\lambda(c)$ as the collection of negative examples for $c$. Note that we may have two distinct representations $c,c'\in C$ that represent the same concept, i.e., such that $\lambda(c)=\lambda(c')$. In this case, we call $c$ and $c'$ \emph{equivalent}. We will call a concept class $\mathbb{C}=(X,C,\lambda)$ \emph{essentially finite} if the set $\{\lambda(c)\;|\;c\in C\}$ is finite, or, in other words, 
if there are only finitely many concepts, up to equivalence.
 
\begin{definition}{(Finite characterisations)}
Let $\mathbb{C}=(X,C,\lambda)$ be any concept class.
\begin{itemize}
\item A \emph{collection of labeled examples} is a pair
 $(E^+,E^-)$, where $E^+,E^-\subseteq X$ are finite sets of examples.
 A concept $c\in C$ \textit{fits} $(E^+,E^-)$ if $E^+\subseteq\lambda(c)$ and $E^-\subseteq X\setminus \lambda(c)$, or, in other 
 words, if each $e\in E^+$ is a positive example for $c$ and each $e\in E^-$ is a negative example for $c$.
 \footnote{\revision{We will sometimes be sloppy in our notation and write that a concept $c\in C$ ``fits $E^+$ \rrevision{as positive examples}'' or ``fits $E^-$ \rrevision{as negative examples}'', by which we will mean that $c$ fits $(E^+,\emptyset)$ or $(\emptyset,E^-)$, respectively. }}
 
\item A \textit{finite characterisation} of a concept $c\in C$ w.r.t.~$\mathbb{C}$ is a collection of labeled examples \revision{$(E^+,E^-)$} such that (i) $c$ fits \revision{$(E^+,E^-)$} and (ii) $c$ is the only concept from $\mathbb{C}$, up to equivalence, that fits \revision{$(E^+,E^-)$} (i.e., 
for all $c'$ that fit \revision{$(E^+,E^-)$} we have $\lambda(c')=\lambda(c)$). 
\item We say that the concept class $\mathbb{C}$ \textit{admits finite characterisations} if each concept $c\in C$ has a finite characterisation w.r.t.~$\mathbb{C}$.
\end{itemize}
\end{definition}

Equivalently, a finite characterisation of a concept $c\in C$ is a pair of finite sets of examples \revision{$(E^+,E^-)$} such that $c$ fits $(E^+,E^-)$ and $(E^+,E^-)$ \emph{separates} $c$ from every non-equivalent concept in $C$ in the sense that for every $c'\in C$ with $\lambda(c)\ne\lambda(c')$ there is a \revision{positive example $x\in E^+$ with $x\not\in\lambda(c')$ or a negative example $x\in E^-$ such that $x\in\lambda(c')$.}

As an example, imagine the concept class built of closed intervals on the set integers $\mathbb{Z}$. \revision{That is, let $\mathbb{C}=(\mathbb{Z},\{[n,m]\;|\;n,m\in\mathbb{Z}\}, \lambda)$ where $\lambda([n,m])=\{z\in\mathbb{Z} \mid n\leq z\leq m\}$. Then, for instance, the interval $[4,6]$ has a finite characterisation w.r.t. $\mathbb{C}$ consisting of the positive examples $4$ and $6$ together with the negative examples $3$ and $7$.}


We will look at concepts corresponding to finite model classes $\text{finmod}(\varphi)$ of modal formulas $\varphi$, which defines a concept over the example space of all finite pointed models, possibly restricted to some frame class.

\begin{definition}
\label{def:modclass}
Let $\mathcal{F}$ be a class of frames, 
let $\Prop$ be a finite set of propositional variables, 
and let $\mathcal{L}[\Prop]\subseteq\mathcal{L}_{\text{full}}[\Prop]$ be a modal fragment.
We define the concept class
\[\mathbb{C}(\mathcal{F},\mathcal{L}[\Prop])=(X,C,\lambda)\]
where $X$ is the set of all pointed models over $\Prop$ whose frame belongs to $\mathcal{F}$, $C=\mathcal{L}[\Prop]$, and $\lambda(\varphi)=\{(M,s)\in X\mid M,s\models\varphi\}$.
%
\end{definition}

Let $\mathcal{F}_{\text{all}}$ denote the class of all frames and let $\mathcal{F}_{\text{fin}}$ denote the class of all finite frames. As noted in the introduction, from the point of view of practical applications, it is natural to require finite characterizations to consist of finite examples. This amounts
to restricting attention to finite frames. However, we will be 
chiefly concerned with modal logics that have the finite model 
property, and in this case, it is easy to see that every finite characterization over
$\mathcal{F}_{\text{fin}}$ is also a finite characterization
over $\mathcal{F}_{\text{all}}$. It is convenient for 
expository reasons to not enforce a restriction to finite 
frames in Definition~\ref{def:modclass}.

\subsection{The Full Modal Language}
\label{chap:FullLanguage}

We first show that the modal concept class $\mathbb{C}(\mathcal{F}_{\text{full}},\mathcal{L}_{\text{all}})$, which corresponds to the basic modal logic $\textbf{K}$, does not admit finite characterisations. 
We do this by showing that there are modal formulas that can force characterisations to contain examples of arbitrary height, thereby showing that these characterizations cannot be finite. 

\begin{lemma}\label{lem:height}
Let \revision{$height_n:=\Box^{n+1}\bot\wedge\Diamond^n\top$} (where $n\geq 0$). Then, for all pointed models $(M,s)$, we have that $M,s\models height_n$ iff $height(M,s)=n$.
\end{lemma}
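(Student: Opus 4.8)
The plan is to split $height_n$ into its two conjuncts and show that each one pins down one side of the equality $height(M,s)=n$. Concretely, I would first establish two auxiliary claims, each by induction on the exponent:

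\textbf{(a)} $M,s\models\Diamond^m\top$ iff there is a path $s=t_0 R t_1 R\cdots R t_m$ of $m$ steps, i.e.\ iff $height(M,s)\geq m$; and \textbf{(b)} $M,s\models\Box^m\bot$ iff there is \emph{no} path $s=t_0 R t_1 R\cdots R t_m$, i.e.\ iff $height(M,s)<m$. Granting these, the lemma is immediate: instantiating (a) at $m=n$ and (b) at $m=n+1$ gives $M,s\models height_n$ iff $height(M,s)\geq n$ and $height(M,s)\leq n$, which holds exactly when $height(M,s)=n$.

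For claim (a), the base case $m=0$ is trivial, since $\Diamond^0\top=\top$ holds everywhere and $height(M,s)\geq 0$ always. For the inductive step, $M,s\models\Diamond^{m+1}\top$ iff $s$ has a successor $t$ with $M,t\models\Diamond^m\top$; by the induction hypothesis this means $s$ has a successor $t$ with $height(M,t)\geq m$, which is equivalent to $height(M,s)\geq m+1$ (prepend the edge $s\to t$ to a path from $t$, and conversely pass to the first successor on a long path from $s$). Claim (b) is the dual computation: the base case $m=1$ reads $M,s\models\Box\bot$ iff $s$ has no successors (as $\bot$ is satisfied nowhere), i.e.\ iff $height(M,s)=0<1$; and the inductive step uses that $M,s\models\Box^{m+1}\bot$ iff every successor $t$ satisfies $\Box^m\bot$, which by the induction hypothesis means every successor has $height(M,t)<m$, equivalently $s$ starts no path of $m+1$ steps, i.e.\ $height(M,s)<m+1$.

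The inductive arguments themselves are just unfoldings of the semantic clauses for $\Diamond$ and $\Box$, so the only real care needed is the bookkeeping around the path-length convention of Definition~\ref{def:Paths} (the base cases are where an off-by-one would bite) and the treatment of $height(M,s)=\infty$. I expect this edge-case checking to be the main, though minor, obstacle. In the infinite case there are paths of every finite length, so $\Diamond^n\top$ holds for all $n$ while $\Box^{n+1}\bot$ fails for all $n$; hence $height_n$ is correctly false for every finite $n$, in agreement with the claimed equivalence.
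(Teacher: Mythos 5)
Your proof is correct and takes essentially the same approach as the paper's: both rest on reading the two conjuncts of $height_n$ as $\Diamond^n\top$ expressing $height(M,s)\geq n$ and $\Box^{n+1}\bot$ expressing $height(M,s)<n+1$. The only difference is that the paper asserts these two semantic equivalences directly as obvious, whereas you spell them out by induction on the modal exponent (including the $height(M,s)=\infty$ edge case), which is a more careful rendering of the same argument.
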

\begin{proof}
\revision{Let $(M,s)$ be a pointed model. We have that $height(M,s)=n$ iff there is at least one path of length $n$, and no strictly longer path, emanating from $s$ through $M$. In particular, there cannot be any infinite paths and hence no loops in the model $M$. But clearly this hold iff $M,s\models\Diamond^n\top$ and $M,s\models\Box^{n+1}\bot$.}
\end{proof}


\begin{theorem}\label{thm:nocharswrtK}
Let $\Prop$ be any finite set. 
Then $\mathbb{C}(\mathcal{F}_{\text{fin}},\mathcal{L}_{\text{full}}[\Prop])$ does not admit finite characterizations. In fact, no modal formula has a finite characterisation w.r.t. $\mathbb{C}(\mathcal{F}_{\text{fin}},\mathcal{L}_{\text{full}}[\Prop])$. 
The same holds for $\mathbb{C}(\mathcal{F}_{\text{all}},\mathcal{L}_{\text{full}}[\Prop])$.
\end{theorem}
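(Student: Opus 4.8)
The plan is to show that \emph{every} modal formula $\varphi$ fails to have a finite characterisation by exhibiting, for any candidate finite collection of labeled examples $(E^+,E^-)$ that $\varphi$ fits, a non-equivalent formula that also fits it. The key leverage is Lemma~\ref{lem:height}: the formulas $height_n$ pin down the exact height of a pointed model, and every finite pointed model has \emph{finite} height bounded by its size (Proposition~\ref{prop:heightboundedbysize}). So first I would fix an arbitrary $\varphi$ and an arbitrary finite $(E^+,E^-)$ fitting it, and let $N$ be a natural number strictly larger than the height of every example appearing in $E^+\cup E^-$ (such an $N$ exists because there are finitely many examples, each of finite height).

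Next I would construct a competing formula. The natural choice is to take $\varphi' := \varphi \wedge \Box^{N}\bot$ or, more carefully, a formula that agrees with $\varphi$ on all examples of height $<N$ but differs from $\varphi$ on some model of height $\geq N$. The idea is that since every example in $E^+\cup E^-$ has height $<N$, any formula of the form ``$\varphi$ restricted to height-bounded models'' will fit exactly the same examples, yet will be non-equivalent to $\varphi$ because $\varphi$ and the restriction disagree on a sufficiently tall model. Concretely, I expect to use the $N$-bisimulation preservation result (Proposition~\ref{prop:boundeddepthpreservation}) together with tree-unravelling (Proposition~\ref{prop:treeunravelling}) to control behaviour up to the relevant depth, ensuring the modified formula behaves identically to $\varphi$ on all the low-height examples while being forced to differ on a tall witness.

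The main obstacle, and the step requiring the most care, is guaranteeing that the competing formula is genuinely \emph{non-equivalent} to $\varphi$ rather than accidentally equivalent to it. One must rule out the degenerate case where $\varphi$ is itself already equivalent to a height-bounded formula (e.g.\ $\varphi$ equivalent to $\bot$, or $\varphi$ only satisfiable on models of small height). I would handle this by a case distinction on the satisfiability profile of $\varphi$: if $\varphi$ has models of arbitrarily large height, then $\varphi \wedge height_M$ for large $M$ is satisfiable yet non-equivalent to $\varphi$ and fits the same examples; if instead $\varphi$ is satisfiable only on models of bounded height, one exploits the $\Diamond^n\top$ / $\Box^{n+1}\bot$ machinery to append spurious height without changing satisfaction on the fixed examples. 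A clean uniform argument is to note that modal depth does not bound height, so there always exists a formula $\psi$ with $\lambda(\psi)\neq\lambda(\varphi)$ whose symmetric difference with $\varphi$ lives entirely among models taller than all examples in $E^+\cup E^-$.

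Finally I would conclude that since $(E^+,E^-)$ was an arbitrary finite fitting collection and we always produce a non-equivalent fitting competitor, no finite characterisation of $\varphi$ exists; as $\varphi$ was arbitrary, $\mathbb{C}(\mathcal{F}_{\text{fin}},\mathcal{L}_{\text{full}}[\Prop])$ admits no finite characterisations at all. The extension to $\mathcal{F}_{\text{all}}$ is immediate because the full modal language has the finite model property, so the witnessing competitor formulas remain distinguishable over arbitrary frames by finite models, and the earlier remark that finite characterisations over $\mathcal{F}_{\text{fin}}$ coincide with those over $\mathcal{F}_{\text{all}}$ applies.
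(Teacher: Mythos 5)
Your overall strategy is the same as the paper's (height formulas, a case distinction on whether $\varphi$ bounds the height of its models, and the finite model property for the transfer to $\mathcal{F}_{\text{all}}$), but two concrete steps fail as written. First, your choice of $N$ rests on the claim that every finite pointed model has finite height bounded by its size. That misreads Proposition~\ref{prop:heightboundedbysize}: a finite model with a cycle reachable from the distinguished point has height $\infty$, and the proposition only says the height is $\leq |dom(M)|$ \emph{or} $\infty$. So an ``$N$ strictly larger than the height of every example'' need not exist. This is not cosmetic: it breaks your ``natural choice'' $\varphi' := \varphi \wedge \Box^{N}\bot$, since a positive example containing a reachable cycle satisfies $\varphi$ but falsifies $\Box^{N}\bot$ for every $N$, so $\varphi'$ fails to fit $(E^+,E^-)$. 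The paper avoids this by choosing $n$ larger than the \emph{sizes} of the examples and working with $height_n$, which is false on every example because each example's height is either $<n$ or $\infty$; crucially, its negation $\neg height_n$ is true on models of infinite height.

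Second, in your case distinction the formula for the unbounded case is wrong: by the choice of $M$, the formula $height_M$ is false on \emph{every} example, so $\varphi\wedge height_M$ is false on every positive example and fails to fit whenever $E^+\neq\emptyset$. The formula you need is $\varphi\wedge\neg height_M$, which is true on all of $E^+$ and still entails $\varphi$, hence false on all of $E^-$; non-equivalence is then witnessed by the depth-$M$ tree unravelling (Proposition~\ref{prop:treeunravelling}) of a model of $\varphi\wedge\Diamond^{M}\top$, which still satisfies $\varphi$ by Proposition~\ref{prop:boundeddepthpreservation} when $M>d(\varphi)$ but has height exactly $M$ and hence falsifies $\neg height_M$. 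Relatedly, your closing ``clean uniform argument'' --- that there always exists a non-equivalent $\psi$ whose symmetric difference with $\varphi$ lives entirely above the examples --- simply asserts the conclusion to be proved; it is exactly the case distinction, with the corrected formulas $\varphi\vee height_n$ (when $\varphi\models\Box^{k}\bot$ for some $k$) and $\varphi\wedge\neg height_n$ (otherwise), that establishes it.
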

\begin{proof}
Let $\varphi\in\mathcal{L}_{\text{full}}[\Prop]$ and suppose that $\varphi$ has a finite characterisation $(E^+,E^-)$ w.r.t. $\mathbb{C}(\mathcal{F}_{\text{\text{fin}}},\mathcal{L}_{\text{full}}[\Prop])$. Observe that either (i) $\varphi\models\Box^k\bot$ for some $k$ or (ii) $\varphi\wedge\Diamond^k\top$ is satisfiable for each $k$. 

In case (i), by contraposition we get that $\Diamond^k\top\models\neg\varphi$. Let $n>\mathrm{max}(\{|dom(E)|\mid (E,e)\in E^-\}\cup\{k\})$ and \revision{define the formula $\varphi':=\varphi\vee height_n$. We claim that $\varphi'$ fits $(E^+,E^-)$ while $\varphi\not\equiv\varphi'$. It is obvious that $\varphi'$ fits $E^+$ \rrevision{as positive examples} because $\varphi\models\varphi'$. By definition, $\varphi$ is false on every example in $E^-$. Moreover, by choice of $n$ and Lemma \ref{lem:height}, $height_n$ is false on every example in $E^-$. It follows that $\varphi'$ fits $(E^+,E^-)$.} 
Moreover, \rrevision{let $(P_n,s)$ be the pointed model consisting of
a directed path of length $n$, where $s$ is the 
initial state. Note that $(P_n,s)$ has height $n$ and therefore satisfies $height_n$ and hence $\varphi'$. But $k\leq n$ which means that $height_n\models\Diamond^k\top$. Using our assumption that (i) $\varphi\models\Box^k\bot$ and contraposition we get that $P_n,s\models\neg\varphi$. Thus $P_n,s\models\varphi'\wedge\neg\varphi$ witnessing that $\varphi\not\equiv\varphi'$.}

In case (ii), let $n>\mathrm{max}(\{|dom(E)|\mid (E,e)\in E^+\}\cup\{d(\varphi)\})$, and consider the formula \revision{$\varphi':=\varphi\wedge\neg height_{n}$. We claim that $\varphi'$ fits $(E^+,E^-)$ but is non-equivalent to $\varphi$.} Clearly $\varphi'$ fits $E^-$ \rrevision{as negative examples}, \revision{since $\varphi'\models\varphi$. Further, by choice of $n$ and Lemma \ref{lem:height}, all positive examples in $E^+$ satisfy $\neg height_{n}$ (otherwise one of them would be of height $n$). Hence it follows that $\varphi'$ fits $(E^+,E^-)$.} But by hypothesis there is some pointed model $(M,s)$ satisfying $\varphi\wedge\Diamond^n\top$. \revision{By Proposition \ref{prop:treeunravelling}, there is a pointed model $(M',s')$ such that $M',s'\bisim_n\;M,s$ and $height(M',s')=n$. It follows from Proposition \ref{prop:boundeddepthpreservation} that $M',s'\models\varphi$. Moreover, by Lemma \ref{lem:height} we have $M',s'\models height_n$ and as $\varphi'\models\neg height_n$ we get that $M',s'\models\neg\varphi'$. Hence $M',s'\models\varphi\wedge\neg\varphi'$ witnessing that $\varphi\not\equiv\varphi'$.}

The corresponding result for the concept class $\mathbb{C}(\mathcal{F}_{\text{\text{all}}},\mathcal{L}_{\text{full}}[\Prop])$ follows by the finite model property of the full modal logic over the class of all frames, because this implies that any two modal formulas that can be separated on an infinite model can already be separated on a finite model.
\end{proof}


We go on to investigate which concept classes of the form $\mathbb{C}(\mathcal{F})$ admit finite characterisations.

\begin{remark}
The proof of Theorem~\ref{thm:nocharswrtK} also works, with minor changes, when restricting to transitive frames (corresponding to the modal logic $\textbf{K4}$). \revision{It suffices to replace $P_n$ by its transitive closure and observe that the height formulas preserve all their relevant properties over this frame class.} It follows that no modal formula has a finite characterisation w.r.t. $\mathbb{C}(\mathcal{F}_{\text{trans}})$ where $\mathcal{F}_{\text{trans}}$ is the class of all transitive frames. However, the proof does not extend to the class of reflexive and transitive frames (corresponding to the modal logic $\textbf{S4}$), or any other class of reflexive frames. \revision{This is because the formula $height_n$ is not satisfiable on reflexive frames.}
\end{remark}

Note that Theorem~\ref{thm:nocharswrtK} establishes a strong failure
of unique characterizability: not only does
$\mathbb{C}(\mathcal{F}_{\text{fin}},\mathcal{L}_{\text{full}}[\Prop])$ not admit unique characterization, in fact, \emph{no} formula is uniquely characterizable in this concept class. 

\begin{theorem}\label{thm:loctab_iff_finchar}
Let $\mathcal{F}$ be a class of frames and let $\Prop$ be a finite set.
Then the concept class $\mathbb{C}(\mathcal{F},\mathcal{L}_{\text{full}}[\Prop])$ admits finite characterisations iff it is essentially finite.
\end{theorem}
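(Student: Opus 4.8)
The plan is to prove the two implications separately, noting at the outset that one direction is a soft argument valid for \emph{arbitrary} concept classes, while the reverse direction is where the modal (Boolean) structure of the full language does the real work. The main content, and the step I expect to be the crux, is the reduction of the reverse direction to a purely Boolean-algebraic fact.

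For the direction ($\Leftarrow$), suppose $\mathbb{C}(\mathcal{F},\mathcal{L}_{\text{full}}[\Prop])$ is essentially finite, and let $P_1,\dots,P_k$ enumerate the finitely many distinct sets of the form $\lambda(\psi)$. Given a target formula $\varphi$ with $\lambda(\varphi)=P_j$, I would, for each $i\neq j$, pick a single example $x_i$ from the symmetric difference $P_i\mathbin{\triangle} P_j$, placing it in $E^+$ if $x_i\in P_j\setminus P_i$ and in $E^-$ if $x_i\in P_i\setminus P_j$. Then $\varphi$ fits $(E^+,E^-)$ by construction, and any $\psi$ fitting $(E^+,E^-)$ cannot have $\lambda(\psi)=P_i$ for any $i\neq j$, so $\lambda(\psi)=P_j$ and $\psi\equiv\varphi$. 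This step is routine and uses only that there are finitely many concepts up to equivalence.

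The interesting direction is ($\Rightarrow$), which I would prove by contraposition: assuming $\mathbb{C}$ is \emph{not} essentially finite, I exhibit a single formula with no finite characterisation. The key observation is that $\mathcal{L}_{\text{full}}[\Prop]$ is closed under negation up to equivalence, since negation normal form is obtained via De Morgan for $\wedge,\vee$, modal duality for $\Box,\Diamond$, the clauses $\neg\top\equiv\bot$ and $\neg\bot\equiv\top$, and $\neg_{at}$ on variables. Consequently the family of concepts $\{\lambda(\psi)\}$, equipped with the set operations, is a Boolean algebra $B$ (the Lindenbaum--Tarski algebra of the logic of $\mathcal{F}$ over $\Prop$), and ``not essentially finite'' means precisely that $B$ is infinite. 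I would then invoke the standard fact that every infinite Boolean algebra contains an infinite family of pairwise disjoint nonzero elements (split off an infinite atomless part, or use infinitely many atoms). Translated back, this yields formulas $\delta_0,\delta_1,\dots$ that are pairwise contradictory over $\mathcal{F}$, i.e. $\delta_i\wedge\delta_j\equiv\bot$ for $i\neq j$, and each satisfiable over $\mathcal{F}$.

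Finally I would use this family to show that $\bot$ itself has no finite characterisation. Since $\lambda(\bot)=\emptyset$, any characterisation of $\bot$ must have the form $(\emptyset,E^-)$ with $E^-$ finite. Because the $\delta_i$ are pairwise contradictory, each example in $E^-$ satisfies at most one $\delta_i$, so some $\delta_{i_0}$ is false on every example in $E^-$; then $\delta_{i_0}$ fits $(\emptyset,E^-)$, yet $\delta_{i_0}$ is satisfiable and hence not equivalent to $\bot$. Thus $(\emptyset,E^-)$ fails to characterise $\bot$, and $\mathbb{C}$ does not admit finite characterisations. The conceptual obstacle here is recognising the right abstraction: the full language's closure under negation makes the concept lattice Boolean, so that failure of local tabularity automatically supplies an infinite antichain of mutually exclusive satisfiable formulas. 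This is exactly the abstract content underlying the concrete family $height_n$ used in the proof of Theorem~\ref{thm:nocharswrtK}, now obtained for an arbitrary non-locally-tabular frame class without needing an explicit depth/height witness.
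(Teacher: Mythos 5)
Your proof is correct, and the hard direction takes a genuinely different route from the paper's. Both arguments pivot on the same observation -- a finite characterisation of $\bot$ must be of the form $(\emptyset,E^-)$, so every $\mathcal{F}$-satisfiable formula must be true on some example in $E^-$ -- but from there the papers diverge. The paper argues directly: the trace map $\varphi\mapsto E^-\cap\mathrm{mod}_{\mathcal{F}}(\varphi)$ can take at most $2^{|E^-|}$ values, and if two non-equivalent formulas $\varphi,\psi$ had the same trace, the ``XOR'' formula $(\neg\varphi\wedge\psi)\vee(\varphi\wedge\neg\psi)$ would be satisfiable yet false on all of $E^-$; this yields an explicit quantitative bound of $2^{|E^-|}$ concepts up to equivalence. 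You instead argue by contraposition through the Lindenbaum--Tarski algebra: an infinite Boolean algebra contains an infinite pairwise-disjoint family of nonzero elements, giving infinitely many pairwise contradictory satisfiable formulas $\delta_i$, of which a finite $E^-$ can ``hit'' at most $|E^-|$, so some $\delta_{i_0}$ fits $(\emptyset,E^-)$ and is not equivalent to $\bot$. Both uses of Boolean closure are legitimate; the paper's pigeonhole-plus-XOR argument is more elementary and self-contained (no appeal to the structure theory of infinite Boolean algebras), while yours is more structural, makes explicit that the antichain of the $height_n$ formulas in Theorem~\ref{thm:nocharswrtK} is an instance of a completely general phenomenon, and isolates exactly what fails in the non-locally-tabular case. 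One small point worth tightening if you write this up: the parenthetical justification of the antichain fact (``split off an infinite atomless part, or use infinitely many atoms'') deserves a sentence of proof -- e.g., if there are finitely many atoms, the complement of their join bounds a nontrivial atomless (hence infinite) subalgebra, in which disjoint elements can be split off recursively.
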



\begin{proof}
The right to left is easy: \revision{if $\{\text{mod}_{\mathcal{F}}(\varphi)\mid \varphi\in\mathcal{L}_{\text{full}}\}=\{c_1,\ldots,c_n\}$ is finite then for every $1\leq i,j\leq n$ with $i\ne j$ the symmetric difference $c_i\oplus c_j := (c_i-c_j)\cup(c_j-c_i)$ is non-empty.} For each such pair $(i,j)$, pick some example $e_{i,j}\in c_i\oplus c_j$. Then for each $c_i$, the set of examples $\{e_{i,j}\mid j\ne i\}$ uniquely characterises $c_i$ w.r.t. $\mathbb{C}(\mathcal{F})$.

For the left to right direction, suppose that $\mathbb{C}(\mathcal{F},\mathcal{L}_{\text{full}}[\Prop])$ admits finite characterisations. Then, in particular, the formula $\bot$ has a finite characterisation $(E^+,E^-)$, based on $\mathcal{F}$-frames.
Necessarily, $E^+=\emptyset$. It follows, by definition of finite characterisations, that, for every modal formula $\varphi$ that is satisfiable w.r.t.~$\mathcal{F}$, there must be some example $(E,e)\in E^-$ such that $E,e\models\varphi$. We claim that there can only be $2^{|E^-|}$ formulas up to logical equivalence over $\mathcal{F}$.
To see this, note that if there were more, then there would be two modal formulas  $\varphi,\psi$ with $\text{mod}_{\mathcal{F}}(\varphi)\ne \text{mod}_{\mathcal{F}}(\psi)$ such that $E^-\cap \text{finmod}_{\mathcal{F}}(\varphi)=E^-\cap \text{finmod}_{\mathcal{F}}(\psi)$. But then the formula $(\neg\varphi\wedge\psi)\vee(\varphi\wedge\neg\psi)$  is satisfiable, but not on any pointed model in $E^-$, a contradiction.
\end{proof}

This result can be equivalently stated in terms of 
\emph{local tabularity}. 
For a class of frames $\mathcal{F}$, we say that
``the modal logic of $\mathcal{F}$ is locally tabular'' if, for every finite $\Prop$, there are only finitely many modal formulas in $\Prop$ up to equivalence over $\mathcal{F}$~\cite{bluebook}.
Then we have:

\begin{theorem}
Let $\mathcal{F}$ be any class of frames.
Then the following are equivalent:
\begin{enumerate}
    \item For every finite set $\Prop$, $\mathbb{C}(\mathcal{F},\mathcal{L}_{\text{full}}[\Prop])$ admits finite characterisations,
    \item The modal logic of $\mathcal{F}$ is locally tabular.
\end{enumerate}
\end{theorem}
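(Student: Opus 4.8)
The plan is to derive this statement immediately from Theorem~\ref{thm:loctab_iff_finchar} by unfolding the two relevant definitions, so the work is essentially bookkeeping. First I would fix a finite set $\Prop$ and observe that, in the concept class $\mathbb{C}(\mathcal{F},\mathcal{L}_{\text{full}}[\Prop])$, the labelling map sends a formula $\varphi$ to $\lambda(\varphi)=\text{mod}_{\mathcal{F}}(\varphi)$, since by Definition~\ref{def:modclass} the example space $X$ consists of exactly the pointed models over $\Prop$ whose frame belongs to $\mathcal{F}$. Hence two formulas are equivalent in the sense of the concept class (that is, $\lambda(\varphi)=\lambda(\psi)$) precisely when $\text{mod}_{\mathcal{F}}(\varphi)=\text{mod}_{\mathcal{F}}(\psi)$, i.e.~precisely when they are equivalent over $\mathcal{F}$ in the usual sense.

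With this identification in hand, I would note that the property of $\mathbb{C}(\mathcal{F},\mathcal{L}_{\text{full}}[\Prop])$ being \emph{essentially finite}, namely that $\{\lambda(\varphi)\mid\varphi\in\mathcal{L}_{\text{full}}[\Prop]\}$ is finite, says verbatim that there are only finitely many modal formulas over $\Prop$ up to equivalence over $\mathcal{F}$. This is exactly the defining condition of local tabularity, read off for the single parameter $\Prop$. Applying Theorem~\ref{thm:loctab_iff_finchar}, which states that for each fixed finite $\Prop$ the class $\mathbb{C}(\mathcal{F},\mathcal{L}_{\text{full}}[\Prop])$ admits finite characterisations iff it is essentially finite, I then obtain, for each fixed $\Prop$, that admitting finite characterisations is equivalent to finiteness-up-to-equivalence for that $\Prop$. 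Universally quantifying over all finite $\Prop$ yields that condition (1) holds iff for every finite $\Prop$ there are only finitely many formulas up to equivalence over $\mathcal{F}$, which is precisely condition (2).

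I expect no genuine obstacle here beyond this definitional matching: the one point requiring care is verifying that the equivalence relation implicit in \emph{essentially finite} (equality of $\lambda$-values) coincides with the relation (equivalence over $\mathcal{F}$) used in the definition of local tabularity, and in particular that both are taken over the \emph{full} class of $\mathcal{F}$-models rather than, say, only the finite ones. This holds because the example space $X$ in Definition~\ref{def:modclass} consists of all pointed models based on $\mathcal{F}$-frames, so the two equivalence relations are literally the same and the two theorems align without any further argument.
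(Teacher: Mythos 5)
Your proposal is correct and matches the paper's approach exactly: the paper presents this theorem as an immediate restatement of Theorem~\ref{thm:loctab_iff_finchar}, obtained by unfolding the definition of local tabularity (finitely many formulas up to equivalence over $\mathcal{F}$, for each finite $\Prop$) into the notion of essential finiteness of $\mathbb{C}(\mathcal{F},\mathcal{L}_{\text{full}}[\Prop])$, and then quantifying over all finite $\Prop$. Your care about the two equivalence relations coinciding (equality of $\lambda$-values versus equivalence over all $\mathcal{F}$-based models) is exactly the right definitional check, and it goes through as you say.
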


In summary, no restriction to a smaller class of frames can make the full modal language admit finite characterisations, except by trivializing the problem.

\subsection{Fragments of Modal Logic}\label{chap:fragments}

In this section, we investigate which modal concept classes of the form $\mathbb{C}(\mathcal{F},\mathcal{L}_{S}[\Prop])$ admit finite characterisations, where $S\subseteq\{\Box,\Diamond,\wedge,\vee,\top,\bot,\neg_{at}\}$ and where
$\mathcal{F}$ is either $\mathcal{F}_{\text{fin}}$ or $\mathcal{F}_{\text{all}}$. In fact, 
since the modal logic of $\mathcal{F}_{\text{fin}}$ coincides
with the modal logic of $\mathcal{F}_{\text{all}}$~(cf.~\cite{bluebook}), we have that, whenever a set of positive and negative examples 
uniquely characterizes a modal formula w.r.t.~$\mathbb{C}(\mathcal{F}_{\text{fin}},\mathcal{L}_{S}[\Prop])$, it
also uniquely characterizes the same formula w.r.t.~$\mathbb{C}(\mathcal{F}_{\text{all}},\mathcal{L}_{S}[\Prop])$. 
In particular, this means that,
whenever $\mathbb{C}(\mathcal{F}_{\text{fin}},\mathcal{L}_{S}[\Prop])$ admits finite characterisations, so does
 $\mathbb{C}(\mathcal{F}_{\text{all}},\mathcal{L}_{S}[\Prop])$. We will simply write
$\mathbb{C}(\mathcal{L}_{S}[\Prop])$ in the remainder of this 
section, with the understanding that all our positive results will
be proved for $\mathcal{F}_{\text{fin}}$ and all our negative 
results will be proved for $\mathcal{F}_{\text{all}}$.

For our first positive result, we make use of known results regarding
finite characterizations for positive-existential first-order
formulas. Specifically, 
it was shown in~\cite{AlexeCKT2011,tencatedalmauCQ} that 
the concept class of \emph{c-acyclic unions of conjunctive queries}, which can be seen as a restricted class of positive-existential first-order formulas, admits finite characterisations, and that finite
characterizations for such formulas can be constructed effectively.
As it turns out, the standard translation of every 
$\posex$-formula can be equivalently written as a c-acyclic union of conjunctive queries (cf.~\cite{MyThesis} for full details). From this, we obtain:


\begin{theorem}\label{thm:Balder}
For every finite set $\Prop$, $\mathbb{C}(\mathcal{L}_{\Diamond,\wedge,\vee,\top,\bot}[\Prop])$ admits finite characterisations. Moreover, these characterisations are
effectively computable. 
\end{theorem}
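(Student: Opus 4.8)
The plan is to reduce the statement to the known finite characterizability of c-acyclic unions of conjunctive queries (UCQs) via the standard translation. Recall that $ST_x$ sends a modal formula over $\Prop$ to a first-order formula with a single free variable $x$ over the relational signature consisting of one binary relation $R$ together with a unary relation $P_p$ for each $p\in\Prop$. Over this signature, a finite relational structure $A$ with a distinguished element $a$ is literally the same object as a finite pointed Kripke model $(M_A,a)$ over $\Prop$, and $A,a\models ST_x(\varphi)$ iff $M_A,a\models\varphi$. Consequently fitting a labelled set of examples means the same thing on both sides, and it suffices to produce a finite characterisation of $ST_x(\varphi)$ within the concept class of c-acyclic UCQs.

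First I would show that for every $\varphi\in\mathcal{L}_{\Diamond,\wedge,\vee,\top,\bot}[\Prop]$ the formula $ST_x(\varphi)$ is logically equivalent to a c-acyclic UCQ. Since the fragment uses only $\Diamond,\wedge,\vee,\top,\bot$, its standard translation is an existential positive formula; pulling existential quantifiers outward and distributing $\wedge$ over $\vee$ rewrites it as a finite disjunction of conjunctive queries, with $\bot$ contributing the empty disjunction and $\top$ a query with empty body. The key structural point is that each resulting disjunct is tree-shaped: every $\Diamond$ introduces exactly one fresh existentially quantified variable joined by an $R$-atom to the variable of its immediate context, so the variables of each disjunct form a tree rooted at $x$ whose atoms are precisely the parent–child $R$-edges together with unary $P_p$-atoms. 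A tree-shaped conjunctive query has an acyclic incidence graph and is in particular c-acyclic, so every disjunct is c-acyclic. The routine verification of this normal form, including the degenerate roles of $\top$ and $\bot$, is carried out in detail in \cite{MyThesis}.

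Having written $ST_x(\varphi)$ as a c-acyclic UCQ, I would invoke the results of \cite{AlexeCKT2011,tencatedalmauCQ}, which state that the concept class of c-acyclic UCQs admits finite characterisations and that such characterisations are effectively computable. This yields a finite, effectively computable pair $(E^+,E^-)$ of finite relational structures with distinguished elements that uniquely characterises $ST_x(\varphi)$ among all c-acyclic UCQs. Reading the structures in $E^+$ and $E^-$ as finite pointed Kripke models, I would then transfer this back: $\varphi$ fits $(E^+,E^-)$, and if some $\psi\in\mathcal{L}_{\Diamond,\wedge,\vee,\top,\bot}[\Prop]$ also fits it, then by the first step $ST_x(\psi)$ is a c-acyclic UCQ fitting $(E^+,E^-)$ on the first-order side, so the uniqueness guarantee forces $ST_x(\psi)\equiv ST_x(\varphi)$ and hence $\psi\equiv\varphi$. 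Thus $(E^+,E^-)$ is an effectively computable finite characterisation of $\varphi$ w.r.t.\ $\mathbb{C}(\mathcal{L}_{\Diamond,\wedge,\vee,\top,\bot}[\Prop])$.

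The main obstacle is the first step: verifying that the normal form of $ST_x(\varphi)$ really is a union of tree-shaped (hence c-acyclic) conjunctive queries and correctly accounting for the truth constants under the positive-existential simplifications. Note in particular that we only need \emph{one} direction of the correspondence, namely that every formula of the fragment translates into a c-acyclic UCQ; we do not need surjectivity, since separating $\varphi$ from every competing \emph{modal} formula follows from separating $ST_x(\varphi)$ from every competing c-acyclic UCQ. Everything after the first step is bookkeeping through the semantic equivalence between the modal and first-order readings of a pointed structure.
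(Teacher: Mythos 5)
Your overall route---translate the fragment into c-acyclic unions of conjunctive queries via the standard translation, invoke the results of \cite{AlexeCKT2011,tencatedalmauCQ}, and transfer the characterization back, noting that only one direction of the correspondence is needed---is exactly the paper's approach. The gap is in your treatment of $\bot$. You dispose of it by declaring that ``$\bot$ contributes the empty disjunction,'' but the concept class for which \cite{AlexeCKT2011,tencatedalmauCQ} prove finite characterizability consists of nonempty unions of conjunctive queries; every such query is satisfiable, so $ST_x(\bot)$ is not equivalent to any c-acyclic UCQ and the cited results simply do not apply to it. This breaks your argument in two places. First, when $\varphi\equiv\bot$, your first step produces nothing, so you have no characterization of $\bot$ at all. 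Second, even when $\varphi$ is satisfiable, your uniqueness transfer quantifies over all competing modal formulas $\psi$ in the fragment and claims ``by the first step $ST_x(\psi)$ is a c-acyclic UCQ''; this fails for $\psi\equiv\bot$, which does belong to $\mathcal{L}_{\Diamond,\wedge,\vee,\top,\bot}[\Prop]$. Concretely, $\bot$ fits any labelled set of examples whose positive part $E^+$ is empty, and nothing in the black-box statement of the cited results guarantees that the characterization they output has a nonempty positive part; since $ST_x(\bot)$ lies outside the class of c-acyclic UCQs, their uniqueness guarantee never rules it out.

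Both problems are fixable, and the paper fixes them with one extra gadget: the full loopstate $\fullloop$, i.e.\ a single reflexive point at which all variables of $\Prop$ are true, which satisfies every $\bot$-free formula of the fragment. For satisfiable $\varphi$, add $\fullloop$ to $E^+$: this keeps $\varphi$ fitting, preserves the separation from all non-equivalent $\bot$-free formulas, and now eliminates $\bot$ because $E^+$ is nonempty. For $\varphi\equiv\bot$, the pair $(\emptyset,\{\fullloop\})$ is itself a finite characterization: any fitting $\psi$ is false at $\fullloop$, hence cannot be equivalent to any $\bot$-free formula of the fragment, hence $\psi\equiv\bot$. With these two additions (together with the observation that every formula of $\mathcal{L}_{\Diamond,\wedge,\vee,\top,\bot}[\Prop]$ is equivalent either to $\bot$ or to a $\bot$-free formula, since $\bot$ absorbs through $\wedge$, $\vee$ and $\Diamond$), your proof goes through and coincides with the paper's.
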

\begin{proof}
It follows immediately from the aforementioned results in~\cite{AlexeCKT2011,tencatedalmauCQ} that 
$\mathbb{C}(\posex[\Prop])$ admits unique characterizations. 
It is easy to see that every $\varphi\in\mathcal{L}_{\Diamond,\wedge,\vee,\top,\bot}[\Prop]$ is equivalent to a formula in $\posex[\Prop]\cup\{\bot\}$. 
Therefore, 
to extend this
result to $\mathcal{L}_{\Diamond,\wedge,\vee,\top,\bot}$,
it suffices to
\begin{enumerate}
\item[(i)] show that every
finite characterization $(E^+,E^-)$ for a formula 
$\varphi\in \posex[\Prop]$ w.r.t. the concept class $\mathbb{C}(\posex[\Prop])$ can be effectively extended to finite characterisation of $\varphi$ w.r.t.~$\posex[\Prop]\cup\{\bot\}$, and
\item[(ii)] to provide a finite characterisation of $\bot$ w.r.t.~$\mathbb{C}(\mathcal{L}_{\Diamond,\wedge,\vee,\top,\bot}[\Prop])$.
\end{enumerate}
\revision{For the first item, it is enough to ensure that $E^+$ is non-empty.
This can be done, for instance,}
by adding to $E^+$ the pointed model
$\fullloop$, i.e.,  single-element reflexive pointed model
satisfying all propositional variables. Note that this pointed
model satisfies all
$\posex$-formulas.
For the second, it suffices to use $\fullloop$ as a negative
example.
\end{proof}

Next, we show that the dual fragment $\mathcal{L}_{\Box,\wedge,\vee,\top,\bot}$ also admits finite characterisations. 

\begin{theorem}\label{thm:dualposex}
For every finite set $\Prop$, $\mathbb{C}(\mathcal{L}_{\Box,\wedge,\vee,\top,\bot}[\Prop])$ admits finite characterisations. Moreover, these characterisations are effectively computable.
\end{theorem}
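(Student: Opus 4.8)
The plan is to derive this from Theorem~\ref{thm:Balder} by a De~Morgan duality that simultaneously dualises formulas and complements the colourings of models. Fix a disjoint copy $\overline{\Prop}=\{\overline{p}\mid p\in\Prop\}$ of the variables, and define a translation $\Phi:\mathcal{L}_{\Box,\wedge,\vee,\top,\bot}[\Prop]\to\posex[\overline{\Prop}]$ recursively by $\Phi(p)=\overline{p}$, $\Phi(\top)=\bot$, $\Phi(\bot)=\top$, $\Phi(\varphi\wedge\psi)=\Phi(\varphi)\vee\Phi(\psi)$, $\Phi(\varphi\vee\psi)=\Phi(\varphi)\wedge\Phi(\psi)$, and $\Phi(\Box\varphi)=\Diamond\Phi(\varphi)$. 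This $\Phi$ is a syntactic bijection between the two fragments, since it has an evident inverse that dualises the connectives back and sends $\overline p\mapsto p$. On the model side, for a pointed model $(M,s)$ over $\Prop$ let $(M^c,s)$ be the pointed model over $\overline{\Prop}$ with the same underlying frame and colouring $v^{M^c}(w)=\{\overline p\mid p\notin v^M(w)\}$; the map $c\colon(M,s)\mapsto(M^c,s)$ is a frame-preserving bijection from finite pointed models over $\Prop$ to finite pointed models over $\overline\Prop$, with inverse given by the analogous complementation in the reverse direction.

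The crucial lemma, proved by a routine induction on $\varphi$, is the semantic duality
\[ M,s\models\varphi \quad\Longleftrightarrow\quad M^c,s\not\models\Phi(\varphi), \]
for every $\varphi\in\mathcal{L}_{\Box,\wedge,\vee,\top,\bot}[\Prop]$ and every pointed model $(M,s)$. (This is just the familiar fact that $\Phi(\varphi)$ computes $\neg\varphi$ once the variables have been renamed by complementation.) Two immediate consequences are that $\Phi$ preserves and reflects logical equivalence, and that $c$ carries the positive examples of $\varphi$ exactly onto the negative examples of $\Phi(\varphi)$, and vice versa.

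Given these ingredients, the construction runs as follows. For $\varphi\in\mathcal{L}_{\Box,\wedge,\vee,\top,\bot}[\Prop]$, use Theorem~\ref{thm:Balder} to compute a finite characterisation $(F^+,F^-)$ of $\Phi(\varphi)$ w.r.t.\ $\mathbb{C}(\posex[\overline\Prop])$, and then set
\[ E^+ := \{(M,s)\mid (M^c,s)\in F^-\},\qquad E^- := \{(M,s)\mid (M^c,s)\in F^+\}. \]
That $\varphi$ fits $(E^+,E^-)$ is immediate from the duality lemma. For uniqueness, suppose $\varphi'\in\mathcal{L}_{\Box,\wedge,\vee,\top,\bot}[\Prop]$ also fits $(E^+,E^-)$. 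Pulling each example of $(F^+,F^-)$ back along $c^{-1}$ lands it (with flipped label) inside $(E^+,E^-)$, so the duality lemma shows that $\Phi(\varphi')$ fits $(F^+,F^-)$; since $(F^+,F^-)$ characterises $\Phi(\varphi)$ and $\Phi(\varphi')$ lies in the target fragment, $\Phi(\varphi')\equiv\Phi(\varphi)$, and because $\Phi$ reflects equivalence, $\varphi'\equiv\varphi$. Effectiveness is clear, as $\Phi$, the complementation of finite models, and the characterisation supplied by Theorem~\ref{thm:Balder} are all computable.

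The main thing to get right is the bookkeeping across the two bijections: one must check that $c$ is a genuine bijection (so that translating examples back is sound in both directions) and, crucially, that $\Phi$ maps \emph{onto} all of $\posex[\overline\Prop]$ rather than merely into it, since it is this surjectivity that lets uniqueness for $\Phi(\varphi)$ transfer to uniqueness for $\varphi$. There is no real obstacle beyond carefully matching the positive/negative labels through the duality; note also that complementation leaves the underlying frame untouched, so the whole argument stays within $\mathcal{F}_{\text{fin}}$ as required.
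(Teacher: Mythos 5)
Your proposal is correct and takes essentially the same route as the paper's own proof: dualise the connectives, complement the colourings, swap positive and negative labels, and invoke Theorem~\ref{thm:Balder} — the paper merely factors your translation $\Phi$ into the connective-dualisation $\triangleleft$ and the substitution $(\cdot)^{\neg}$ over the \emph{same} variable set (rather than a fresh copy $\overline{\Prop}$), using the identity $\neg\psi^{\neg}\equiv\triangleleft(\psi)$ where you use your duality lemma. One notational caveat: the image of $\Phi$ is $\mathcal{L}_{\Diamond,\wedge,\vee,\top,\bot}[\overline{\Prop}]$ rather than $\posex[\overline{\Prop}]$ (which in this paper's notation excludes $\bot$), but that is exactly the fragment Theorem~\ref{thm:Balder} covers, so your uniqueness transfer goes through unchanged.
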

\begin{proof}
Consider any formula $\varphi\in\mathcal{L}_{\Box,\wedge,\vee,\top,\bot}[\Prop]$. Define the formula $\triangleleft(\varphi)$ to be the formula obtained from $\varphi$ by replacing each connective by its dual \rrevision{(cf.~also page~\pageref{dualoperation})}. It follows that $\triangleleft(\varphi)\in\mathcal{L}_{\Diamond,\wedge,\vee,\top,\bot}[\Prop]$, so by theorem \ref{thm:Balder} $\triangleleft(\varphi)$ has a finite characterisation $(E^+_{\triangleleft(\varphi)}, E^-_{\triangleleft(\varphi)})$ w.r.t. $\mathcal{L}_{\Diamond,\wedge,\vee,\top,\bot}[\Prop]$. Now define
\[E^+_{\varphi}:=\{(M^{\neg},s)\mid (M,s)\in E^-_{\triangleleft(\varphi)}\}\qquad E^-_{\varphi}:=\{(M^{\neg},s)\mid (M,s)\in E^+_{\triangleleft(\varphi)}\}\]
where $M^{\neg}=(W,R,v')$ is obtained from $M=(W,R,v)$ by flipping the colouring of each state $t$ in $M$, i.e. setting $v'(t)=\Prop-v(t)$. Furthermore, for every modal formula $\varphi$, if we let $\varphi^{\neg}$ be the formula obtained from $\varphi$ by substituting every positive occurrence of a propositional variable with its negation then we have that for every model $(M,s)$;
\[M,s\models\varphi\qquad\text{iff}\qquad M^\neg,s\models\varphi^{\neg}\]

We claim that $(E^+_\varphi, E^-_\varphi)$ is a finite
characterization for $\varphi$.
Suppose that some $\psi\in\mathcal{L}_{\Box,\wedge,\vee,\top,\bot}[\Prop]$ fits $(E^+_{\varphi},E^-_{\varphi})$. This means that $\psi$ is true on all $(M^\neg,s)$ for $(M,s)\in E^-_{\triangleleft(\varphi)}$ and $\psi$ is false on all $(M^\neg,s)$ for $(M,s)\in E^+_{\triangleleft(\varphi)}$. It follows that $\neg\psi$ is false on all $(M^\neg,s)$ for $(M,s)\in E^-_{\triangleleft(\varphi)}$ and $\neg\psi$ is true on all $(M^\neg,s)$ for $(M,s)\in E^+_{\triangleleft(\varphi)}$.
Thus, by the above equivalence we get that \rrevision{$\neg\psi^{\neg}\in\posex[\Prop]$} fits $(E^+_{\triangleleft(\varphi)}, E^-_{\triangleleft(\varphi)})$. But observe that $\neg\psi^{\neg}\equiv\triangleleft(\psi)$, so in fact $\triangleleft(\psi)$ fits $(E^+_{\triangleleft(\varphi)},E^-_{\triangleleft(\varphi)})$ and thus $\triangleleft(\varphi)\equiv\triangleleft(\psi)$ because $(E^+_{\triangleleft(\varphi)},E^-_{\triangleleft(\varphi)})$ is a finite characterisation. Finally, it rests to observe that $\chi\equiv\xi$ iff $\triangleleft(\chi)\equiv\triangleleft(\xi)$ holds for all modal formulas $\chi,\xi$, and we conclude that $\varphi\equiv\psi$.

We will make use of the above dualization construction  again later in Section 
\ref{sec:mono}, where we discuss it in detail (cf.~Definitions \ref{def:negonformulas} and \ref{def:negonmodels} and Proposition \ref{prop:coherencenegoperations}).
\end{proof}

This is one example of a general phenomenon, for every modal fragment of the form $\mathcal{L}_S$, $\mathbb{C}(\mathcal{L}_S)$ admits finite characterisations iff $\mathbb{C}(\mathcal{L}_{S^{\text{dual}}})$ admits finite characterisations, where $S^{\text{dual}}$ is obtained from $S$ by replacing connectives by their dual.


Next, we consider what happens for fragments containing both $\Diamond$ and $\Box$. 


\begin{restatable}{theorem}{thmmainmono}
\label{thm:main-mono}
For every finite set $\Prop$,
$\mathbb{C}(\mono[\Prop])$ admits finite characterisations. Moreover, these characterisations are effectively computable. 
\end{restatable}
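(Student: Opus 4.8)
The plan is to mirror the strategy behind Theorem~\ref{thm:Balder}, but with homomorphisms replaced by the morphism notion appropriate to the full positive modal language \emph{without} constants. First I would establish a semantic invariance result: the $\mono$-formulas are exactly the formulas preserved under \emph{weak simulations}, a weakening of the simulations of~\cite{KurtoninaDeRijke} that compensates for the absence of $\top$ and $\bot$. The point is that, lacking $\bot$, a formula $\Box\chi$ is satisfied vacuously at a dead end, so an empty dead-end world should be allowed to weakly simulate into a world that has successors; and, lacking $\top$, ``having a successor'' ($\Diamond\top$) is no longer expressible, so the forth/back clauses must be relaxed accordingly. Preservation is a routine induction on formula structure; the converse --- that on finite models sameness of $\mono$-theory coincides with the existence of weak simulations in both directions --- is what lets me move freely between the syntactic picture and the order-theoretic one.

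The second step recasts $\mono$-entailment as an order. Writing $(M,s)\le(M',s')$ when there is a weak simulation from $(M,s)$ to $(M',s')$, invariance makes $\mathrm{finmod}(\varphi)$ an up-set in this preorder. Since every $\mono$-formula has bounded modal depth, I may restrict attention to models of height at most $d(\varphi)$ (via tree unravelling, Proposition~\ref{prop:treeunravelling}, and bounded-depth preservation, Proposition~\ref{prop:boundeddepthpreservation}), where there are only finitely many models up to equivalence. I would then associate to $\varphi$ a finite family of \emph{canonical models} $A^\varphi_1,\dots,A^\varphi_k$ --- one per disjunct of a suitable normal form, each essentially the syntax tree of the disjunct --- chosen so that $\mathrm{finmod}(\varphi)=\upclosure\{A^\varphi_1,\dots,A^\varphi_k\}$ with each $A^\varphi_i$ being $\le$-minimal among models of $\varphi$. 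Taking the $A^\varphi_i$ as positive examples immediately forces one half of the target equivalence: any $\psi\in\mono$ fitting them positively satisfies $\varphi\models\psi$, since each generator of the up-set $\mathrm{finmod}(\varphi)$ then lies in the up-set $\mathrm{finmod}(\psi)$.

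The heart of the argument is the other half: producing finitely many negative examples excluding every $\psi$ with $\mathrm{finmod}(\psi)\supsetneq\mathrm{finmod}(\varphi)$. For this I would construct, in the category of finite pointed Kripke structures under weak simulations, a \emph{finite duality} (splitting) for the canonical models: a finite set $\mathcal{D}$ of structures that are $\le$-maximal subject to not lying above any $A^\varphi_i$, so that a finite model fails $\varphi$ exactly when it weakly simulates into some member of $\mathcal{D}$. This is the modal analogue of the frontier construction for c-acyclic unions of conjunctive queries in~\cite{AlexeCKT2011,tencatedalmauCQ}, and it is precisely here that discarding $\top$ and $\bot$ is essential --- the constant-free restriction is what rules out the ``height $n$'' obstructions of Lemma~\ref{lem:height} and keeps the relevant models bounded enough for the set of maximal proper predecessors to be finite. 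Taking $\mathcal{D}$ as negative examples, any fitting $\psi$ avoids all of $\mathcal{D}$, which together with $\mathrm{finmod}(\psi)$ being an up-set forces $\psi\models\varphi$, and hence $\varphi\equiv\psi$. Effectiveness then follows because the canonical models and $\mathcal{D}$ are all finite structures of size bounded in terms of $\varphi$, $\Prop$ and $d(\varphi)$, and the normal form, canonical models, and maximal-predecessor search are computable.

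The step I expect to be genuinely hard, and where the bulk of Section~\ref{sec:mono} must go, is the finiteness and correctness of these dualities: unlike plain homomorphisms, weak simulations carry both a forth and a back condition and interact delicately with dead ends and ``full'' worlds (those satisfying every $\mono$-formula, such as $\fullloop$). Getting the right definition of the frontier $\mathcal{D}$ and proving that it is both finite and complete --- i.e.~that every finite model outside $\mathrm{finmod}(\varphi)$ weakly simulates into some member of $\mathcal{D}$ --- is where the real work lies.
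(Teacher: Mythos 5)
Your overall architecture --- weak simulations, a preservation theorem, positive examples generating the up-set $\text{finmod}(\varphi)$, and a finite set of negative examples forming a duality, certified by an analogue of Theorem~\ref{thm:dualitiesarecharacterisations} --- is exactly the paper's framework. However, there are two genuine gaps in how you propose to instantiate it.

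First, your concrete positive examples cannot work. You propose canonical models that are ``essentially the syntax tree of the disjunct,'' of height at most $d(\varphi)$. No family of bounded-height models can satisfy $\text{finmod}(\varphi)=\upclosure\{A^\varphi_1,\dots,A^\varphi_k\}$: if $A$ has height $h$ and $(M,s)\models\varphi$ contains a path $s=t_0R t_1 R\cdots Rt_{h+1}$ in which no $t_i$ ($i\geq 1$) is bisimilar to $\fullloop$, then repeated use of the [back'] clause forces any weak simulation $A\simu M,s$ to produce a matching path of length $h+1$ inside $A$, which is impossible. Such deep models exist already for $\varphi=p$ (a path of length $h+1$ with $p$ at the root: the dead end fails $\Diamond p$, and each intermediate $t_i$ fails $\Diamond^{h+2-i}p$, so none is $\fullloop$-bisimilar). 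Equivalently, on the syntactic side: if all positive examples for $p$ have height $\le h$, then $p\wedge\Box^{h+1}p$ fits them vacuously, the same trick as in Theorem~\ref{thm:negative-result-fragments}. This is precisely why Definition~\ref{def:posexamples} glues the reflexive empty loopstate $\emptyloop$ beneath every root: $\emptyloop$ absorbs all [back'] obligations (Lemma~\ref{lem:Winitialemptyloop}), and the resulting examples are finite but cyclic, hence of infinite height --- incompatible with your bounded-height restriction. Relatedly, ``one canonical model per disjunct'' is too optimistic: the $\Box(\psi_0\vee\dots\vee\psi_m)$ case of Definition~\ref{def:posexamples} needs one example per \emph{subset} of $\bigcup_j E^+_{\psi_j}$, and Section~\ref{sec:complexity} shows this blow-up is unavoidable (a tower-function lower bound already for $\Box^np$).

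Second, you leave the entire negative-example half --- finiteness and completeness of the frontier $\mathcal{D}$ --- as the acknowledged ``real work,'' offering only an analogy with CQ frontiers. The paper never performs such a maximal-predecessor construction at all. Instead, it observes that flipping valuations, $(M,s)\mapsto(M^{\neg},s)$, is an isomorphism $\mathbf{wSim}[\Prop]\to\mathbf{wSim}[\Prop]^{op}$ (Proposition~\ref{prop:isomorphism}), and that syntactically $\neg\varphi\equiv\triangleleft(\varphi)^{\neg}$, where $\triangleleft$ dualizes all connectives. Hence the negative examples are simply the flips of the positive examples of the dual formula, $E^-_\varphi:=\{(E^{\neg},e)\mid (E,e)\in E^+_{nf(\triangleleft(\varphi))}\}$, and the completeness property $\downclosure E^-_\varphi=\text{finmod}(\neg\varphi)$ follows from the already-proved positive-example property applied to $\triangleleft(\varphi)$ (Propositions~\ref{prop:lhsduality} and~\ref{prop:dualityvianegexamples}). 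In other words, the half of the proof you flag as hard reduces, by the self-duality of the category, to the half you sketched. Without this idea --- or an actual finiteness-and-completeness proof for your $\mathcal{D}$ --- the proposal does not yet constitute a proof.
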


The proof of Theorem~\ref{thm:main-mono} is quite different, and more involved. It will be given in the next section. We will discuss computational complexity aspects in Section \ref{sec:complexity}.

The following result shows that Theorem~\ref{thm:main-mono}
is optimal, in the sense that $\mono$ cannot be extended
to a larger fragment admitting finite characterisations.

\begin{theorem}\label{thm:negative-result-fragments}
Let $\Prop$ be any finite set. Then neither $\mathbb{C}(\mathcal{L}_{\Box,\Diamond,\wedge,\bot}[\Prop])$ nor $\mathbb{C}(\mathcal{L}_{\Box,\Diamond,\vee,\top}[\Prop])$ admits finite characterisations.
\end{theorem}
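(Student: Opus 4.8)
The plan is to prove the statement for $\mathcal{L}_{\Box,\Diamond,\wedge,\bot}$ and then transfer it to $\mathcal{L}_{\Box,\Diamond,\vee,\top}$ using the duality observation following Theorem~\ref{thm:dualposex}: the sets $\{\Box,\Diamond,\wedge,\bot\}$ and $\{\Box,\Diamond,\vee,\top\}$ are obtained from one another by replacing every connective with its dual, so $\mathbb{C}(\mathcal{L}_{\Box,\Diamond,\wedge,\bot}[\Prop])$ admits finite characterisations iff $\mathbb{C}(\mathcal{L}_{\Box,\Diamond,\vee,\top}[\Prop])$ does. It therefore suffices to exhibit a single formula of $\mathcal{L}_{\Box,\Diamond,\wedge,\bot}$ with no finite characterisation, and the natural candidate is $\bot$ itself.

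The first step is the observation that any finite characterisation of $\bot$ must have the form $(\emptyset,E^-)$, since $\bot$ has no positive examples. Unwinding the definition, $(\emptyset,E^-)$ characterises $\bot$ precisely when \emph{every} satisfiable formula of the fragment is true in at least one example of $E^-$ (otherwise that formula would itself fit $(\emptyset,E^-)$ while being inequivalent to $\bot$). The whole question thus reduces to whether a finite set of (possibly infinite) pointed models can ``meet'' every satisfiable $\mathcal{L}_{\Box,\Diamond,\wedge,\bot}$-formula. To answer this negatively I would use the family $\chi_n := \Box^{n+1}\bot \wedge \Diamond^n\Box\bot$ for $n\geq 0$. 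These lie in $\mathcal{L}_{\Box,\Diamond,\wedge,\bot}$: they avoid $\top$ by replacing the conjunct $\Diamond^n\top$ of Lemma~\ref{lem:height} with $\Diamond^n\Box\bot$, which is equivalent to it in the presence of $\Box^{n+1}\bot$ (any path of the relevant length is then maximal and so ends at a dead end). Consequently $\chi_n\equiv height_n$ on all models, and by Lemma~\ref{lem:height} we get $M,s\models\chi_n$ iff $height(M,s)=n$; moreover each $\chi_n$ is satisfiable and inequivalent to $\bot$.

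The crucial step is that every pointed model satisfies \emph{at most one} of the formulas $\chi_n$, because a pointed model has a single well-defined height, and none of the $\chi_n$ holds when that height is infinite. Hence a finite $E^-$ can meet at most $|E^-|$ of the formulas $\{\chi_n\mid n\geq 0\}$, so some $\chi_{n_0}$ is false throughout $E^-$. Then $\chi_{n_0}$ fits $(\emptyset,E^-)$ yet is not equivalent to $\bot$, contradicting the assumption that $(\emptyset,E^-)$ characterises $\bot$. This shows $\bot$ has no finite characterisation, so $\mathcal{L}_{\Box,\Diamond,\wedge,\bot}$ does not admit finite characterisations, and by the duality transfer neither does $\mathcal{L}_{\Box,\Diamond,\vee,\top}$.

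The step I expect to be the main obstacle is exactly the one in the previous paragraph: since the negative result is stated over $\mathcal{F}_{\text{all}}$, the examples in $E^-$ may be \emph{infinite}, so one cannot bound their height via Proposition~\ref{prop:heightboundedbysize} and then argue that $E^-$ fails to reach large heights. A single infinite example could a priori meet many target formulas at once. The reason the argument nevertheless goes through is that the formulas $\chi_n$ pin down a quantity — the height — that is unique per model, so no example (finite or infinite) can meet two distinct $\chi_n$; this is what makes the per-model uniqueness of height, rather than a size bound on $E^-$, the load-bearing fact.
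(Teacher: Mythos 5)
Your proof is correct, and for the core case it follows the paper's own route: the same witness formulas $\Box^{n+1}\bot\wedge\Diamond^n\Box\bot$ (which is exactly the paper's rewriting of $height_n$ into $\mathcal{L}_{\Box,\Diamond,\wedge,\bot}$, and your justification of the equivalence with $height_n$ is the right one), and the same reduction of the problem to whether a finite $E^-$ can meet every satisfiable formula of the fragment. There are two genuine differences. First, where the paper selects its witness by taking $k$ strictly greater than the \emph{size} of every example in $E^-$ and invoking Proposition~\ref{prop:heightboundedbysize}, you select $n_0$ by observing that height is a single well-defined invariant of each pointed model (finite or infinite), so each example satisfies at most one $\chi_n$, and then you pigeonhole over the finitely many examples. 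Your version is strictly more robust: the paper's size-based choice tacitly assumes the examples in $E^-$ are finite, whereas the theorem is asserted (per the section's convention) over $\mathcal{F}_{\text{all}}$, where examples may be infinite models; your argument covers that case, so you have in effect repaired a small lacuna in the paper's own proof (which could equally be fixed by choosing $k$ above all \emph{finite heights} occurring in $E^-$ --- essentially your observation repackaged). Second, for $\mathcal{L}_{\Box,\Diamond,\vee,\top}$ you invoke the duality principle, whereas the paper gives a direct dual argument: there $E^-=\emptyset$, $E^+$ must meet every falsifiable formula, and $\neg height_k\equiv\Diamond^{k+1}\top\vee\Box^k\Diamond\top$ serves as the falsifiable formula true on all of $E^+$. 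The paper itself remarks that its second case ``in principle, already follows from the duality principle,'' so your route is sanctioned; just note that this principle is only \emph{stated} in the paper as a general phenomenon, with the dualization technique worked out only for the $\mathcal{L}_{\Diamond,\wedge,\vee,\top,\bot}$/$\mathcal{L}_{\Box,\wedge,\vee,\top,\bot}$ pair in Theorem~\ref{thm:dualposex}, so a fully self-contained writeup would either include the (routine) generalization of that dualization argument to this pair of fragments or give the direct proof as the paper does.
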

\begin{proof}
Observe that $height_n$ \revision{can be equivalently written as $\Box^{n+1}\bot\wedge\Diamond^n\Box\bot$, which belongs to $\mathcal{L}_{\Box,\Diamond,\wedge,\bot}[\emptyset]$, and that its negation $\neg height_{n}$ can be equivalently written as $\Diamond^{n+1}\top\vee\Box^n\Diamond\top$, which belongs to $\mathcal{L}_{\Box,\Diamond,\vee,\top}[\emptyset]$.} In the proof of Theorem~\ref{thm:nocharswrtK}, in order to show
that no modal formula admits a finite characterisations, we used both $height_n$ and its negation $\neg height_{n}$. However, here, we are only claiming that the concept classes in question do not admit finite characterisations, and therefore, we only have to give one example of a non-finitely-characterisable formula. 

First, we show that $\bot\in\mathcal{L}_{\Box,\Diamond,\wedge,\bot}[\emptyset]$ does not have a finite characterisation w.r.t. $\mathbb{C}(\mathcal{L}_{\Box,\Diamond,\wedge,\bot}[\emptyset])$. It follows that it also does not have a finite characterisation w.r.t. $\mathbb{C}(\mathcal{L}_{\Box,\Diamond,\wedge,\bot}[\Prop])$ for $\Prop\ne\emptyset$. Suppose for contradiction that $\bot$ did have such a characterisation $(E^+,E^-)$. Since $\bot$ has no positive examples, it must be that $E^+=\emptyset$ and $E^-$ is a finite set of examples with the property that every satisfiable $\mathcal{L}_{\Box,\Diamond,\wedge,\bot}[\emptyset]$ formula is satisfied on some example in $E^-$. \revision{Choose $k$ to be any number strictly greater than the size of any example in $E^-$. Then by Proposition \ref{prop:heightboundedbysize} and choice of $k$ we have that $height(E,e)<k$ for all $(E,e)\in E^-$. It follows that the formula $height_k$ is false on every example in $E^-$. But then $height_k\in\mathcal{L}_{\Box,\Diamond,\wedge,\bot}[\emptyset]$ fits $(E^+,E^-)$ whereas clearly $\bot\not\equiv height_k$.}


Similarly, we show that $\top\in\mathcal{L}_{\Box,\Diamond,\vee,\top}[\emptyset])$ cannot have a finite characterisation w.r.t. $\mathbb{C}(\mathcal{L}_{\Box,\Diamond,\vee,\top}[\emptyset])$, from which the more general result for arbitrary $\Prop$ follows. Note that, in principle, this already follows from the duality principle alluded to above. Suppose for contradiction that $\top$ has such a finite characterisation $(E^+,E^-)$. Since $\top$ is unfalsifiable, it follows that $E^-=\emptyset$ and $E^+$ is a finite set of examples with the property that every falsifiable formula in $\mathcal{L}_{\Box,\Diamond,\vee,\top}[\emptyset])$ is falsified on some example in $E^+$. \revision{Choose $k$ to be any number strictly greater than the size of any example in $E^+$. Then by Proposition \ref{prop:heightboundedbysize} and choice of $k$ we have that $height(E,e)<k$ for all $(E,e)\in E^+$. It follows that the formula $\neg height_{k}$ is true on all examples in $E^+$, because it can only be falsified on models of exactly height $k$. But then $\neg height_{k}\in\mathcal{L}_{\Box,\Diamond,\vee,\top}[\emptyset])$ fits $(E^+,E^-)$ whereas clearly $\top\not\equiv\neg height_{k}$.}
\end{proof}

Observe that ``admitting finite characterisations'' is a monotone property of modal fragments: if $\mathcal{L}\subseteq \mathcal{L}'$
and
$(\mathcal{F},\mathcal{L'}[\Prop])$ admits finite characterizations,
then so does $(\mathcal{F},\mathcal{L}[\Prop])$.
Therefore, Theorem~\ref{thm:negative-result-fragments} implies that
Theorem~\ref{thm:main-mono} cannot be improved (because $\mathcal{L}_{\Box,\Diamond,\vee,\top}\subseteq\mathcal{L}_{\Box,\Diamond,\wedge,\vee,\top}$ and $\mathcal{L}_{\Box,\Diamond,\wedge,\bot}\subseteq\mathcal{L}_{\Box,\Diamond,\wedge,\vee,\bot}$, while $\mathcal{L}_{\Box,\Diamond,\wedge,\vee,\neg_{\text{at}}}$ is
expressively equivalent to $\mathcal{L}_{\text{full}}$). The results above almost completely determine which concept classes of the form $\mathbb{C}(\mathcal{L}_S)$ admit finite characterisations (where $S\subseteq\{\Box,\Diamond,\wedge,\vee,\top,\bot,\neg_{\text{at}}\}$. \revision{In essence, the remaining open cases are $\mathcal{L}_{\Box,\Diamond,\wedge,\top},\mathcal{L}_{\Diamond,\wedge,\vee,\top,\bot,\neg_{\text{at}}}$ and $\mathcal{L}_{\Box,\Diamond,\top,\bot,\neg_{\text{at}}}$. We hope to settle these open problems in future work.}


\begin{remark}
Note that the fragments for which we proved above that they admit finite characterizations, are not locally tabular. The fragments without any modal operators are locally tabular, but once a fragment contains even a single modal operator, say $\Diamond$, there are infinitely many pairwise-inequivalent formulas of the form $\Diamond^np$.
\end{remark}



\section{Finite Characterisations for \texorpdfstring{$\mono$}{L(Box,Diamond,And,Lor)} via Weak-Simulation Dualities}\label{sec:mono}
In this section, we study the fragment $\mono$ in depth. We introduce 
\emph{weak simulations} and show that $\mono$ is characterized by its preservation under weak simulations. We study the category $\mathbf{wSim}[\Prop]$
of pointed models and weak simulations,
and we show how to construct finite dualities in 
this category. Finally, we use these results to prove Theorem~\ref{thm:main-mono}.


\subsection{Weak Simulations}

Kurtonina and de Rijke~\cite{KurtoninaDeRijke}
studied the positive modal fragment $\pos$
(which includes $\top$ and $\bot$). In order
to characterize the expressive power of this
fragment, they introduced the notion of a
\emph{directed simulation} (which we will simply call \emph{simulation} here, for brevity). We briefly recall notion
and the main result from~\cite{KurtoninaDeRijke}.

\begin{definition}{(Simulation)}
Given two pointed models $(M,s),(M',s')$, a \textit{simulation} from $(M,s)$ to $(M',s')$ is a relation $Z\subseteq M\times M'$ such that for all $(t,t')\in Z$, the following conditions hold.
\begin{description}
    \item[(atom)] $v^M(t)\subseteq v^{M'}(t')$
    \item[(forth)] If $R^{M}tu$, then there is a $u'$ with $R^{M'}t'u'$ and $(u,u')\in Z$
    \item[(back)] If $R^{M'}t'u'$, then there is a $u$ with $R^{M}tu$ and $(u,u')\in Z$
\end{description}
\end{definition}

The difference between simulations and bisimulations
lies in the ``atom'' clause: simulations only require
propositional variables to be preserved from left to right, while bisimulations require the propositional variables to 
be preserved in both directions. 

We say that a modal formula $\varphi$ is 
\emph{preserved under simulations} if,
whenever $Z$ is a simulation from $(M,s)$ to $(M',s')$ and $(M,s)\models\varphi$, then
$(M',s')\models\varphi$.

\begin{theorem}{(\cite{KurtoninaDeRijke})}
A first order formula $\varphi(x)$ (in the signature with one binary relation and finitely many unary relations) is preserved under simulations iff it is equivalent to the standard translation of an $\pos$ formula.
\end{theorem}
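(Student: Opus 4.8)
The plan is to prove this as a Van~Benthem-style preservation theorem, with an easy ``soundness'' direction and a harder ``completeness'' direction resting on a Hennessy--Milner-type lemma for simulations. For the right-to-left direction I would show by induction on the structure of $\psi\in\pos$ that $ST_x(\psi)$ is preserved under simulations. The atomic cases $p,\top,\bot$ are immediate from the (atom) clause (vacuously so for $\bot$), and $\wedge,\vee$ are routine. The modal cases are exactly where the two halves of the simulation definition get used: for $\Diamond\psi$ one invokes (forth) together with the induction hypothesis, and for $\Box\psi$ one invokes (back). The point is that the asymmetry (directedness) of simulations is precisely matched by the fact that $\pos$ is closed under $\Diamond$ and $\Box$ but contains no negation.

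For the left-to-right direction, suppose $\varphi(x)$ is preserved under simulations, and let $T$ be the set of all $\pos$-consequences of $\varphi$, i.e.\ all $ST_x(\psi)$ with $\psi\in\pos$ and $\varphi\models ST_x(\psi)$. The key claim is $T\models\varphi$; granting this, compactness yields a finite $T_0\subseteq T$ with $\bigwedge T_0\models\varphi$, and since $\varphi\models\bigwedge T_0$ by construction, $\varphi$ is equivalent to $\bigwedge T_0$, which is the standard translation of a $\pos$-formula. To prove $T\models\varphi$, take any $(N,t)\models T$. Using compactness and the closure of $\pos$ under finite disjunctions, I would first produce a model $(M,s)\models\varphi$ whose $\pos$-theory is contained in that of $(N,t)$: if the set $\{\varphi\}\cup\{\neg ST_x(\psi)\mid \psi\in\pos,\ (N,t)\not\models\psi\}$ were unsatisfiable, compactness would give a disjunction $\psi_1\vee\cdots\vee\psi_n\in\pos$ that is a $\pos$-consequence of $\varphi$, hence lies in $T$ and so is true at $(N,t)$, contradicting the choice of the $\psi_i$.

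The crux, and the step I expect to be the main obstacle, is the Hennessy--Milner-type lemma: on $\omega$-saturated models, $\pos$-theory inclusion is itself a simulation. I would pass to $\omega$-saturated elementary extensions $(M^*,s)$ and $(N^*,t)$ and set $Z=\{(a,b)\mid \text{every } \pos\text{-formula true at } a \text{ is true at } b\}$. The (atom) clause is trivial, and saturation does the work in the other two clauses. For (forth), given $Rau$ one checks that the type $\{ST_y(\psi)\mid \psi\in\pos,\ u\models\psi\}\cup\{Rby\}$ is finitely satisfiable over $N^*$: from $u\models\psi_1\wedge\cdots\wedge\psi_n$ we get $a\models\Diamond(\psi_1\wedge\cdots\wedge\psi_n)\in\pos$, which transfers to $b$, and saturation then yields the required successor $u'$ with $(u,u')\in Z$. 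The (back) clause is where directedness really bites and is the delicate point: here one must instead exploit that $\Box(\psi_1\vee\cdots\vee\psi_n)\in\pos$, arguing contrapositively that if no successor of $a$ avoided all the $\psi_i$ false at $u'$, then $a$, hence $b$, would satisfy $\Box(\psi_1\vee\cdots\vee\psi_n)$, forcing $u'$ to satisfy some $\psi_i$. The degenerate $n=0$ cases are exactly what $\top$ and $\bot$ are for: $\Diamond\top$ witnesses that $b$ has a successor, and $\Box\bot$ (failing at $a$) guarantees $a$ has one, which is why the fragment must include the constants.

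With the lemma in hand the argument closes quickly. Since elementary extensions preserve $\pos$-theories and the $\pos$-theory of $s$ is contained in that of $t$ by the construction of $(M,s)$, we have $(s,t)\in Z$, so $Z$ is a simulation from $(M^*,s)$ to $(N^*,t)$. As $(M^*,s)\models\varphi$, preservation under simulations gives $(N^*,t)\models\varphi$, and passing back down the elementary extension yields $(N,t)\models\varphi$. This establishes $T\models\varphi$ and completes the proof.
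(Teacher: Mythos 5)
Your proof is correct, and it takes essentially the same approach as the paper: the paper itself only cites this result from Kurtonina and de Rijke, but its proof of the analogous characterization for weak simulations (Theorem~\ref{thm:wsim-characterization}) --- which it explicitly notes follows the same lines as the original $\pos$ characterization --- uses exactly your strategy of collecting the positive consequences, building an auxiliary model by compactness, passing to ($\omega$-)saturated elementary extensions, taking $\pos$-type inclusion as the simulation relation, and closing with a diagram chase. No gaps to report; your treatment of the degenerate $n=0$ cases via $\top$ and $\bot$, and of the (back) clause via $\Box(\psi_1\vee\cdots\vee\psi_n)$, matches the corresponding steps in the paper's weak-simulation argument.
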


We can think of simulations as 
a weak form of bisimulations,
and we can view the above result as an 
analogue of the classic Van Benthem characterization \revision{(Theorem \ref{thm:Johan})}, which characterizes the full modal language in terms of \revision{invariance under bisimulations}. In order to characterize the expressive power
of $\mono$ (which differs from $\pos$ in that it lacks $\top$ and $\bot$) we will now introduce a modified version of simulations, which we call \emph{weak simulations}.

\begin{definition}{(Empty and Full Loopstates)}
The \emph{empty loopstate}, denoted by $\emptyloop$, is the pointed model consisting of a single reflexive point with an empty valuation. Similarly, the \emph{full loopstate}, denoted by $\fullloop$, is the pointed model consisting of a single reflexive point with a full valuation (i.e., where every $p\in\Prop$ is true).
\end{definition}

\begin{definition}\label{def:wSims}{(Weak Simulation)}
Given two pointed models $(M,s),(M',s')$, a \textit{weak simulation} between them (notation: $M,s\simu M',s'$, \revision{and we say that $(M',s')$ weakly simulates $(M,s)$ or $(M,s)$ is weakly simulated by $(M',s')$}) is a relation $Z\subseteq M\times M'$ such that for all $(t,t')\in Z$;
\begin{description}
    \item[(atom)] $v^M(t)\subseteq v^{M'}(t')$
    \item[(forth')] If $R^{M}tu$, either $M,u\bisim\emptyloop$ or there is a $u'$ with $R^{M'}t'u'$ and $(u,u')\in Z$
    \item[(back')] If $R^{M'}t'u'$, either $M',u'\bisim\fullloop$ or there is a $u$ with $R^{M}tu$ and $(u,u')\in Z$
\end{description}
\end{definition}

We refer to the extra `loopstate-conditions' as the \textit{escape clauses}, because, intuitively, they exclude $\emptyloop$ and $\fullloop$ from the forth and back requirements, respectively. The following two lemmas explicate some consequences of having these escape clauses. 

\begin{lemma}\label{lem:Winitialemptyloop}
Every pointed model weakly simulates $\emptyloop$ and is weakly simulated by $\fullloop$.
\end{lemma}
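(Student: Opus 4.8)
The plan is to exhibit, in each case, an explicit weak simulation — in fact the largest possible relation in the relevant direction — and to verify the three clauses of Definition~\ref{def:wSims} directly. The two statements are dual, and in each case exactly one of the forth/back clauses is discharged by the corresponding escape clause, while the other is satisfied precisely because the candidate relation is chosen to be maximal on the appropriate side.

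For the first claim, fix a pointed model $(M,s)$ and write $\emptyloop=(N,w)$, so that $N$ has a single reflexive state $w$ with $v^N(w)=\emptyset$. I would take the relation $Z=\{w\}\times dom(M)$, which in particular contains the pair $(w,s)$. The (atom) clause holds for every pair $(w,t)$ because $v^N(w)=\emptyset\subseteq v^M(t)$. The (forth') clause is immediate via the escape clause: the only $R^N$-successor of $w$ is $w$ itself, and $(N,w)$ is literally $\emptyloop$, so $N,w\bisim\emptyloop$. The (back') clause is where maximality is used: given any $R^M$-successor $u'$ of $t$, the self-loop $R^N ww$ together with $(w,u')\in Z$ provides the required matching successor.

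For the second claim I would argue dually. Writing $\fullloop=(N,w)$ with $v^N(w)=\Prop$, I would take $Z=dom(M)\times\{w\}$, which contains $(s,w)$. Now (atom) holds because $v^M(t)\subseteq\Prop=v^N(w)$; the (forth') clause is satisfied by maximality, since for any $R^M$-successor $u$ of $t$ the pair $(u,w)\in Z$ together with the self-loop $R^N ww$ yields the matching successor; and the (back') clause is discharged by the escape clause, since the only $R^N$-successor of $w$ is $w$ and $(N,w)$ is $\fullloop$, whence $N,w\bisim\fullloop$.

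There is essentially no serious obstacle here: the content of the lemma is exactly that the escape clauses make $\emptyloop$ and $\fullloop$ into a least and a greatest element, respectively, for weak simulation, and the proof amounts to choosing the right maximal candidate relation so that the non-escape clause holds for free. The only point requiring a moment's care is bookkeeping the asymmetry: in the first claim the escape clause handles (forth') while maximality handles (back'), whereas in the second claim these two roles are swapped.
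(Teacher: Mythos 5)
Your proof is correct and takes essentially the same approach as the paper: both exhibit the full product relation ($dom(\emptyloop)\times dom(M)$, resp.\ $dom(M)\times dom(\fullloop)$) and verify the clauses directly, with the escape clause discharging (forth') in the first case and (back') in the second, while the self-loop plus maximality handles the remaining clause.
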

\begin{proof}
Let $(M,s)$ be a pointed model. We claim that $Z:=dom(\emptyloop)\times dom(M)$ is a weak simulation $Z:\emptyloop\simu M,s$. Let $(r,t)\in Z$ be arbitrary. [atom] trivial since $v(r)=\emptyset$. [forth'] trivial since we can always use the escape clause. [back'] Whenever $Rt'u'$ we can always choose $r$ itself as matching successor as $Rrr$ holds and $(r,u')\in Z$. A dual argument shows that $dom(M)\times dom(\fullloop)$ is the unique weak simulation $M,s\simu\fullloop$.
\end{proof}

In particular, the deadlock model (i.e.~the single point model with no successors) \textit{weakly} simulates $\emptyloop$ and is \textit{weakly} simulated by $\fullloop$, whereas it does not simulate $\emptyloop$ nor is simulated by $\fullloop$. 

\begin{lemma}\label{lem:emptyloopsatall}
The empty loopstate $\circlearrowright_{\emptyset}$ satisfies no formula in $\mono$, while the full loopstate $\circlearrowright_{\mathrm{Prop}}$ satisfies every formula in $\mono$.
\end{lemma}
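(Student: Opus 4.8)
The plan is to prove both halves by a single straightforward induction on the structure of $\mono$-formulas, exploiting the fact that $\mono$ contains no truth-constants. Let $r$ denote the unique reflexive point underlying each loopstate, and recall that in both loopstates $R[r]=\{r\}$.

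First I would handle $\emptyloop$. For the base case, a propositional variable $p$ fails at $\emptyloop$ because $v(r)=\emptyset$, so $p\notin v(r)$; this is exactly the step where the absence of $\top$ matters, since $\top$ would be satisfied here. For the inductive step, suppose the claim holds for $\psi,\chi\in\mono$. Then $\emptyloop\not\models\psi\wedge\chi$ and $\emptyloop\not\models\psi\vee\chi$ follow immediately from the induction hypothesis, since a conjunction (resp.\ disjunction) requires at least one of its conjuncts (resp.\ disjuncts) to hold. For the modal cases, the key observation is that the unique successor of $r$ is $r$ itself. Hence $\emptyloop\models\Diamond\psi$ would require $\emptyloop\models\psi$, and $\emptyloop\models\Box\psi$ would likewise require $\emptyloop\models\psi$ (the $\Box$-quantifier is \emph{not} vacuous here precisely because the reflexive edge guarantees a successor); both are excluded by the induction hypothesis.

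The argument for $\fullloop$ is entirely dual. The base case $p$ now holds because $v(r)=\Prop\ni p$ (here it is the absence of $\bot$ that is used); conjunctions and disjunctions are handled directly by the induction hypothesis; and in the modal cases the reflexive loop again provides the unique successor $r$, so that $\fullloop\models\Diamond\psi$ holds because $r$ is a witnessing successor satisfying $\psi$ by hypothesis, and $\fullloop\models\Box\psi$ holds because every successor, namely $r$, satisfies $\psi$.

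There is no genuinely hard step here; the only points worth flagging are the two places where the specific features of $\mono$ and of the loopstates are used. The base cases rely on the omission of $\top$ and $\bot$ from the language --- with $\top$ present the first claim would fail at $\emptyloop$, and with $\bot$ present the second would fail at $\fullloop$. The modal cases rely essentially on the reflexive loop: were the loopstate replaced by a deadlock (a point with no successors), $\Box\psi$ would become vacuously true at $\emptyloop$ and $\Diamond\psi$ vacuously false at $\fullloop$, breaking both claims. (One could instead derive the second half from the first via the colour-flipping construction $M^{\neg}$ of Theorem~\ref{thm:dualposex}, but the direct induction is self-contained and avoids relying on machinery developed only later.)
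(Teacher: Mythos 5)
Your proof is correct and follows essentially the same route as the paper's: a structural induction where the base case uses the empty/full valuation and the modal cases use the fact that the unique successor of the loopstate is the point itself. The only cosmetic differences are that you spell out the (trivial) $\wedge$/$\vee$ cases, which the paper omits, and you observe directly that the successor is literally the same state, whereas the paper phrases this via bisimilarity to the loopstate and invokes bisimulation invariance.
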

\begin{proof}
Clearly $\emptyloop\not\models p$ and $\fullloop\models p$ for every atomic formula $p$. Now suppose that $\emptyloop\not\models\varphi$ and $\fullloop\models\varphi$. It follows that $\emptyloop\not\models\Diamond\varphi$ and that $\emptyloop\not\models\Box\varphi$. This is because there is exactly one successor that is bisimilar to $\emptyloop$ again, which by inductive hypothesis and Theorem \ref{thm:bisimulationinvariance} does not satisfy $\varphi$. Dually, $\fullloop\models\Diamond\varphi$ and $\fullloop\not\models\Box\varphi$ because there is exactly one successor which by inductive hypothesis and Theorem \ref{thm:bisimulationinvariance} satisfies $\varphi$.
\end{proof}

We say that a modal formula $\varphi$ is \emph{preserved under weak simulations}, if, whenever $M,s\simu M',s'$  and $M,s\models\varphi$, then $M',s'\models\varphi$.
Lemma~\ref{lem:emptyloopsatall} plays an important role in the next proof.

\begin{theorem}\label{thm:wSimPreservation}
Every formula $\varphi\in\mono$ is preserved under weak simulations.
\end{theorem}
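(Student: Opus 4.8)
The plan is to prove this by structural induction on the formula $\varphi\in\mono$, showing that preservation under weak simulations is preserved by each of the four constructors $\{\Box,\Diamond,\wedge,\vee\}$ and holds for the atomic base case. Fix a weak simulation $Z:M,s\simu M',s'$ and assume $M,s\models\varphi$; I must derive $M',s'\models\varphi$. More generally, to make the induction go through I would prove the stronger statement that for \emph{every} pair $(t,t')\in Z$, if $M,t\models\varphi$ then $M',t'\models\varphi$ (the theorem is the special case $(s,s')\in Z$). This strengthening is what lets the modal cases recurse into the successor pairs that the clauses of Definition~\ref{def:wSims} supply.

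For the base case $\varphi=p$, the conclusion is immediate from the \textbf{(atom)} clause: $(t,t')\in Z$ gives $v^M(t)\subseteq v^{M'}(t')$, so $p\in v^M(t)$ implies $p\in v^{M'}(t')$. The conjunction and disjunction cases are routine applications of the induction hypothesis to the relevant subformula(s) at the same pair $(t,t')$. The two genuinely modal cases are where the escape clauses and Lemma~\ref{lem:emptyloopsatall} come into play. For $\varphi=\Diamond\psi$: from $M,t\models\Diamond\psi$ there is a successor $u$ with $R^Mtu$ and $M,u\models\psi$. I apply \textbf{(forth')} to $(t,t')\in Z$ and this $u$. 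The clause gives two alternatives. If there is a matching $u'$ with $R^{M'}t'u'$ and $(u,u')\in Z$, then the induction hypothesis yields $M',u'\models\psi$, hence $M',t'\models\Diamond\psi$. The other alternative, $M,u\bisim\emptyloop$, is ruled out: by Lemma~\ref{lem:emptyloopsatall} the empty loopstate satisfies \emph{no} $\mono$-formula, so by bisimulation-invariance (Theorem~\ref{thm:bisimulationinvariance}) $M,u$ would satisfy no $\mono$-formula either, contradicting $M,u\models\psi$. Thus only the good alternative can occur, and the case closes.

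For $\varphi=\Box\psi$ I argue dually using \textbf{(back')}. Suppose $M,t\models\Box\psi$; I must show $M',t'\models\Box\psi$, i.e.\ that every successor $u'$ of $t'$ satisfies $\psi$. Take any $u'$ with $R^{M'}t'u'$ and apply \textbf{(back')} to $(t,t')$. If there is a $u$ with $R^Mtu$ and $(u,u')\in Z$, then $M,u\models\psi$ (since $M,t\models\Box\psi$) and the induction hypothesis gives $M',u'\models\psi$. In the escape case $M',u'\bisim\fullloop$, Lemma~\ref{lem:emptyloopsatall} tells us the full loopstate satisfies \emph{every} $\mono$-formula, so $M',u'\models\psi$ directly by bisimulation-invariance. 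Either way $M',u'\models\psi$, and since $u'$ was arbitrary, $M',t'\models\Box\psi$.

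The main conceptual obstacle is recognising exactly how the escape clauses must be handled in opposite directions: for $\Diamond$ the escape alternative must be \emph{excluded} (because a $\Diamond$-witness cannot be the satisfies-nothing world $\emptyloop$), whereas for $\Box$ the escape alternative must be \emph{accepted} (because the satisfies-everything world $\fullloop$ vacuously meets the universal requirement). Lemma~\ref{lem:emptyloopsatall} is precisely the asymmetric fact that makes both go through, and the proof is essentially an exercise in deploying it on the correct side of each modal clause, with Theorem~\ref{thm:bisimulationinvariance} transporting its conclusions across the bisimilarities named in the escape clauses. The rest is bookkeeping.
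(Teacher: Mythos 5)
Your proof is correct and follows essentially the same route as the paper's: structural induction on $\varphi$, with Lemma~\ref{lem:emptyloopsatall} (transported via Theorem~\ref{thm:bisimulationinvariance}) used to rule out the escape clause in the $\Diamond$ case and to discharge it in the $\Box$ case. Your explicit strengthening of the induction hypothesis to all pairs $(t,t')\in Z$ is just a more careful phrasing of what the paper does implicitly (since $Z$ itself witnesses $M,t\simu M',t'$ for any $(t,t')\in Z$), so the two arguments coincide.
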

\begin{proof}
Let $Z:M,s\simu M',s'$ be a weak simulation. We proceed by formula induction on $\varphi$. 
The base case is where $\varphi=p$ for some $p\in\mathrm{Prop}$. Then $M,s\models p$ implies $M',s'\models p$ as $v(s)\subseteq v(s')$.
The induction argument for conjunction and disjunction is straightforward. This leaves us with the modal operators.
\begin{itemize}
    \item[($\Diamond$)] Suppose that $M,s\models\Diamond\varphi$ where $\varphi\in\mono$, i.e.~there is some $t\in R^M[s]$ with $M,t\models\varphi$. By lemma ~\ref{lem:emptyloopsatall}, $M,t\negbisim\emptyloop$ and hence by [forth'] there must be some $t'\in R^{M'}[s']$ such that $(t,t')\in Z$ and hence $M',t'\models\varphi$ by inductive hypothesis. It follows by the semantics that $M',s'\models\Diamond\varphi$.
    \item[($\Box$)] Suppose that $M,s\models\Box\varphi$ with $\varphi\in\mono$. We show that $M',s'\models\Box\varphi$. If $R^{M'}[s']=\emptyset$ this holds vacuously. For an arbitrary $t'\in R^{M'}[s']$, by the [back'] clause either $M',t'\bisim\fullloop$ or there is a successor $t\in R^M[s]$ with $(t,t')\in Z$. In the former case, by Lemma \ref{lem:emptyloopsatall} $M',t'\models\varphi$. Else, as $M,s\models\Box\varphi$ we have $M,t\models\varphi$ and hence by inductive hypothesis $M',t'\models\varphi$.
\end{itemize}
\end{proof}

There exist modal formulas that are preserved under
weak simulations but that cannot be expressed in $\mono$.
Indeed, $\top$ and $\bot$ are preserved under weak simulations (actually, they are preserved by every relation) but it follows from Lemma~\ref{lem:emptyloopsatall} that they are not equivalent to any formula in $\mono$. However, we will show that the language $\mono\cup\{\top,\bot\}$ does satisfy a van Benthem-style characterisation  in terms of preservation under weak simulations. To that end, we will need the following lemma.
Recall that $\bisim$ denotes bisimulation.

\begin{lemma}\label{lem:emptytypemeansemptyloop}
For any pointed model $(M,s)$, 
\begin{itemize}
    \item $M,s\bisim\emptyloop$ iff, for all $\varphi\in\mono$, $M,s\not\models\varphi$ 
    \item $M,s\bisim\fullloop$ iff, for all $\varphi\in\mono$, $M,s\models\varphi$.
    \end{itemize}
\end{lemma}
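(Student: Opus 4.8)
The plan is to prove both biconditionals by the same template, treating the full-loopstate case as the exact dual of the empty-loopstate case; I would write out the first bullet in detail and then indicate the dualisation. For each bullet the left-to-right direction is immediate: if $M,s\bisim\emptyloop$, then since bisimilar pointed models satisfy the same modal formulas (Theorem~\ref{thm:bisimulationinvariance}) and $\emptyloop$ satisfies no $\mono$-formula (Lemma~\ref{lem:emptyloopsatall}), neither does $(M,s)$; symmetrically for $\fullloop$, using that it satisfies every $\mono$-formula.

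For the right-to-left direction of the first bullet, suppose $(M,s)$ satisfies no $\mono$-formula. I would exhibit an explicit bisimulation to $\emptyloop$ by relating the unique point $r$ of $\emptyloop$ to exactly those worlds that are themselves $\mono$-free: set $Z=\{(t,r)\mid M,t\not\models\varphi\text{ for all }\varphi\in\mono\}$. Then $(s,r)\in Z$ by assumption, and I would verify the three bisimulation clauses. The \textbf{atom} clause holds because $t\not\models p$ for every $p\in\Prop$ forces $v^M(t)=\emptyset=v^{\emptyloop}(r)$. The \textbf{forth} clause is the crux of propagating the property to successors: if $R^{M}tu$, then $u$ must itself be $\mono$-free, since otherwise $u\models\psi$ for some $\psi\in\mono$ would give $M,t\models\Diamond\psi$ with $\Diamond\psi\in\mono$, contradicting $(t,r)\in Z$; hence $(u,r)\in Z$ and we match $u$ with $r$. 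The \textbf{back} clause reduces to showing that such a $t$ has at least one successor, since then any successor, being $\mono$-free by the forth argument, can be matched with $r$.

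The full-loopstate bullet is entirely dual: assuming $(M,s)$ satisfies every $\mono$-formula, I would take $Z=\{(t,r)\mid M,t\models\varphi\text{ for all }\varphi\in\mono\}$ with $r$ the point of $\fullloop$, using $\Box\psi$ in place of $\Diamond\psi$ to propagate the property along the forth clause, and $\Diamond p$ in place of $\Box p$ to guarantee a successor for the back clause. The main obstacle — and the only place where anything beyond bookkeeping is required — is the back clause, specifically the argument that a $\mono$-free (respectively $\mono$-full) world is never a deadlock. A deadlocked world vacuously satisfies $\Box p\in\mono$, so a $\mono$-free world cannot be deadlocked; dually, satisfying $\Diamond p\in\mono$ forces a $\mono$-full world to have a successor. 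This is exactly where the proof relies on the presence of a modal operator together with a propositional variable, i.e.\ on $\Prop\neq\emptyset$ and on $\Box p$ (respectively $\Diamond p$) belonging to $\mono$; without a propositional variable the fragment $\mono$ is empty and the statement fails, so I would make this assumption explicit.
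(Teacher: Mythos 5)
Your proof is correct, and it follows the same template as the paper (easy direction from Lemma~\ref{lem:emptyloopsatall} plus bisimulation invariance; hard direction by exhibiting an explicit bisimulation), but the bisimulation you construct is genuinely different. The paper first passes to the point-generated submodel $M\restriction s$ and takes $Z$ to be the \emph{total} relation $dom(M\restriction s)\times\{r\}$; the burden then falls on the atom and back clauses, which are verified \emph{globally} by tracing a path $s=t_0R\ldots Rt_n=t$ back to the root and deriving a contradiction with the formulas $\Diamond^n p$ and $\Diamond^n\Box p$ (dually $\Box^n p$ and $\Box^n\Diamond p$). You instead define $Z$ \emph{semantically}, relating $r$ to exactly the $\mono$-free (resp.\ $\mono$-full) worlds; the burden then falls on the forth clause, i.e.\ on showing membership in $Z$ propagates to successors, which you handle \emph{locally} with the one-step formulas $\Diamond\psi$ and $\Box p$ (dually $\Box\psi$ and $\Diamond p$). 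Your version buys you two small simplifications: no generated-submodel step (so no appeal to reachability or path-length bookkeeping), and arguments that never leave the world under consideration; the paper's version buys trivial forth and back-matching clauses at the cost of the path arguments. Your closing observation is also a genuine refinement of the paper: both proofs silently use that $\Box p$ (resp.\ $\Diamond p$) exists in the fragment, i.e.\ that $\Prop\neq\emptyset$, and the lemma indeed fails for $\Prop=\emptyset$ (then $\mono$ is empty, so the right-hand sides hold vacuously, yet a deadlock state is not bisimilar to $\emptyloop$); making that hypothesis explicit is a worthwhile addition.
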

\begin{proof}
The left to right direction is immediate from Lemma \ref{lem:emptyloopsatall} and Theorem \ref{thm:bisimulationinvariance}. 

From right to left, suppose that $M,s\not\models\varphi$ for all $\varphi\in\mono$. Now let $M'$ be the submodel of $M$  generated by $s$, and set $Z:=dom(M')\times dom(\emptyloop)$, where $dom(\emptyloop)=\{r\}$ consists of a single point. We show this is a bisimulation $Z:M',s\bisim\emptyloop$. Take any $(t,r)\in Z$. By construction, there is a path $s=t_0R\ldots Rt_n=t$ in $M'$ from $s$ to $t$ of length $n$ for some $n$. 

[atom] Note that $v(r)=\emptyset$ has empty colour so we want to show that $v(t)$ is empty as well. Suppose otherwise, i.e. there is some $p\in v(t)$, then it follows that $M,s\models\Diamond^np$, contradicting our assumption that $(M,s)$ refutes every formula in $\mono$. 

[forth] If $R^Mtt'$, then $t'\in dom(M')$ so $(t',r)\in Z$ by definition of $Z$ and $r$ sees itself. 

[back] If $Rrr'$ then $r'=r$ as $r$ only sees itself. We know that $M,t\not\models\Box p$ otherwise $M,s\models\Diamond^n\Box p$ while we know that $(M,s)$ refutes every formula in $\mono$. Hence $M,t\models\Diamond\neg p$ and thus has some successor $t'\in dom(M')$ for which $(t',r)$ automatically holds.

For the other equivalence, suppose that $M,s\models\varphi$ for all $\varphi\in\mono$. Again, let $M'$ be the  submodel of $M$ generated by $s$ in $M$, and set $Z:=dom(M')\times dom(\fullloop)$, where $dom(\fullloop)=\{r\}$ consists of a single point. We show this is a bisimulation $Z:M',s\bisim\fullloop$. Take any $(t,r)\in Z$. By construction, there is a path $s=t_0R\ldots Rt_n=t$ in $M'$ from $s$ to $t$ of length $n$ for some $n$. 

[atom] Note that $v(r)=\mathrm{Prop}$ has full colour so we want to show that $v(t)$ is full as well. Suppose otherwise, i.e. there is some $p\not\in v(t)$, then it follows that $M,s\not\models\Box^n p$, contradicting our assumption that $(M,s)$ satisfies every formula in $\mono$. 

[forth] If $R^Mtt'$, then $t'\in dom(M')$ so $(t',r)\in Z$ by definition and $r$ sees itself. 

[back] $r$ \textit{only} sees itself so if $Rrr'$ then $r'=r$. Now we know that $M,t\models\Diamond p$ otherwise $M,s\not\models\Box^n\Diamond p$ while we know that $(M,s)$ satisfies every formula in $\mono$. Thus $t$ must have some successor $t'\in dom(M')$ for which $(t',r)$ automatically holds.
\end{proof}

\revision{As is usual in semantic characterisations, we make an appeal to the existence of $\omega$-saturated elementary extension for every first order structure. In particular, for our proof we need \emph{modally saturated} Kripke models. A Kripke model $M$ is said to be modally saturated if for every $t\in dom(M)$ and infinite set of modal formulas $\Phi$, if every finite subset of $\Phi$ is satisfied at some successor of $t$, then there exists a successor of $t'$ of $t$ in $M$ such that $M^*,t\models\varphi$ for all $\varphi\in\Phi$.}

\begin{lemma}\label{lem:omegasaturated}
\revision{For every pointed Kripke model $(M,s)$, there exists a modally saturated pointed Kripke model $(M^*,s^*)$ such that $(M,s)$ and $(M^*,s^*)$ satisfy the same first order formulas with one free variable.}
\end{lemma}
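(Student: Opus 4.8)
The plan is to prove Lemma~\ref{lem:omegasaturated} by reducing it to the standard model-theoretic fact that every first-order structure has an $\omega$-saturated elementary extension, and then verifying that $\omega$-saturation of the extension, once recast in terms of the modal signature, yields modal saturation in the sense defined just above.

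First I would view the pointed Kripke model $(M,s)$ as a first-order structure $\mathfrak{M}$ in the signature with one binary relation symbol $R$ and one unary predicate $P_p$ for each $p\in\Prop$ (this is the standard correspondence mentioned in the excerpt around the Van Benthem Characterisation Theorem~\ref{thm:Johan}). Since $\Prop$ is finite, this is a finite relational signature. By the classical theorem of model theory, $\mathfrak{M}$ has an $\omega$-saturated elementary extension $\mathfrak{M}^*$; let $(M^*,s^*)$ be the corresponding pointed Kripke model, where $s^*$ is simply $s$ regarded as an element of the extension. Being an elementary extension immediately gives that $(M,s)$ and $(M^*,s^*)$ satisfy exactly the same first-order formulas with one free variable (instantiated at $s$ resp. $s^*$), which is the second conjunct of the statement.

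Next I would check that $\omega$-saturation implies modal saturation. Fix $t\in dom(M^*)$ and an infinite set $\Phi$ of modal formulas such that every finite subset of $\Phi$ is satisfiable at some $R$-successor of $t$. Translating via the standard translation $ST_x$, consider the set of first-order formulas $\Gamma(x) = \{R(c_t,x)\}\cup\{ST_x(\varphi)\mid \varphi\in\Phi\}$, where $c_t$ names the element $t$ (working over the expansion of $\mathfrak{M}^*$ by a constant for $t$, which remains $\omega$-saturated since adding one parameter preserves $\omega$-saturation). The hypothesis on finite subsets of $\Phi$ says precisely that $\Gamma(x)$ is a finitely satisfiable type over the parameter $t$; by $\omega$-saturation it is realised by some element $t'$, and $R(c_t,t')$ together with $ST_x(\varphi)[t']$ for all $\varphi\in\Phi$ says exactly that $t'$ is a successor of $t$ satisfying every formula in $\Phi$. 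This establishes modal saturation of $(M^*,s^*)$.

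The only genuine subtlety I expect is bookkeeping around parameters: modal saturation quantifies over each $t\in dom(M^*)$, so the relevant types carry one parameter, and one must either invoke that $\omega$-saturated structures realise all types over finitely many parameters, or note that an $\omega$-saturated structure remains $\omega$-saturated after naming a single element. Neither is difficult, but it is the step where care is needed to match the one-free-variable formulation of the lemma against the parametrised notion of modal saturation. Everything else is a direct translation between the modal and first-order languages, so the main work is simply citing the existence of $\omega$-saturated elementary extensions and unwinding the definitions.
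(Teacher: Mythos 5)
Your proposal is correct and follows essentially the same route as the paper: take an $\omega$-saturated elementary extension of the first-order structure corresponding to $(M,s)$, and then verify via the standard translation, with a constant naming the relevant state, that $\omega$-saturation yields modal saturation. If anything, your bookkeeping is slightly more careful than the paper's own writeup, which checks the saturation condition with the constant $c$ interpreted as $s^*$ rather than at an arbitrary $t\in dom(M^*)$ as the definition of modal saturation requires, and which leaves implicit the step where the finite-satisfiability hypothesis on $\Phi$ is used.
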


\begin{proof}
\revision{
This is well known, and follows from classic results in model theory, 
since $\omega$-saturation implies modal saturation. For the sake of 
completeness, we spell out the details.

Let a pointed Kripke model $(M,s)$ be given. Identifying Kripke models with first order structure over the appropriate signature, we use a well-known result from model theory that says that every first order structure has an $\omega$-saturated elementary extension (see \cite{ChangKeisler}). It follows that we obtain an $\omega$-saturated first order structure, which corresponds to a Kripke model $M^*$, together with a map $f:dom(M)\to dom(M^*)$ such that $M\models\beta(t)$ iff $M^*\models\beta(f(t))$ for all $t\in dom(M)$ 
\rrevision{and for all
first-order formulas $\beta(x)$ in one free variable.}
We set $s^*:=f(s)$. 

To see that $M^*$ is modally saturated, let $\Phi$ be an infinite set of modal formulas and consider the infinite set of first order formulas $\Phi':=\{ST_x(\varphi)\:|\;\varphi\in\Phi\}\cup\{Rcx\}$ over the corresponding first order signature extended with a constant $c$ which is interpreted as the state $s^*$. Using the fact that the first order structure corresponding to $M^*$ is $\omega$-saturated, it follows that $\Phi'$ (all whose formulas contain a single free variable $x$) is satisfied at some state $u\in dom(M^*)$, so $M^*,u\models\varphi(x)$ for all $\varphi\in\Phi$. But as $c$ is interpreted as $s^*$, it follows that $u$ is a successor of $s^*$. This shows that $M^*$ is modally saturated.}
\end{proof}

\begin{theorem}
\label{thm:wsim-characterization}
A first order formula \revision{$\alpha(x)$} (in the signature with one binary relation and finitely many unary relations) is preserved under weak simulations iff it is equivalent to the standard translation of a formula in $\mono\cup\{\top,\bot\}$.
\end{theorem}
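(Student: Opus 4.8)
The plan is to prove a van Benthem–style characterisation theorem following the standard template for such results, adapted to handle the escape clauses of weak simulations and the auxiliary constants $\top,\bot$. The right-to-left direction is immediate: Theorem~\ref{thm:wSimPreservation} shows every $\mono$-formula is preserved under weak simulations, and $\top,\bot$ are preserved under every relation, so the standard translation of any formula in $\mono\cup\{\top,\bot\}$ is preserved under weak simulations.

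For the left-to-right direction, I would fix a first-order formula $\alpha(x)$ that is preserved under weak simulations. Handle the degenerate cases first: if $\alpha$ is equivalent to $\top$ or to $\bot$ we are done, so assume $\alpha$ is satisfiable and falsifiable. Let $\mathit{wsc}(\alpha)=\{\psi\in\mono \mid \alpha\models ST_x(\psi)\}$ be the set of $\mono$-consequences of $\alpha$. The goal is to show $\mathit{wsc}(\alpha)\models\alpha$ (semantically, over pointed models), since by compactness this yields a finite conjunction $\bigwedge\Gamma$ with $\alpha\equiv ST_x(\bigwedge\Gamma)$, and $\bigwedge\Gamma\in\mono$. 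So I would take a pointed model $(N,t)$ with $N,t\models\mathit{wsc}(\alpha)$ and aim to produce $N,t\models\alpha$.

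The core argument is the usual detour through $\omega$-saturated models. Using Lemma~\ref{lem:omegasaturated}, pass to modally saturated elementary extensions $(M^*,s^*)$ of a model of $\alpha$ and $(N^*,t^*)$ of $(N,t)$. The key step is to build a weak simulation $Z:M^*,s^*\simu N^*,t^*$. I would define $Z$ to relate $(u,u')$ whenever every $\mono$-formula true at $u$ is also true at $u'$, i.e. the $\mono$-type of $u$ is contained in that of $u'$; the pair $(s^*,t^*)$ is in $Z$ because $t^*$ satisfies all $\mono$-consequences of $\alpha$, which includes every $\mono$-formula true at $s^*$. The [atom] clause is immediate. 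For the escape clauses, here is where Lemma~\ref{lem:emptytypemeansemptyloop} does essential work: in verifying [forth'], if a successor $u$ of the left node has no $\mono$-type to match, that forces $M^*,u\bisim\emptyloop$ (its $\mono$-type is empty), so the escape clause applies; otherwise modal saturation on the right delivers a matching successor, exactly as in the classical simulation case. The [back'] clause is dual, using the $\fullloop$ half of Lemma~\ref{lem:emptytypemeansemptyloop}. Then preservation of $\alpha$ under $Z$ gives $N^*,t^*\models\alpha$, and transferring back along the elementary extensions yields $N,t\models\alpha$.

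The main obstacle I expect is verifying the escape clauses correctly, since this is precisely where weak simulations diverge from ordinary simulations. In the classical Kurtonina–de Rijke proof one uses modal saturation directly to match successors; here the saturation argument can fail exactly at successors whose $\mono$-type is empty (respectively full), and one must recognise that Lemma~\ref{lem:emptytypemeansemptyloop} guarantees such successors are bisimilar to $\emptyloop$ (respectively $\fullloop$), so that the escape clauses are available rather than being a genuine defect in the back-and-forth. A secondary subtlety is the initial case split on whether $\alpha$ is equivalent to $\top$ or $\bot$: these must be treated separately precisely because $\mono$ itself cannot express them, which is why the theorem is stated for $\mono\cup\{\top,\bot\}$ rather than for $\mono$ alone.
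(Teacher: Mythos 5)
Your overall architecture matches the paper's: collect the $\mono$-consequences of $\alpha$, reduce to an entailment via compactness, and establish that entailment by building a type-inclusion weak simulation between modally saturated models (Lemma~\ref{lem:omegasaturated}), with Lemma~\ref{lem:emptytypemeansemptyloop} licensing the escape clauses exactly where saturation-based matching breaks down. That part of your plan, including the identification of the escape clauses as the point of divergence from Kurtonina--de Rijke, is sound.

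However, there is a genuine gap at the very first step of the simulation construction. You take saturated extensions of ``a model of $\alpha$'' and of $(N,t)$, and justify $(s^*,t^*)\in Z$ by saying that $t^*$ satisfies all $\mono$-consequences of $\alpha$, ``which includes every $\mono$-formula true at $s^*$.'' This containment is backwards: the $\mono$-formulas true at a \emph{particular} model of $\alpha$ form, in general, a proper superset of the $\mono$-consequences of $\alpha$, not a subset. Concretely, let $\alpha=ST_x(p\vee q)$, let $(N,t)$ satisfy $q$ but not $p$, and pick as your model of $\alpha$ one satisfying $p$ but not $q$; then $p$ is true at $s^*$ and false at $t^*$, so the root pair fails the type inclusion and the back-and-forth never starts. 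What is missing is the ``existence lemma'': one must construct a model of $\alpha$ whose $\mono$-type is contained in that of $(N,t)$. The paper does this with a separate compactness argument, showing that the theory $\{\alpha\}\cup\{\neg ST_x(\psi)\mid\psi\in\mono,\;N,t\not\models\psi\}$ is satisfiable: if not, then $\alpha\models ST_x(\psi_1\vee\ldots\vee\psi_n)$ for $\mono$-formulas $\psi_i$ all false at $t$, and since $\mono$ is closed under disjunction, $\psi_1\vee\ldots\vee\psi_n$ is a consequence of $\alpha$ that $(N,t)$ would have to satisfy --- a contradiction (the degenerate case $n=0$ yields $\alpha\equiv\bot$, which you excluded). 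Only a model of this theory, not an arbitrary model of $\alpha$, can play the role of your $(M,s)$; with that repair, the rest of your argument goes through as in the paper.
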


\begin{proof}
We already established the right-to-left direction (Theorem~\ref{thm:wSimPreservation}). For the left-to-right direction, let $\alpha(x)$ be some weak simulation-preserved first order formula, and set $\text{mon-con}(\alpha):=\{ST_x(\psi)\mid \psi\in\mono\cup\{\top,\bot\}\;\&\;\alpha\models ST_x(\psi)\}$ \revision{(where ``$\text{mon-con}$'' stands for monotone consequences)}.
The argument follows essentially the same line as in the characterisation for $\pos$ in \cite{KurtoninaDeRijke}. Clearly $\alpha$ implies every formula in $\text{mon-con}(\alpha)$ by definition. We will show that also $\text{mon-con}(\alpha)\models\alpha$. From this, it follows from a compactness argument that there are $\psi_1,\ldots,\psi_n\in\mono\cup\{\top,\bot\}$ such that $ST_x(\psi_1\wedge\ldots\wedge\psi_n)\equiv\alpha$. Note that in case $\alpha=\top$, $n=0$ but in that case $\psi_1\wedge\ldots\wedge\psi_n$ is the empty conjunction which is $\top$ and $ST_x(\top):=\top\in\mono\cup\{\top,\bot\}$. To show this entailment, let $M,s\models\text{mon-con}(\alpha)$,
\rrevision{in other words, let $M$ be any structure satisfying the set of first-order formulas $\text{mon-con}(\alpha)$ under the assignment that sends $x$ to $s$.}

\rrevision{
Define the set of first order formulas $\Phi:=\{\neg ST_x(\psi)\mid \psi\in\mono \text{ and } M,s\models\neg\psi\}\cup\{\alpha\}$. By a standard model-theoretic argument, $\Phi$ is satisfiable. For suppose otherwise, then by compactness there are $\varphi_1,\ldots,\varphi_n\in\mono$ such that $M,s\not\models\varphi_i$ (for all $i=1\ldots n$) and such that $\{\alpha,\neg ST_x(\varphi_1),\ldots$ $,\neg ST_x(\varphi_n)\}$ is inconsistent.} Since $ST_x(\cdot)$ commutes with the Boolean connectives, it follows that $\alpha\models ST_x(\varphi_1\vee\ldots\vee\varphi_n)$, where thus $(\varphi_1\vee\ldots\vee\varphi_n)\in\text{mon-con}(\alpha)$. Note that it might be that $n=0$ in which case $\varphi_1\vee\ldots\vee\varphi_n$ is the empty disjunction which is $\bot$ and we set $ST_x(\bot)$ to be $\bot$. But then $\alpha\models\bot$ so $\alpha\equiv\bot$.

Since $M,s\models\text{mon-con}(\alpha)$ so $M,s\models ST_x(\varphi_i)$ for some $1\leq i\leq n$. On the other hand, we had $M,s\not\models ST_x(\varphi_i)$ since we supposed that $ST_x(\varphi_i)\not\in\text{mon-tp}_{M}(s)$; a contradiction. We have shown that $\Phi$ is a satisfiable first order theory in a single free variable $x$, so there exists a first order structure $N$ and an element $u\in dom(N)$ such that $N\models\beta(u)$ for all $\beta\in\Phi$. This induces a pointed Kripke model that we also denote by $(N,u)$. 

Let $\text{mon-tp}_M(s):=\{ST_x(\psi)\mid \psi\in\mono\;\&\;M,s\models \psi\}$ \revision{(where ``$\text{mon-tp}$'' stands for monotone type)}. We show that $\text{mon-tp}_N(u)\subseteq\text{mon-tp}_M(s)$. \revision{We argue via contraposition, so consider some $\chi\in\mono$ such that $ST_x(\chi)\not\in\text{mon-tp}_M(s)$. This means that $M,s\models\neg ST_x(\chi)$, so by definition $\neg ST_x(\chi)\in\Phi$. But we know that $N,u$ satisfies the whole theory $\Phi$ so in particular $N,u\models\neg ST_x(\chi)$ whence $\chi\not\in\text{mon-tp}_N(u)$. By Lemma~\ref{lem:omegasaturated} there exists modally saturated pointed Kripke models $(M^*,s^*)$ and $(N^*,u^*)$ satisfying the same modal formulas as $(M,s)$ and $(N,u)$ respectively.} Define the relation:
\[Z:=\{(v,t)\in dom(N^*)\times dom(M^*)\mid \text{mon-tp}_{N^*}(v)\subseteq \text{mon-tp}_{M^*}(t)\}\]
We show that $Z:N^*,u^*\simu M^*,s^*$. First of all, observe that $\text{mon-tp}_{N^*}(u^*)=\text{mon-tp}_{N}(u)\subseteq\text{mon-tp}_{M}(s)=\text{mon-tp}_{M^*}(s^*)$ and thus $(u^*,s^*)\in Z$. 

[atom] This holds trivially since $N^*,v\models p$ means $ST_x(p)\in\text{mon-tp}_{N^*}(v)\subseteq\text{mon-tp}_{M^*}(t)$ so that $M^*,t\models p$ as well. 

[forth'] Let $R^{N^*}vv'$. We consider two cases: (i) $\text{mon-tp}_{N^*}(v')=\emptyset$ or (ii) $\text{mon-tp}_{N^*}(v')\ne\emptyset$. In the latter case (ii), it is easy to see that every finite subset of $\text{mon-tp}_{N^*}(v')$ is satisfiable at some successor of $t$ in $M^*$. For let $ST_x(\varphi_1),\ldots ST_x(\varphi_n)\in\text{mon-tp}_{N^*}(v')$. Then $N,v\models ST_x(\Diamond(\varphi_1\wedge\ldots\wedge\varphi_n))$ and thus $M^*,t\models ST_x(\Diamond(\varphi_1\wedge\ldots\wedge\varphi_n))$ since $(v,t)\in Z$ and thus $\text{mon-tp}_{N^*}(v)\subseteq\text{mon-tp}_{M^*}(t)$. Since $M^*$ is modally saturated, there must be some successor $t'\in R^{M^*}[t]$ that satisfies $\text{mon-tp}_{N^*}(v')$, so $\text{mon-tp}_{N^*}(v')\subseteq\text{mon-tp}_{M^*}(t')$ and therefore $(v',t')\in Z$. If (i) then by Lemma \ref{lem:emptytypemeansemptyloop} we know that $N^*,v'\bisim\emptyloop$.

[back'] \revision{Let $R^{M^*}tt'$. Define \rrevision{$\text{ant-tp}_{M}(s):=\{\neg ST_x(\psi)\mid \psi\in\mono,\;M,s\models\neg\psi\}$}. We will consider again two cases: (i) $\text{ant-tp}_{M^*}(t')=\emptyset$ or (ii) $\text{ant-tp}_{M^*}(t')\ne\emptyset$. In the latter case (ii), it is easy to see that every finite subset of $\text{ant-tp}_{M^*}(t')$ is satisfied at some successor of $v$ in $N^*$. For let $\neg ST_x(\varphi_1),\ldots, \neg ST_x(\varphi_n)\in\text{ant-tp}_{M^*}(t')$. It follows that $M^*,t\models ST_x(\Diamond(\neg\varphi_1\wedge\ldots\wedge\neg\varphi_n))$ so as $\neg ST_x(\Diamond(\neg\varphi_1\wedge\ldots\wedge\neg\varphi_n))\equiv\Box(\varphi_1\vee\ldots\vee\varphi_n)$ and the latter formula is in $\mono$ we have that $ST_x(\Box(\varphi_1\vee\ldots\vee\varphi_n))\not\in\text{mon-tp}_{M^*}(t)$. But as $(v,t)\in Z$ we have $\text{mon-tp}_{N^*}(v)\subseteq\text{mon-tp}_{M^*}(t)$ so also $ST_x(\Box(\varphi_1\vee\ldots\vee\varphi_n))\not\in\text{mon-tp}_{M^*}(v)$. By the same reasoning again we get $N^*,v\models ST_x(\Diamond(\neg\varphi_1\wedge\ldots\wedge\neg\varphi_n))$ which means that $\{\neg ST_x(\varphi_1),\ldots,\neg ST_x(\varphi_n)\}$ is satisfiable at some successor of $v$ in $N^*$. Hence, by modal saturation of $N^*$ there exists some successor $v'\in R^{N^*}[v]$ that satisfies $\text{ant-tp}_{M^*}(t')$. Finally, we show that $\text{mon-tp}_{N^*}(v')\subseteq\text{mon-tp}_{M^*}(t')$, which shows that $(v',t')\in Z$. We argue by contraposition, so let $ST_x(\varphi)\not\in\text{mon-tp}_{M^*}(t')$. By definition of $\text{ant-tp}$ we get $\neg ST_x(\varphi)\in\text{ant-tp}_{M^*}(t')$, and since $N^*,v'$ satisfies $\text{ant-tp}_{M^*}(t')$ it follows that $\neg ST_x(\varphi)\in\text{ant-tp}_{N^*}(v')$ as well. But then by definition of $\text{ant-tp}$ again we have that $ST_x(\varphi)\in\text{mon-tp}_{N^*}(v')$.}

\rrevision{
We conclude by chasing the diagram: 
by construction, $N,u\models\alpha$. Hence,
since $(N,u)$ and $(N^*,u^*)$ agree on all
first-order formulas in one free variable, $N^*,u^*\models\alpha$.
Therefore, by preservation under weak simulations,
$M^*,s^*\models\alpha$, and therefore,
since $(M,s)$ and $(M^*,s^*)$ agree on all
first-order formulas in one free variable, $M,s\models\alpha$.}
\end{proof}

\begin{remark}
Rosen~\cite{Rosen1997:modal} showed that the classic Van Benthem characterization of modal logic in terms of
bisimulation-preservation holds also in the finite, i.e., when restricted to finite structures.
We leave it as an open problem whether Theorem~\ref{thm:wsim-characterization} holds in the finite.
\end{remark}

\begin{remark}
An immediate corollary  of Lemma~\ref{lem:emptyloopsatall} is that the syntactic restriction to $\mono$ thus trivialises the problems of determining satisfiability and validity. That is, all $\mono$ formulas are satisfiable, and none of them is valid. Although this makes the satisfiability problem for $\mono$ trivial, checking entailment for $\mono$ is still PSPACE-complete like the corresponding problem for the full modal language. \revision{This can be shown by adapting an argument from \cite[Theorem 2.2.6 on p.~38]{kerdiles} to the modal case,
which involves substituting fresh positive propositional variables for negated atoms (after bringing the formula in negation normal form), and rewriting the resulting formula accordingly so as not to contain negation.} This is one instance of the more general phenomenon that more fine-grained structure of traditional reasoning problems emerges from the study of language-fragments.
\end{remark}

\subsection{The Category of Pointed Models and Weak Simulations}
\label{sec:category}

In this section, we study weak simulations from 
a category-theoretic perspective. We start by 
showing that pointed models \rrevision{with} 
weak simulations indeed form a category. Recall 
that a category consists of a collection of \emph{objects} and a collection of 
\emph{morphisms}, together a \emph{composition} operation
on morphisms that satisfies suitable axioms (namely,
associativity and the existence of ``identity'' morphisms). In our case, the objects are pointed models and the morphisms are weak simulations.
Furthermore, our composition operation will be ordinary \emph{relational composition}.
Thus, we need to show that the relational composition of two weak simulations is a weak simulation, and we need to show that every pointed model has a weak simulation to itself that acts as an identity morphism.

The following lemma will help us to prove closure under relational composition.


\begin{lemma}\label{lem:bisimemptyloop} ~
\begin{itemize}
    \item If $M,s\simu M',s'$ and $M',s'\bisim\emptyloop$, then $M,s\bisim\emptyloop$. 
    \item If $M,s\bisim\fullloop$ and $M,s\simu M',s'$ then $M',s'\bisim\fullloop$.
\end{itemize}
\end{lemma}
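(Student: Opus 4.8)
The plan is to prove each of the two bullet points by exhibiting an explicit bisimulation, using the characterisation from Lemma~\ref{lem:emptytypemeansemptyloop} to reduce the bisimulation claim to a claim about satisfaction of $\mono$-formulas, and then transferring that claim along the weak simulation via Theorem~\ref{thm:wSimPreservation} (preservation under weak simulations). Concretely, for the first bullet, I would argue as follows. Suppose $M,s\simu M',s'$ and $M',s'\bisim\emptyloop$. By Lemma~\ref{lem:emptytypemeansemptyloop}, the hypothesis $M',s'\bisim\emptyloop$ says exactly that $M',s'\not\models\varphi$ for every $\varphi\in\mono$. I want to conclude $M,s\bisim\emptyloop$, which by the same lemma amounts to showing $M,s\not\models\varphi$ for every $\varphi\in\mono$.

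The key step is to run preservation in the contrapositive direction. If there were some $\varphi\in\mono$ with $M,s\models\varphi$, then by Theorem~\ref{thm:wSimPreservation} (since $M,s\simu M',s'$ and $\varphi$ is preserved under weak simulations) we would get $M',s'\models\varphi$, contradicting that $M',s'$ refutes every $\mono$-formula. Hence $M,s\not\models\varphi$ for all $\varphi\in\mono$, and so $M,s\bisim\emptyloop$ by the right-to-left direction of Lemma~\ref{lem:emptytypemeansemptyloop}. This is clean because $\emptyloop$ is exactly the bisimulation type that satisfies no $\mono$-formula, so weak simulations flowing \emph{into} a model whose type is empty on the target side force the source to have empty type as well.

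The second bullet is entirely dual. Suppose $M,s\bisim\fullloop$ and $M,s\simu M',s'$. By Lemma~\ref{lem:emptytypemeansemptyloop}, $M,s\bisim\fullloop$ means $M,s\models\varphi$ for every $\varphi\in\mono$. For any $\varphi\in\mono$, preservation under weak simulations (Theorem~\ref{thm:wSimPreservation}) together with $M,s\models\varphi$ and $M,s\simu M',s'$ yields $M',s'\models\varphi$. Thus $M',s'\models\varphi$ for every $\varphi\in\mono$, which by the right-to-left direction of Lemma~\ref{lem:emptytypemeansemptyloop} gives $M',s'\bisim\fullloop$, as required.

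I do not expect any serious obstacle here: the whole argument is a two-line consequence of combining the semantic characterisation of the two loopstates (Lemma~\ref{lem:emptytypemeansemptyloop}) with preservation under weak simulations (Theorem~\ref{thm:wSimPreservation}). The only point to be careful about is the \emph{direction} of the weak simulation relative to the direction of the $\mono$-formula transfer: in both bullets the simulation goes from $M,s$ to $M',s'$, and $\mono$-truth is transported forwards along it, so the empty-type model must be on the target side (first bullet) and the full-type model on the source side (second bullet), which matches the hypotheses exactly. An alternative, more self-contained proof would construct the bisimulations directly by mimicking the argument of Lemma~\ref{lem:emptytypemeansemptyloop}, but routing everything through the two already-established results is shorter and avoids re-deriving the escape-clause bookkeeping.
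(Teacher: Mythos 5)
Your proof is correct, and it is genuinely different from the one in the paper. The paper's proof is a direct relational construction: it passes to point-generated submodels, invokes Lemma~\ref{lem:pointgeneratedbisimtotal} to conclude that the bisimulation $Z':M',s'\bisim\emptyloop$ is total, and then verifies by hand that $Z^*:=dom(M)\times dom(\emptyloop)$ satisfies the [atom], [forth] and [back] clauses, using the weak simulation $Z$ to push paths forward and compare colourings. Your proof instead routes everything through the logical layer: Lemma~\ref{lem:emptytypemeansemptyloop} converts the bisimulation claims into statements about which $\mono$-formulas are satisfied, and Theorem~\ref{thm:wSimPreservation} transports those statements along the weak simulation (contrapositively in the first bullet, directly in the second). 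This is legitimate and non-circular, since both of those results are established before Lemma~\ref{lem:bisimemptyloop} in the paper and neither depends on it (preservation is proved by formula induction, and Lemma~\ref{lem:emptytypemeansemptyloop} by an explicit bisimulation construction). What your route buys is brevity and modularity; what the paper's route buys is self-containedness at the purely relational level --- it never quantifies over the infinite set of $\mono$-formulas, and it produces the witnessing bisimulation explicitly, which is in the same spirit as the escape-clause bookkeeping used immediately afterwards in the composition theorem. Your observation about the direction of transfer (empty type on the target side, full type on the source side) is exactly the point where a careless dualization would go wrong, and you handled it correctly.
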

\begin{proof}
Let $Z:M,s\simu M',s'$ and $Z':M',s'\bisim\emptyloop$. Without loss of generality (i.e. up to bisimilarity), assume that $M=M\restriction s$ and $M'=M'\restriction s'$. By Lemma~\ref{lem:pointgeneratedbisimtotal} it must be that $Z'$ is total, i.e. $Z'=dom(M')\times dom(\emptyloop)$. Let $Z^*:=dom(M)\times dom(\emptyloop)$; we show that this is a bisimulation between $(M,s)$ and $\emptyloop$. So let $(t,r)\in Z^*$ be arbitrary. 

[atom] Since $M=M\restriction s$, there is a path $(t_0,s\ldots,t_n)$ in $M$ with $t_0=s$ and $t_n=t$. Using the [forth] clause of $Z$ repeatedly, it follows that there is a path $(t'_0,\ldots,t'_n)$ in $M'$ with $t_0=t$ and $(t_i,t'_i)\in Z$ for all $i\leq n$. Also, the fact that $Z'$ is a bisimulation it follows that $v^{M'}(t')=\emptyset$ for all $t'\in dom(M')$, so in particular $v^{M'}(t'_n)=\emptyset$. But as $Z$ is a weak simulation $v^M(t)\subseteq v^{M'}(t'_n)$ so $v^{M}=\emptyset$ as well.

[forth] If $R^Mtt'$ then clearly $(t',r)\in Z^*$ by definition and $r$ sees itself. 

[back] Note that $r$ only sees itself so $Rrr'$ implies $r'=r$. We show that there is some successor $t'$ of $t$ such that $(t',r)\in Z^*$. By definition of $Z^*$, any successor would suffice. But note that $M\restriction t,t\bisim M,t\bisim\emptyloop$ hence $t$ must have some successor.

\rrevision{The proof of the second item is omitted as it is analogous.}
\end{proof}

\begin{theorem}
Weak simulations are closed under relational composition: if $Z_1:M_0,s_0\simu M_1,s_1$ and $Z_2:M_1,s_1\simu M_2,s_2$,
then $Z_1\circ Z_2:M_0,s_0\simu M_2,s_2$.
\end{theorem}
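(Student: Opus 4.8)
The plan is to verify the three weak-simulation clauses directly for the composed relation $Z_1 \circ Z_2$, taking an arbitrary pair $(t_0, t_2) \in Z_1 \circ Z_2$, which by definition of relational composition means there exists some intermediate $t_1 \in dom(M_1)$ with $(t_0,t_1) \in Z_1$ and $(t_1,t_2) \in Z_2$. I would then treat the \textbf{(atom)} clause, both \textbf{(forth')} cases, and both \textbf{(back')} cases in turn, chaining the corresponding clauses of $Z_1$ and $Z_2$ through the witness $t_1$.

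The \textbf{(atom)} clause is immediate: from $(t_0,t_1)\in Z_1$ we get $v^{M_0}(t_0)\subseteq v^{M_1}(t_1)$, and from $(t_1,t_2)\in Z_2$ we get $v^{M_1}(t_1)\subseteq v^{M_2}(t_2)$, so transitivity of $\subseteq$ gives $v^{M_0}(t_0)\subseteq v^{M_2}(t_2)$. For the \textbf{(forth')} clause, suppose $R^{M_0}t_0 u_0$. Applying [forth'] of $Z_1$, either $M_0,u_0\bisim\emptyloop$ — in which case we are immediately done via the escape clause for $Z_1\circ Z_2$ — or there is $u_1$ with $R^{M_1}t_1 u_1$ and $(u_0,u_1)\in Z_1$. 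In the latter case, apply [forth'] of $Z_2$ to the edge $R^{M_1}t_1 u_1$: either $M_1,u_1\bisim\emptyloop$, or there is $u_2$ with $R^{M_2}t_2 u_2$ and $(u_1,u_2)\in Z_2$. In the second subcase we simply compose to get $(u_0,u_2)\in Z_1\circ Z_2$, as required. The \textbf{(back')} clause is handled symmetrically: starting from $R^{M_2}t_2 u_2$, apply [back'] of $Z_2$ and then [back'] of $Z_1$, this time with the escape clauses referring to $\fullloop$.

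\textbf{The main obstacle} is the mixed subcase in each of [forth'] and [back'], where one simulation supplies a genuine matching successor but the \emph{other} invokes its escape clause. Concretely, in [forth'] this is the subcase where $u_1$ exists with $(u_0,u_1)\in Z_1$ but $M_1,u_1\bisim\emptyloop$; I need to conclude $M_0,u_0\bisim\emptyloop$ so that the escape clause for the composite holds. This is exactly where Lemma~\ref{lem:bisimemptyloop} is used: from $Z_1\restriction$ giving a weak simulation $M_0,u_0\simu M_1,u_1$ (restricting $Z_1$ to the submodels generated by $u_0$ and $u_1$) together with $M_1,u_1\bisim\emptyloop$, the first item of Lemma~\ref{lem:bisimemptyloop} yields $M_0,u_0\bisim\emptyloop$. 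Dually, in the [back'] clause the problematic subcase uses the second item of Lemma~\ref{lem:bisimemptyloop}: from $M_1,u_1\bisim\fullloop$ and a weak simulation $M_1,u_1\simu M_2,u_2$ we obtain $M_2,u_2\bisim\fullloop$, validating the escape clause for the composite. I would note that restricting $Z_1$ (or $Z_2$) to the point-generated submodels still yields a weak simulation between those restrictions, which is the small technical point that lets Lemma~\ref{lem:bisimemptyloop} apply. With these escape-clause transfers in hand, all remaining subcases are routine chaining, and we conclude that $Z_1\circ Z_2$ is a weak simulation.
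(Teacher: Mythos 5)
Your proof is correct and takes essentially the same route as the paper's: verify the three clauses for $Z_1\circ Z_2$ by chaining through the intermediate witness $t_1$, and handle the mixed subcases (where one simulation produces a successor but the other invokes its escape clause) by transferring bisimilarity with $\emptyloop$ or $\fullloop$ via Lemma~\ref{lem:bisimemptyloop}. Your extra remark about restricting to point-generated submodels is harmless but unnecessary, since $Z_1$ (resp.\ $Z_2$) already witnesses the weak simulation between the repointed models under Definition~\ref{def:wSims}, and the proof of Lemma~\ref{lem:bisimemptyloop} performs that restriction internally anyway.
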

\begin{proof}
Let $(t_0,t_2)\in Z_1\circ Z_2$. Then there is some $t_1$ with $(t_0,t_1)\in Z_1$ and $(t_1,t_2)\in Z_2$. 

[atom] Clearly $v(t_0)\subseteq v(t_1)\subseteq v(t_2)$ by the [atom] clauses of $Z_1$ and $Z_2$. 

[forth'] If $Rt_0u_0$, by [forth'] of $Z_1$ either $M_0,u_0\bisim\emptyloop$ or there is a matching successor $u_1$ of $t_1$ with $(u_0,u_1)\in Z_1$. In the former case, we can use the escape clause for the pair $(t_0,t_2)$ and we are done. Else, we apply the [forth'] clause on $u_1$ and get that either $M_1,u_1\bisim\emptyloop$ or there is a successor $u_2$ of $t_2$ with $(u_1,u_2)\in Z_2$ and hence $(u_0,u_2)\in Z_1\circ Z_2$. In the latter case, we would be done. In the former case we have $M_0,u_0\simu M_1,u_1\bisim\emptyloop$ so by lemma \ref{lem:bisimemptyloop} $M_0,u_0\bisim\emptyloop$ and we can use the escape clause on $(t_0,t_2)\in Z_1\circ Z_2$. 

[back'] If $Rt_2u_2$ by [back'] of $Z_2$ either $M_2,u_2\bisim\fullloop$ or there is a matching successor $u_1$ of $t_1$ with $(u_1,u_2)\in Z_2$. In the former case, we can use the escape clause for the pair $(t_0,t_2)$ and we are done. Else, we apply the [back'] clause on $u_1$ and get that either $M_1,u_1\bisim\fullloop$ or there is a successor $u_0$ of $t_0$ with $(u_0,u_1)\in Z_1$ and hence $(u_0,u_2)\in Z_1\circ Z_2$. In the latter case, we would be done. In the former case we have $\fullloop\bisim M_1,u_1\simu M_2,u_2$ so by lemma \ref{lem:bisimemptyloop} $M_2,u_2\bisim\fullloop$ and we can use the escape clause on $(t_0,t_2)\in Z_1\circ Z_2$.
\end{proof}

Since our composition operation is ordinary
relational composition, it is clearly
an associative operation. 
%
%
%
It rests therefore only to observe that the ``diagonal'' relation $1_{(M,s)}=\{(t,t)\mid t\in dom(M)\}$ is indeed a weak simulation that acts as an identity morphism, for each pointed model $(M,s)$.

\begin{corollary}
The class of pointed models, with weak simulations as morphisms, forms a category $\mathbf{wSim}[\Prop]$.
\end{corollary}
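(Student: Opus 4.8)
The plan is to verify the three category axioms directly, two of which have already been established in the preceding development. The associativity of composition is immediate, since morphisms are relations and composition is ordinary relational composition, which is well known to be associative. Closure of weak simulations under relational composition is exactly the content of the theorem proved just above, so that axiom is in hand. The only remaining obligation is to exhibit identity morphisms and confirm they behave correctly under composition.

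First I would take, for each pointed model $(M,s)$, the diagonal relation $1_{(M,s)} = \{(t,t)\mid t\in dom(M)\}$ and check that it is a weak simulation $M,s\simu M,s$ in the sense of Definition~\ref{def:wSims}. This is a routine verification: the [atom] clause holds because $v^M(t)\subseteq v^M(t)$ trivially; the [forth'] clause holds because any successor $u$ of $t$ can be matched with itself, since $(u,u)\in 1_{(M,s)}$ (so one never even needs the escape clause); and [back'] holds symmetrically for the same reason. Note that $(s,s)\in 1_{(M,s)}$ so the base condition linking the distinguished points is met as well.

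Next I would confirm the identity laws, namely that $1_{(M,s)}\circ Z = Z$ and $Z\circ 1_{(M',s')} = Z$ for every weak simulation $Z:M,s\simu M',s'$. These are purely set-theoretic facts about relational composition with a diagonal relation and require no appeal to the simulation conditions: composing on either side with the identity relation returns the original relation unchanged. Hence $1_{(M,s)}$ genuinely acts as a two-sided identity for each object.

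I do not anticipate a genuine obstacle here, as this corollary is essentially bookkeeping that collects the already-proved closure theorem together with the trivial facts about the diagonal relation. The only point requiring the slightest care is confirming that the diagonal passes the escape-clause-augmented [forth'] and [back'] conditions rather than merely the ordinary simulation conditions; but since every successor is matched with itself, the escape clauses are never invoked and the verification goes through immediately. With associativity, closure under composition, and the identity laws all in place, the three category axioms are satisfied and $\mathbf{wSim}[\Prop]$ is a category.
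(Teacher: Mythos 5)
Your proposal is correct and follows essentially the same route as the paper: it invokes the already-proved closure of weak simulations under relational composition, notes that associativity is automatic for relational composition, and observes that the diagonal relation is a weak simulation serving as a two-sided identity. The only difference is that you spell out the (routine) verification of the diagonal's [atom], [forth'], and [back'] clauses and the identity laws, which the paper leaves as an observation.
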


Next, we will study the basic category-theoretic properties of the category $\mathbf{wSim}[\Prop]$. A weakly initial object in a category is a not necessarily unique (up to isomorphism) object that has a morphism into every object in that category. Dually, a weakly final object is a not necessarily unique (up to isomorphism) object such that every object in the category has a morphism into this object.

\begin{theorem}
The category $\mathbf{wSim}[\Prop]$ has a weakly initial object, namely $\emptyloop$ and a weakly final object, namely $\fullloop$.
\end{theorem}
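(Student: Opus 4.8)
The plan is to recognise that this theorem is simply a category-theoretic reformulation of Lemma~\ref{lem:Winitialemptyloop}, so that the proof amounts to unwinding the definitions of ``weakly initial'' and ``weakly final'' and matching them against the directional conventions for weak simulations. First I would recall that, by definition of the category $\mathbf{wSim}[\Prop]$, a morphism from an object $(M,s)$ to an object $(M',s')$ is precisely a weak simulation $Z : M,s \simu M',s'$, and that, per the convention fixed in Definition~\ref{def:wSims}, writing $M,s \simu M',s'$ is the same as saying that $(M',s')$ weakly simulates $(M,s)$.

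For weak initiality of $\emptyloop$, I would need to exhibit, for each pointed model $(M,s)$, a morphism $\emptyloop \to (M,s)$, that is, a weak simulation $\emptyloop \simu M,s$. This is exactly the content of the first half of Lemma~\ref{lem:Winitialemptyloop}, which asserts that every pointed model weakly simulates $\emptyloop$. Dually, for weak finality of $\fullloop$, I would need for each $(M,s)$ a morphism $(M,s) \to \fullloop$, i.e.\ a weak simulation $M,s \simu \fullloop$; this is the second half of Lemma~\ref{lem:Winitialemptyloop}, which states that every pointed model is weakly simulated by $\fullloop$.

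Since the required morphisms are supplied directly by Lemma~\ref{lem:Winitialemptyloop}, there is no substantive obstacle to overcome. The only point requiring care is keeping the direction conventions straight: weak simulations are asymmetric, and it is easy to conflate ``$A$ weakly simulates $B$'' with ``$B$ weakly simulates $A$''. Once the convention of Definition~\ref{def:wSims} is fixed, both halves follow immediately, with $\emptyloop$ sitting at the source of a morphism into every object and $\fullloop$ receiving a morphism from every object.
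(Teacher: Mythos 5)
Your proof is correct and takes essentially the same approach as the paper, which likewise derives the theorem immediately from Lemma~\ref{lem:Winitialemptyloop} after unwinding the definitions (the paper additionally remarks, as an aside, that the morphisms are not unique, which is why the objects are only \emph{weakly} initial/final). Your care with the directional conventions of Definition~\ref{def:wSims} is exactly the right point to emphasise.
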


\begin{proof}
Follows immediately from Lemma~\ref{lem:Winitialemptyloop}.
Note that there is not a \textit{unique} weak simulation from $\emptyloop$ to other pointed models $(M,s)$, not even when $M$ is point-generated. A counterexample is when $M=(\{s,t\},\{(s,t),(t,t)\},v)$ where $v(s)=v(t)=\mathrm{Prop}\}$. Then both $\{(r,s)\}$ and $\{(r,s),(r,t)\}$ are weak simulations $\emptyloop\simu M,s$ (where $dom(\emptyloop)=\{r\}$ is the root of the empty loop model).
\end{proof}

The next natural question that arises is whether our category has products and coproducts. A category has products if for every two objects $A$ and $B$ there is an object $A\times B$ with morphisms $\pi_0: A\times B\to A, \pi_1: A\times B\to B$ (the projections onto the left- and right coordinate, respectively) such that for every object $C$ and pairs of morphisms $h:C\to A,f:C\to B$ there is a unique morphism $g:C\to A\times B$ such that $\pi_0\circ g=h$ and $\pi_1\circ g=f$. Equivalently, \revision{there is a unique $g$ for which the following diagram commutes:}

\begin{center}
\begin{tikzcd}[row sep=large]
A &                                                                                 & B \\
  & A\times B \arrow[lu, "\pi_0"'] \arrow[ru, "\pi_1"]                              &   \\
  & C \arrow[luu, "h"] \arrow[ruu, "f"'] \arrow[u, "\exists!g" description, dashed] &  
\end{tikzcd}
\end{center}

Dually, a category has coproducts if for every two objects $A$ and $B$ there is an object $A\sqcup B$ with morphisms $\rho_0: A\to A\sqcup B, \rho_1:B\to A\sqcup B$ (the co-projections) such that for every object $C$ and pairs of morphisms $h:A\to C,f:B\to C$ there is a unique morphism $g:A\sqcup B\to C$ such that $g\circ\rho_0=h$ and $g\circ\rho_1=f$. Equivalently, $g$ must make the following diagram commute:

\begin{center}
\begin{tikzcd}[row sep=large]
                                         & C                                                    &                                          \\
                                         & A\sqcup B \arrow[u, "\exists!g" description, dashed] &                                          \\
A \arrow[ru, "\rho_0"'] \arrow[ruu, "h"] &                                                      & B \arrow[lu, "\rho_1"] \arrow[luu, "f"']
\end{tikzcd}
\end{center}

\begin{theorem}\label{thm:nocoproducts}
The category $\mathbf{wSim}[\Prop]$ does not have coproducts.
\end{theorem}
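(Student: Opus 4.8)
The plan is to exhibit two specific finite pointed models $A$ and $B$ whose coproduct, if it existed, would have to simultaneously satisfy incompatible requirements, and derive a contradiction from the universal property. The natural candidates are the two loopstates themselves, or one loopstate together with some other simple object. The key leverage is the pair of special objects we already understand completely: by Lemma~\ref{lem:Winitialemptyloop} and Lemma~\ref{lem:emptytypemeansemptyloop}, $\emptyloop$ and $\fullloop$ are characterised by the monotone formulas they satisfy (none and all, respectively), and weak simulations into or out of them are forced to be total on point-generated models (Lemma~\ref{lem:pointgeneratedbisimtotal}).

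First I would choose the objects carefully so that the mediating morphism $g:A\sqcup B\to C$ is over-determined. A clean approach is to take $A=B$ to be a single object $X$ that is \emph{not} bisimilar to either loopstate, and exploit that the universal property must produce a unique $g$ for \emph{every} choice of target $C$ and pair of morphisms $h,f:X\to C$. The strategy is: assume a coproduct $X\sqcup X$ with co-projections $\rho_0,\rho_1$ exists, and then feed the universal property two different targets to pin down contradictory features of $X\sqcup X$. In particular, taking $C=\fullloop$ forces (via the fact that $\fullloop$ is weakly final and everything maps into it) one kind of constraint on the valuation and successor structure of $X\sqcup X$, while taking $C=X$ with $h=f=1_X$ forces $g:X\sqcup X\to X$ together with $g\circ\rho_0=g\circ\rho_1=1_X$. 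The real content will come from choosing $C$ so that the required \emph{uniqueness} of $g$ fails: I would construct two distinct weak simulations $g,g':X\sqcup X\to C$ that both satisfy $g\circ\rho_i=h_i$, contradicting uniqueness.

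The cleanest contradiction I would aim for exploits the escape clauses asymmetrically. Because the (back') clause lets any successor bisimilar to $\fullloop$ be ignored, and (forth') lets any successor bisimilar to $\emptyloop$ be ignored, there is typically slack in how a weak simulation out of a coproduct can be defined, and this slack breaks the required uniqueness. Concretely, I expect to pick $A$ and $B$ to be copies of a model whose root has two successors, arranged so that any purported coproduct object must contain ``enough'' successor structure to admit the two co-projections, but then admits at least two distinct mediating morphisms into a suitable witness target $C$ (for instance built from $\emptyloop$, $\fullloop$, and one deadlock point), because the escape clauses make certain successor-matchings optional. Thus the obstruction is fundamentally that weak simulations are too permissive to support the factorisation-\emph{uniqueness} half of the coproduct axiom.

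The main obstacle will be the bookkeeping of the universal property in the \emph{wrong-way} direction: coproducts are about morphisms \emph{out} of $A\sqcup B$, so I cannot directly use that $\emptyloop$ is weakly initial (that only gives morphisms \emph{in}); I must instead argue about the shape that any candidate $A\sqcup B$ is forced to have and then defeat uniqueness. The delicate part is ruling out \emph{all} possible candidate objects at once rather than one fixed candidate — I would handle this by deriving, purely from the universal property together with the two co-projection constraints, enough structural facts about any hypothetical $A\sqcup B$ (which worlds it must contain, which valuations they must carry by the [atom] clause, and which successors they must have by repeatedly applying [forth'] and [back'] to the co-projections) to then construct two genuinely distinct mediating morphisms into a small explicit $C$, contradicting the uniqueness clause.
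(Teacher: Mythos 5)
Your plan hinges on defeating the \emph{uniqueness} half of the universal property: you want two distinct mediating morphisms $g\neq g'$ out of the hypothetical coproduct with the same composites $g\circ\rho_i=g'\circ\rho_i$. The genuine gap is that this step cannot be carried out the way you describe, because the coproduct object is purely hypothetical and the universal property does not pin down its internal structure. Applying [atom], [forth'] and [back'] to the co-projections only yields \emph{lower bounds} on what $A\sqcup B$ contains (certain successors must exist with at least certain valuations, and even these only up to the escape clauses); it never tells you the exact set of worlds, their exact valuations, or which worlds lie outside the ranges of $\rho_0,\rho_1$. But to exhibit two distinct relations into a concrete target $C$ with \emph{identical} composites, and to verify the weak-simulation clauses for the extra pairs, you need exactly this kind of positive structural knowledge of concrete elements of $A\sqcup B$. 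Your proposal acknowledges this difficulty and then leaves the crucial step entirely programmatic (``I expect to pick\dots'', ``I would construct\dots''), which is where the whole proof lives. A minor but related slip: Lemma~\ref{lem:pointgeneratedbisimtotal} is about \emph{bisimulations}; weak simulations into $\fullloop$ or out of $\emptyloop$ are not forced to be total, precisely because of the escape clauses, so that lemma gives you less leverage than you claim.

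The missing idea is that the one thing weak simulations \emph{do} transport is truth of $\mono$-formulas (Theorem~\ref{thm:wSimPreservation}), and this lets you derive a contradiction from the \emph{existence} part of the universal property alone, never touching uniqueness. The paper takes $(A,a)\models\Box(p\vee q)$ and $(B,b)\models\Diamond r$; any coproduct weakly simulates both, hence satisfies $\Box(p\vee q)\wedge\Diamond r$ by preservation, hence satisfies the entailed disjunction $\Diamond(p\wedge r)\vee\Diamond(q\wedge r)$. It then exhibits two common upper bounds $(C,c)$ and $(C',c')$ of $(A,a)$ and $(B,b)$, where $C$ refutes $\Diamond(q\wedge r)$ and $C'$ refutes $\Diamond(p\wedge r)$. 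Since the coproduct must admit morphisms into both $C$ and $C'$, preservation yields a contradiction whichever disjunct holds. Note that this argument refutes even ``weak'' coproducts (existence of some mediating morphism, with no uniqueness requirement), so your diagnosis that the obstruction is ``fundamentally'' about factorisation-uniqueness is not borne out: for suitable $A,B$ there is no object at all lying above both of them and below all of their common upper bounds. What your toolkit lacks is formula preservation as the device that converts mere existence of morphisms into contradictions, instead of trying to reason about the concrete shape of an object you cannot see.
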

\begin{proof}
It suffices to show that there are two pointed models $(E,e),(E',e')$ such that their coproduct $(E,e)\sqcup(E',e')$ does not exist. Consider the following two models depicted below;
\begin{center}
{ $(A,a)$}
\begin{tikzpicture}[scale = 0.1, node distance=0.8cm]
\node[point] (root) [] {};
\node[point] (v) [above left = of root, label=left:{ $p$}] {};
\node[point] (u) [above right = of root,label = right:{ $q$}] {};
\node[point] (s) [above = of v] {};
\node[point] (t) [above = of u] {};

\path[->] (root) edge (v);
\path[->] (root) edge (u);
\path[->] (v) edge (s);
\path[->] (u) edge (t);
\path[->] (t) edge[reflexive point] (t);
\path[->] (s) edge[reflexive point] (s);
\end{tikzpicture}
\begin{tikzpicture}[scale = 0.1, node distance=0.8cm]
\node[point] (root) [] {};
\node[point] (v) [above left = of root, label=left:{ $r$}] {};
\node[point] (u) [above right = of root] {};
\node[point] (s) [above = of v] {};

\path[->] (root) edge (v);
\path[->] (root) edge (u);
\path[->] (v) edge (s);
\path[->] (u) edge[reflexive point] (u);
\path[->] (s) edge[reflexive point] (s);
\end{tikzpicture}
{ $(B,b)$}
\end{center}
Note that $A,a\models\Box(p\vee q)$ and $B,b\models\Diamond r$. We will show that $(A,a)\sqcup(B,b)$ does not exist. 

Suppose for contradiction that $(A,a)\sqcup(B,b)$ does exist. Then by the properties of coproducts it must be that $(A,a)\simu(A,a)\sqcup(B,b)$ and $(B,b)\simu (A,a)\sqcup(B,b)$ and that for any $(C,c)$ for which $(A,a)\simu (C,c)$ and $(B,b)\simu(C,c)$ we have $(A,a)\sqcup(B,b)\simu(C,c)$ (universal property). By preservation, it immediately follows that $(A,a)\sqcup(B,b)\models\Box(p\vee q)\wedge\Diamond r$. However, observe that $\Box(p\vee q)\wedge\Diamond r\models\Diamond(p\wedge r)\vee\Diamond(q\wedge r)$. Hence either (i) $(A,a)\sqcup(B,b)\models\Diamond(p\wedge r)$ or (ii) $(A,a)\sqcup(B,b)\models\Diamond(q\wedge r)$. Now consider the following two models $(C,c)$ and $(C',c')$, both of which are `above' both $(A,a)$ and $(B,b)$ in the weak simulation order.
\begin{center}
{ $(C,c)$}
\begin{tikzpicture}[scale = 0.1, node distance=0.8cm]
\node[point] (root) [] {};
\node[point] (v) [above left = of root, label=left:{ $p,r$}] {};
\node[point] (u) [above right = of root,label = right:{ $q$}] {};
\node[point] (s) [above = of v] {};
\node[point] (t) [above = of u] {};

\path[->] (root) edge (v);
\path[->] (root) edge (u);
\path[->] (v) edge (s);
\path[->] (u) edge (t);
\path[->] (t) edge[reflexive point] (t);
\path[->] (s) edge[reflexive point] (s);
\end{tikzpicture}
\begin{tikzpicture}[scale = 0.1, node distance=0.8cm]
\node[point] (root) [] {};
\node[point] (v) [above left = of root, label=left:{ $p$}] {};
\node[point] (u) [above right = of root,label = right:{ $q,r$}] {};
\node[point] (s) [above = of v] {};
\node[point] (t) [above = of u] {};

\path[->] (root) edge (v);
\path[->] (root) edge (u);
\path[->] (v) edge (s);
\path[->] (u) edge (t);
\path[->] (t) edge[reflexive point] (t);
\path[->] (s) edge[reflexive point] (s);
\end{tikzpicture}
{ $(C',c')$}
\end{center}
It is easily verified that indeed $(A,a)\simu (C,c)$,$(B,b)\simu(C,c)$ and $(A,a)\simu (C',c')$, $(B,b)\simu(C',c')$. We will not spell out these weak simulations in detail. Moreover, observe that $C,c\not\models\Diamond(q\wedge r)$ and that $C',c'\not\models\Diamond(p\wedge r)$. However, it follows from the universal property of $(A,a)\sqcup(B,b)$ that $(A,a)\sqcup(B,b)\simu(C,c)$ and $(A,a)\sqcup(B,b)\simu(C',c')$, i.e. the coproduct $(A,a)\sqcup(B,b)$ must be `below' both $(C,c)$ and $(C',c')$ in the weak simulation order. This is a contradiction, since if (i) holds then $C',c'\models\Diamond(p\wedge r)$ by preservation and if (ii) then $C,c\models\Diamond(q\wedge r)$ by preservation. Thus the coproduct $(A,a)\sqcup(B,b)$ of $A,a$ and $B,b$ cannot exist.
\end{proof}

As we will see soon, 
$\mathbf{wSim}[\Prop]$ does not have products
either. Rather than giving a concrete counterexample,
we will show that $\mathbf{wSim}[\Prop]$ 
is in fact isomorphic to its own opposite category, and we will rely on the fact that products in a category correspond to co-products in the opposite category. More precisely,
consider the following `flipping' operation on models.

\begin{definition}\label{def:negonmodels}
Given a model $M=(dom(M),R,v)$, let  $M^{\neg}=(dom(M),R,v^{\neg})$ be obtained from $M$ by `flipping' all the valuations, i.e.~$v^{\neg}(t)=\mathrm{Prop}-v(t)$ for all $t\in dom(M)$. 
\end{definition}

In what follows we use $Z^{-1}$ to denote
the relational converse of $Z$, i.e., 
$Z^{-1}=\{(b,a) \mid (a,b)\in Z\}$.

\begin{proposition}\label{prop:flippingsimus}
Whenever $Z:E,e\simu M,s$ then $Z^{-1}:M^{\neg},s\simu E^{\neg},e$
\end{proposition}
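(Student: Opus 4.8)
The plan is to verify directly that the converse relation $Z^{-1}$ satisfies the three defining clauses of a weak simulation from $(M^\neg,s)$ to $(E^\neg,e)$, reading off each clause of $Z^{-1}$ from the ``dual'' clause of $Z$. Since $M^\neg$ and $E^\neg$ share their underlying frames with $M$ and $E$ (Definition~\ref{def:negonmodels} alters only the valuations), and since $(s,e)\in Z^{-1}$ holds precisely because $(e,s)\in Z$, everything reduces to checking the clauses pointwise for an arbitrary pair $(t',t)\in Z^{-1}$, equivalently $(t,t')\in Z$.

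The [atom] clause is immediate: from $v^E(t)\subseteq v^M(t')$ we obtain $\Prop-v^M(t')\subseteq\Prop-v^E(t)$, which is exactly $v^{M^\neg}(t')\subseteq v^{E^\neg}(t)$. For the modal clauses, I would derive the [forth'] clause of $Z^{-1}$ (a successor $u'$ of $t'$ in $M$, to be matched by a successor of $t$) from the [back'] clause of $Z$, and the [back'] clause of $Z^{-1}$ (a successor $u$ of $t$ in $E$) from the [forth'] clause of $Z$. In each case the non-escape branch hands back exactly a matching successor together with a $Z$-pair, hence a $Z^{-1}$-pair; so only the escape branches require further thought.

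The only real content is therefore to show that flipping interchanges the two escape conditions. First I would record the small observation that $(\emptyloop)^\neg=\fullloop$ and $(\fullloop)^\neg=\emptyloop$, together with the fact that bisimilarity is invariant under flipping: $N,w\bisim N',w'$ iff $N^\neg,w\bisim (N')^\neg,w'$, witnessed by the \emph{same} relation, since the atom condition $v(w)=v(w')$ is equivalent to $\Prop-v(w)=\Prop-v(w')$ while the forth and back conditions depend only on the frames. Combining these, $M,u'\bisim\fullloop$ is equivalent to $M^\neg,u'\bisim\emptyloop$, which turns the escape case of [back'] for $Z$ into the escape case of [forth'] for $Z^{-1}$; dually, $E,u\bisim\emptyloop$ is equivalent to $E^\neg,u\bisim\fullloop$, turning the escape case of [forth'] for $Z$ into the escape case of [back'] for $Z^{-1}$.

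I anticipate no genuine obstacle: the argument is a routine unfolding. The one point deserving care is the bookkeeping around the escape clauses, since the forth-direction escape mentions $\emptyloop$ while the back-direction escape mentions $\fullloop$; it is precisely the flip-invariance of bisimulation together with $(\fullloop)^\neg=\emptyloop$ that makes these swap correctly under taking the converse. I would isolate the flip-invariance of bisimulation as a one-line observation so that the main three-clause verification stays clean.
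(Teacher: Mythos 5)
Your proposal is correct and follows essentially the same route as the paper's proof: a pointwise verification in which the [atom] clause follows by complementation, the [forth'] clause of $Z^{-1}$ is read off from the [back'] clause of $Z$ (and vice versa), and the escape conditions swap because flipping valuations exchanges $\emptyloop$ and $\fullloop$ up to bisimilarity. The only difference is presentational: the paper silently uses the implication ``$M^{\neg},u\negbisim\emptyloop$ implies $M,u\negbisim\fullloop$'', whereas you isolate its justification (flip-invariance of bisimulation together with $(\emptyloop)^{\neg}=\fullloop$) as an explicit observation, which is a harmless elaboration of the same argument.
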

\begin{proof}
Let $(t,f)\in Z^{-1}$ be arbitrary, then $(f,t)\in Z$. [atom] By the [atom] clause of $Z$ we get $v(f)\subseteq v(t)$ and hence $v^{\neg}(t)\subseteq v^{\neg}(f)$. [forth'] If $Rtu$ and $M^{\neg},u\;\underline{\not\leftrightarrow}\;\emptyloop$ then $M,u\;\underline{\not\leftrightarrow}\;\fullloop$ and hence by the [back'] clause of $Z$ there must be some successor $g$ of $f$ with $(g,u)\in Z$ and hence $(u,g)\in Z^{-1}$. [back'] Similarly, if $Rfg$ and $E^{\neg},g\;\underline{\not\leftrightarrow}\;\fullloop$ then $E,g\;\underline{\not\leftrightarrow}\;\emptyloop$ so by [forth'] of $Z$ there must be some successor $u$ of $t$ with $(g,u)\in Z$ and hence $(u,g)\in Z^{-1}$.
\end{proof}

In light of Proposition~\ref{prop:flippingsimus}, we can naturally view the operation $(\cdot)^{\neg}$ as a ``contravariant endo-functor'' on $\mathbf{wSim}[\Prop]$.
Formally, let $\mathbf{wSim}[\Prop]^{op}$ be the 
category that is identical to $\mathbf{wSim}[\Prop]$,
except that the direction of the morphisms is 
reversed. We define $(\cdot)^{\neg}:\mathbf{wSim}[\Prop]\to\mathbf{wSim}[\Prop]^{op}$
as the functor obtained by setting $(M,s)^{\neg}:=(M^{\neg},s)$ on objects and $Z^{\neg}:=Z^{-1}$ on morphisms. It is easy to see that this is indeed a functor as it commutes with composition and identity morphisms. 

\begin{proposition}\label{prop:isomorphism}
The functor $(\cdot)^{\neg}:\mathbf{wSim}[\Prop]\to\mathbf{wSim}[\Prop]^{op}$ is an isomorphism.
\end{proposition}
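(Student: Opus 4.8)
The plan is to show that the functor $(\cdot)^{\neg}:\mathbf{wSim}[\Prop]\to\mathbf{wSim}[\Prop]^{op}$ is an isomorphism of categories by exhibiting an inverse functor and checking that the two composites are the identity. The natural candidate for the inverse is $(\cdot)^{\neg}$ itself, viewed as a functor $\mathbf{wSim}[\Prop]^{op}\to\mathbf{wSim}[\Prop]$, because flipping a valuation twice recovers the original valuation and taking the relational converse twice recovers the original relation. So the heart of the argument is the observation that $(\cdot)^{\neg}$ is an \emph{involution}, together with the already-established fact (the discussion preceding Proposition~\ref{prop:isomorphism}) that it is a functor.

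Concretely, I would first recall that for any pointed model $(M,s)$ we have $(M^{\neg})^{\neg}=(M,s)$: the underlying domain and accessibility relation are untouched by $(\cdot)^{\neg}$, and the valuation satisfies $(v^{\neg})^{\neg}(t)=\Prop-(\Prop-v(t))=v(t)$ for every state $t$, so the doubly-flipped valuation equals the original. Hence $(\cdot)^{\neg}$ is the identity on objects after two applications. Next, on morphisms, I would note that $(Z^{\neg})^{\neg}=(Z^{-1})^{-1}=Z$, since relational converse is its own inverse. Proposition~\ref{prop:flippingsimus} guarantees that each single application does land in the (opposite) category---it sends weak simulations to weak simulations with reversed direction---so these identities are not merely set-theoretic but hold at the level of morphisms of the respective categories.

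Combining these, the composite $(\cdot)^{\neg}\circ(\cdot)^{\neg}$ equals the identity functor on both $\mathbf{wSim}[\Prop]$ and $\mathbf{wSim}[\Prop]^{op}$, on objects and on morphisms alike. Since a functor that has a two-sided inverse is by definition an isomorphism of categories, this establishes the claim. I would therefore simply verify the two equalities $(M^{\neg})^{\neg}=M$ and $(Z^{-1})^{-1}=Z$, observe that the functoriality of $(\cdot)^{\neg}$ (commuting with composition and identities) was already checked in the paragraph preceding the proposition, and conclude that $(\cdot)^{\neg}$ is its own inverse.

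I do not anticipate any real obstacle here: the content of the proposition is almost entirely bookkeeping, and all the substantive work---that $(\cdot)^{\neg}$ takes weak simulations to weak simulations in the opposite direction---is already done in Proposition~\ref{prop:flippingsimus}. The only point requiring a modicum of care is to be explicit that the \emph{same} formula $(\cdot)^{\neg}$ serves as the inverse functor (going from $\mathbf{wSim}[\Prop]^{op}$ back to $\mathbf{wSim}[\Prop]$), so that one does not mistakenly look for a different construction; once this is recognized, the involution identities finish the proof immediately.
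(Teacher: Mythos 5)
Your proposal is correct and follows essentially the same route as the paper: the inverse is the same construction $(\cdot)^{\neg}$ viewed as a functor $\mathbf{wSim}[\Prop]^{op}\to\mathbf{wSim}[\Prop]$, and the isomorphism reduces to the involution identities $(M^{\neg})^{\neg}=M$ and $(Z^{-1})^{-1}=Z$, with Proposition~\ref{prop:flippingsimus} supplying the substantive fact that each application is a morphism of the right kind. Your remark that both composites should be checked (the paper explicitly verifies only one) is a minor point of extra care, not a different argument.
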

\begin{proof}
\revision{Observe that the functor $(\cdot)^{\neg}:\mathbf{wSim}[\Prop]\to\mathbf{wSim}[\Prop]^{op}$ induces an inverse, namely the \emph{opposite functor} $(\cdot)^{\neg^{op}}:\mathbf{wSim}[\Prop]^{op}\to\mathbf{wSim}[\Prop]$ defined by $(M,s)^{\neg^{op}}:=(M^{\neg},s)$ and $Z^{\neg^{op}}:=Z^{-1}$. That is, $(\cdot)^{\neg^{op}}$ is defined just like $(\cdot)^{\neg}$ except that the domain and codomain have switched.} We have to show that the composition of functors $(\cdot)^{\neg}\circ(\cdot)^{\neg{op}}$ is the same as the identity functor on $\mathbf{wSim}$. But we know that $((M,s)^{\neg})^{\neg{op}}=(M^{\neg\neg},s)=(M,s)$ and clearly also $(Z^{\neg})^{\neg^{op}}=(Z^{-1})^{-1}=Z$.
\end{proof}

Since products in $\mathbf{wSim}[\Prop]$ correspond to co-products in $\mathbf{wSim}[\Prop]^{op}$, 
it follows that $\mathbf{wSim}$ has products iff it has coproducts. Therefore, we obtain:

\begin{corollary}
The category $\mathbf{wSim}[\Prop]$ does not have products.
\end{corollary}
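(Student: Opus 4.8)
The plan is to reduce the non-existence of products to the already-established non-existence of coproducts (Theorem~\ref{thm:nocoproducts}), exploiting the self-duality of $\mathbf{wSim}[\Prop]$ furnished by Proposition~\ref{prop:isomorphism}. The argument is entirely formal once those two results are in hand; no further structural analysis of weak simulations is needed, so all the genuine work has already been done in the concrete counterexample of Theorem~\ref{thm:nocoproducts} and in setting up the isomorphism $(\cdot)^{\neg}$.

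First I would recall the standard categorical fact that products and coproducts are dual notions: a cone exhibits an object as a binary product $A\times B$ in a category $\mathcal{C}$ exactly when the same diagram, read with all arrows reversed, exhibits that object as a binary coproduct $A\sqcup B$ in $\mathcal{C}^{op}$. Indeed, the universal property of the product (for every $C$ and pair $h:C\to A$, $f:C\to B$ there is a unique $g:C\to A\times B$ with $\pi_0\circ g=h$, $\pi_1\circ g=f$) becomes, after reversing every arrow, precisely the universal property of the coproduct. Hence $\mathcal{C}$ has all binary products iff $\mathcal{C}^{op}$ has all binary coproducts.

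Next I would invoke Proposition~\ref{prop:isomorphism}, which gives an isomorphism of categories $(\cdot)^{\neg}:\mathbf{wSim}[\Prop]\to\mathbf{wSim}[\Prop]^{op}$. An isomorphism of categories is in particular an equivalence, so it transports products to products and coproducts to coproducts; in particular $\mathbf{wSim}[\Prop]^{op}$ has binary coproducts iff $\mathbf{wSim}[\Prop]$ does. Chaining this with the duality of the previous paragraph yields the equivalences: $\mathbf{wSim}[\Prop]$ has binary products iff $\mathbf{wSim}[\Prop]^{op}$ has binary coproducts iff $\mathbf{wSim}[\Prop]$ has binary coproducts.

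Finally, Theorem~\ref{thm:nocoproducts} tells us that $\mathbf{wSim}[\Prop]$ does not have coproducts, so by the displayed chain of equivalences it cannot have products either, which is exactly the claim. I do not anticipate a real obstacle here: the only point requiring care is to state the duality correctly, namely that reversing all arrows turns the product universal property into the coproduct universal property, and then to note that the self-duality isomorphism of Proposition~\ref{prop:isomorphism} preserves (co)products, so that the conclusion ``no coproducts'' transfers faithfully to ``no products''.
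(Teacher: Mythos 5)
Your proposal is correct and follows exactly the paper's own route: the paper likewise derives this corollary by combining the self-duality isomorphism $(\cdot)^{\neg}:\mathbf{wSim}[\Prop]\to\mathbf{wSim}[\Prop]^{op}$ of Proposition~\ref{prop:isomorphism} with the standard correspondence between products in a category and coproducts in its opposite, and then invokes Theorem~\ref{thm:nocoproducts}. No discrepancies to report.
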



Finally, we show that our isomorphism also coheres nicely with the following syntactic operation on formulas. We will make crucial use of this observation in the proof of the main theorem in the following section. We have already dealt with these operations more informally in the proof of Theorem \ref{thm:dualposex}, here we make these observations more explicit.

\begin{definition}\label{def:negonformulas}
Given a modal formula $\varphi$ with variables among $\Prop=\{p_1,\ldots,p_n\}$, let $\varphi^{\neg}:=\varphi[\neg p_1/p_1]\;\ldots\;[\neg p_n/p_n]$. Note that $\varphi^{\neg\neg}\equiv\varphi$, i.e.~our operation is idempotent up to logical equivalence (by removing double negations).
\end{definition}

\begin{proposition}\label{prop:coherencenegoperations}
For all formulas $\varphi$ and pointed models $(M,s)$
\[M,s\models\varphi\qquad\textrm{iff}\qquad M^{\neg},s\models\varphi^{\neg}\]
\end{proposition}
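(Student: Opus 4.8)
The plan is to prove this by straightforward induction on the structure of the formula $\varphi$, exploiting the fact that the flipping operation $(\cdot)^{\neg}$ on models and the negation-substitution operation $(\cdot)^{\neg}$ on formulas are designed to mirror each other exactly. The key observation that drives the whole argument is the base case: for a propositional variable $p$, we have $M,s\models p$ iff $p\in v(s)$, while $M^{\neg},s\models p^{\neg}$ means $M^{\neg},s\models\neg_{at}p$, i.e.~$p\notin v^{\neg}(s)=\Prop-v(s)$, which is again equivalent to $p\in v(s)$. The negated-atom case $\varphi=\neg_{at}p$ is symmetric.

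First I would set up the induction carefully, noting that since we are working in the full modal language in negation normal form, the cases to handle are $p$, $\neg_{at}p$, $\top$, $\bot$, $\varphi_1\wedge\varphi_2$, $\varphi_1\vee\varphi_2$, $\Diamond\varphi_0$ and $\Box\varphi_0$. The truth constants $\top$ and $\bot$ are immediate since they are unaffected by the substitution (i.e.~$\top^{\neg}=\top$ and $\bot^{\neg}=\bot$) and their truth value does not depend on the valuation. The Boolean cases follow directly from the inductive hypothesis together with the observation that substitution commutes with the connectives: $(\varphi_1\wedge\varphi_2)^{\neg}=\varphi_1^{\neg}\wedge\varphi_2^{\neg}$ and similarly for $\vee$.

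The crucial point for the modal cases is that $M$ and $M^{\neg}$ share exactly the same underlying frame $(dom(M),R)$; only the colouring changes. Hence for the $\Diamond$ case, $M,s\models\Diamond\varphi_0$ iff there exists $t\in R[s]$ with $M,t\models\varphi_0$, which by the inductive hypothesis holds iff there exists $t\in R[s]$ with $M^{\neg},t\models\varphi_0^{\neg}$, i.e.~$M^{\neg},s\models\Diamond\varphi_0^{\neg}=(\Diamond\varphi_0)^{\neg}$. The $\Box$ case is identical, quantifying universally over successors instead of existentially. Since the set of successors $R[s]$ is literally the same in both models, no further work is needed.

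I do not expect any genuine obstacle here: the statement is essentially a bookkeeping lemma verifying that the two dual operations are coherent, and every case reduces mechanically to the inductive hypothesis. The only point requiring a modicum of care is making explicit that the substitution operation $(\cdot)^{\neg}$ commutes with all the connectives (so that the formula-flip ``passes through'' each syntactic construct), and that the frame is invariant under the model-flip (so that the successor relation used in the semantic clauses is unchanged). Once these two facts are recorded, the induction closes routinely.
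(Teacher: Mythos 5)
Your proposal is correct and follows essentially the same route as the paper's own proof: a structural induction whose only substantive case is the atomic one ($M,s\models p$ iff $p\in v(s)$ iff $p\notin v^{\neg}(s)$ iff $M^{\neg},s\models\neg p$), with the Boolean and modal cases closing immediately because the substitution $(\cdot)^{\neg}$ commutes with the connectives and modalities and the flip $M\mapsto M^{\neg}$ leaves the frame untouched. Your version is marginally more explicit (treating $\neg_{at}p$, $\top$, $\bot$ as separate cases), but there is no difference in substance.
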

\begin{proof}
By induction on the complexity of $\varphi$. The Boolean case are immediate by inductive hypothesis. The cases for the modal operators $\Box$ and $\Diamond$ are also immediate by inductive hypothesis since $(\Box\psi)^{\neg}=\Box(\psi^{\neg})$ and $(\Diamond\psi)^{\neg}=\Diamond(\psi^{\neg})$. The atomic case follows easily from the definitions as $M,s\models p$ iff $p\in v(s)$ iff $p\not\in v^{\neg}(s)$ iff $M^{\neg},s\models\neg p$.
\end{proof}

\subsection{Construction of Weak-Simulation Dualities}
We first define what weak simulations dualities are in the context of $\mathbf{wSim}[\Prop]$, and show that they correspond to finite characterisations of $\mono$-formulas.

\begin{definition}
A \emph{weak simulation duality} is a pair $(\mathcal{D}_{pos},\mathcal{D}_{neg})$ of finite sets of pointed models such that every pointed model either weakly simulates some pointed model from $\mathcal{D}_{pos}$, or is weakly simulated by some pointed model from $\mathcal{D}_{neg}$, where the ``or'' is meant exclusively. That is, for $\upclosure\mathcal{D}=\{(E,e)\mid (E',e')\simu_w(E,e)$ for some $(E',e')\in\mathcal{D}\}$ and   $\downclosure\mathcal{D}=\{(E,e)\mid (E,e)\simu_w(E',e')\;\text{for some}(E',e')\in\mathcal{D}\}$ we have that $\upclosure\mathcal{D}_{pos}$ and $\downclosure\mathcal{D}_{neg}$ partition $\mathbf{wSim}[\Prop]$.
\end{definition}

\begin{theorem}\label{thm:dualitiesarecharacterisations}
Let $\mathbb{E}=(\mathcal{D}_{pos}, \mathcal{D}_{neg})$ be a weak simulation duality where the examples in $\mathcal{D}_{pos}$ have been labelled positively and the examples in $\mathcal{D}_{neg}$ negatively. If some $\varphi\in\mono$ fits $\mathbb{E}$ then $\mathbb{E}$ is a finite characterisation of $\varphi$ w.r.t. $\mono$.
\end{theorem}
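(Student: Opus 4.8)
The plan is to establish the sharper statement that \emph{every} formula $\chi\in\mono$ fitting $\mathbb{E}$ satisfies $\text{mod}(\chi)=\upclosure\mathcal{D}_{pos}$, and then read off uniqueness from this. Once the sharper statement is in hand, applying it simultaneously to $\varphi$ and to any competing $\psi\in\mono$ fitting $\mathbb{E}$ gives $\text{mod}(\varphi)=\upclosure\mathcal{D}_{pos}=\text{mod}(\psi)$, hence $\varphi\equiv\psi$. This is exactly condition (ii) in the definition of a finite characterisation (that $\varphi$ is the only fitting concept up to equivalence), while condition (i), that $\varphi$ itself fits $\mathbb{E}$, holds by hypothesis. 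So the whole theorem reduces to the single equality $\text{mod}(\chi)=\upclosure\mathcal{D}_{pos}$.

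To prove that equality I would fix an arbitrary $\chi\in\mono$ fitting $\mathbb{E}$, i.e.\ $\chi$ true on every model in $\mathcal{D}_{pos}$ and false on every model in $\mathcal{D}_{neg}$, and argue by two inclusions, each leaning on preservation of $\mono$-formulas under weak simulations (Theorem~\ref{thm:wSimPreservation}). For $\upclosure\mathcal{D}_{pos}\subseteq\text{mod}(\chi)$: if $(M,s)\in\upclosure\mathcal{D}_{pos}$ then some $(E',e')\in\mathcal{D}_{pos}$ satisfies $(E',e')\simu(M,s)$, and since $E',e'\models\chi$, preservation yields $M,s\models\chi$. For the reverse inclusion: if $M,s\models\chi$, the duality partition places $(M,s)$ in $\upclosure\mathcal{D}_{pos}$ or in $\downclosure\mathcal{D}_{neg}$; were it in the latter there would be $(E',e')\in\mathcal{D}_{neg}$ with $(M,s)\simu(E',e')$, so preservation would force $E',e'\models\chi$, contradicting that $\chi$ is false on all negative examples. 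Hence $(M,s)\in\upclosure\mathcal{D}_{pos}$.

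This argument is uniform in $\chi$: it uses only that $\upclosure\mathcal{D}_{pos}$ and $\downclosure\mathcal{D}_{neg}$ jointly cover $\mathbf{wSim}[\Prop]$ together with label-respecting preservation, and no special feature of $\varphi$ beyond its fitting $\mathbb{E}$. I therefore expect no genuine obstacle; the content is really a bookkeeping combination of the preservation theorem with the covering half of the partition property, which is precisely what a duality supplies. The one point worth stating explicitly is that $\text{mod}(\chi)=\upclosure\mathcal{D}_{pos}$ is proved over \emph{all} pointed models, so the derived $\varphi\equiv\psi$ is full logical equivalence; in particular it restricts to $\lambda(\varphi)=\lambda(\psi)$ over any frame class (and over $\mathcal{F}_{\text{fin}}$ in particular), which is exactly the equivalence of concepts demanded by the concept-class formulation of finite characterisation.
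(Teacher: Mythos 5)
Your proof is correct and takes essentially the same route as the paper's: both combine preservation of $\mono$-formulas under weak simulations (Theorem~\ref{thm:wSimPreservation}) with the partition property $\upclosure\mathcal{D}_{pos}\cup\downclosure\mathcal{D}_{neg}=\mathbf{wSim}[\Prop]$ of the duality. The only difference is presentational: you factor the argument through the explicit equality $\text{mod}(\chi)=\upclosure\mathcal{D}_{pos}$ for every fitting $\chi$, whereas the paper unrolls those same two inclusions directly into a proof of the mutual entailments $\varphi\models\psi$ and $\psi\models\varphi$.
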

\begin{proof}
Let $\varphi,\psi\in\mono$ and $\mathbb{E}=(E^+,E^-)$ be a weak-simulation duality such that $\varphi$ and $\psi$ both fit these labelled examples. To see that $\varphi\models\psi$, let $M,s\models\varphi$. By  preservation (Theorem \ref{thm:wSimPreservation}), it cannot be that $(M,s)\in\downclosure E^-$ otherwise $\varphi$ would be satisfied on, and therefore fail to fit \rrevision{as a negative example}, some example in $E^-$. Hence, since $\upclosure E^+$ and $\downclosure E^-$ partition $\mathbf{wSim}[\Prop]$, it follows that $(M,s)\in\upclosure E^+$ but then $M,s\models\psi$ as well by preservation and the fact that $\psi$ fits $E^+$ \rrevision{as positive examples}. The argument that $\psi\models\varphi$ is completely symmetric.
\end{proof}

By the above theorem, it suffices to construct on input $\varphi\in\mono$, a weak simulation duality $(E^+_{\varphi},E^-_{\varphi})$ that $\varphi$ fits. We will now give a direct construction of $E^+_{\varphi}$ from $\varphi$, going through a sort of modal version of the disjunctive normal form developed by Kit Fine \cite{Finenormalform}; computing the examples compositionally from arbitrary formulas would be hard given the complex role of conjunction as a sort of `merging' or `gluing' operation on models. Then, we continue to show how this definition of $E^+_{\varphi}$ already implicitly defines also the negative examples $E^-_{\varphi}$ via the automorphism $(\cdot)^{\neg}$ of $\mathbf{wSim}[\Prop]$.

\begin{definition}{(Fine Normal Form)}
A \textit{basic normal form} of level $0$ is a non-empty conjunction of \textit{positive} literals. A basic normal form of level $n+1$ is a non-empty conjunction of formulas
\[\pi\wedge\Diamond\varphi_0\wedge\ldots\wedge\Diamond\varphi_n\wedge\Box(\psi_0\vee\ldots\vee\psi_m)\]
where $\pi$ is a (possibly empty) conjunction of positive (i.e.~unnegated) propositional variables $p$, and each $\varphi_i,\psi_j$ is a basic normal form of level at most $n$ (this ensures that a normal form of level $n$ is also a normal form of level $n'$ for all $n'>n$). A \textit{normal form} of level $m$ is a non-empty disjunction of basic normal forms of level $m$.
\end{definition}

Note that we explicitly require the conjunctions and disjunctions to be non-empty in the above definition, in order to exclude $\top$ and $\bot$. Intuitively, basic normal forms are modal formulas in negation normal form that may have disjunctions under the scope of boxes but no disjunctions under the scope of any diamonds.

\begin{theorem}\label{thm:FineNormalForm}For every formula $\xi\in\mono$, there is a normal form $nf(\xi)$ of level $d(\xi)$ such that $\xi\equiv nf(\xi)$.
\end{theorem}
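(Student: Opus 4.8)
The plan is to prove the theorem by induction on the structure of $\xi$, with the main engine being an auxiliary closure lemma stating that basic normal forms are closed under conjunction up to equivalence. Before the main induction I would record the monotonicity of levels already noted in the definition: every basic normal form of level $n$ is also a basic normal form of level $n'$ for every $n'\geq n$ (taking the extra diamond- and box-conjuncts to be empty), so that disjuncts living at different levels can always be lifted to a common level. I would also make explicit the reading of the definition under which the $\pi$-part, the diamond-part, and the box-part of a basic normal form of positive level may individually be empty, subject only to the whole conjunction being non-empty; this is what lets, e.g., $\Diamond\alpha$ count as a basic normal form with a single diamond-conjunct and no box, which the theorem forces since $\Diamond p\in\mono$ has depth $1$ and cannot carry a box in the absence of $\top$.

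The key lemma I would isolate is: if $\chi_1,\chi_2$ are basic normal forms of level $\leq n$, then $\chi_1\wedge\chi_2$ is equivalent to a basic normal form of level $\leq n$. I would prove this by a separate induction on $n$. The base case $n=0$ is trivial, since a conjunction of two non-empty conjunctions of positive literals is again one. For the inductive step, after lifting to a common level I would write $\chi_1=\pi_1\wedge\bigwedge_a\Diamond\varphi_a\wedge\Box(\bigvee_b\psi_b)$ and $\chi_2=\pi_2\wedge\bigwedge_c\Diamond\varphi'_c\wedge\Box(\bigvee_d\psi'_d)$ and collect conjuncts: the $\pi$-parts merge into a single conjunction of positive literals, the diamond-conjuncts simply accumulate, and the two boxes fuse using $\Box A\wedge\Box B\equiv\Box(A\wedge B)$. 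The inner formula $(\bigvee_b\psi_b)\wedge(\bigvee_d\psi'_d)$ is then rewritten by distributivity of $\wedge$ over $\vee$ as $\bigvee_{b,d}(\psi_b\wedge\psi'_d)$, and the inductive hypothesis turns each $\psi_b\wedge\psi'_d$ (a conjunction of two basic normal forms of level $\leq n-1$) into a basic normal form of level $\leq n-1$, so the box-part has exactly the required shape.

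With this lemma in hand the main induction on $\xi$ is routine. For $\xi=p$, the single positive literal is a basic normal form of level $0$. For $\xi=\alpha\vee\beta$, I would lift $nf(\alpha)$ and $nf(\beta)$ to the common level $\max(d(\alpha),d(\beta))=d(\xi)$ and concatenate their disjuncts. For $\xi=\alpha\wedge\beta$, writing $nf(\alpha)=\bigvee_i\alpha_i$ and $nf(\beta)=\bigvee_j\beta_j$, distributivity gives $\xi\equiv\bigvee_{i,j}(\alpha_i\wedge\beta_j)$ and the closure lemma normalizes each conjunct. For $\xi=\Diamond\alpha$ I would push the diamond through the disjunction, $\Diamond\bigvee_i\alpha_i\equiv\bigvee_i\Diamond\alpha_i$, each $\Diamond\alpha_i$ being a basic normal form one level higher; and for $\xi=\Box\alpha$ the formula $\Box\bigvee_i\alpha_i$ is already a single basic normal form of level $d(\alpha)+1=d(\xi)$. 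In every case the resulting level is exactly $d(\xi)$, using $d(\alpha\vee\beta)=d(\alpha\wedge\beta)=\max(d(\alpha),d(\beta))$ and $d(\Diamond\alpha)=d(\Box\alpha)=d(\alpha)+1$, with level-lifting used to present the answer at precisely level $d(\xi)$.

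I expect the main obstacle to be the conjunction case, precisely because conjunction behaves as a gluing operation rather than mere syntactic juxtaposition: the two boxes must be fused and their combined content redistributed into a disjunction of basic normal forms via distributivity, and keeping the level bookkeeping correct through the nested induction is where the real work lies. The diamond and box cases, by contrast, are immediate once the normal-form shape is read with optional parts.
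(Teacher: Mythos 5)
Your proof is correct, but it is organized differently from the paper's. The paper proves the theorem by induction on the modal depth $d(\xi)$: it views $\xi$ as a Boolean combination of atoms and formulas $\Diamond\varphi$, $\Box\psi$ whose arguments are already normalized by the induction hypothesis, pushes $\Diamond$ through disjunctions, takes a propositional DNF of the whole Boolean combination, fuses the boxes in each conjunct via $\bigwedge_i\Box\chi_i\equiv\Box(\bigwedge_i\chi_i)$, and distributes inside the box. You instead do a structural induction on $\xi$ with an explicit auxiliary lemma: basic normal forms of level $\leq n$ are closed under conjunction up to equivalence, proved by its own induction on $n$. The mathematical core is identical (distributivity, $\Diamond$--$\vee$ commutation, $\Box$-fusion), but your decomposition buys something real: the paper's final rewriting step silently passes from ``a disjunction of conjunctions of basic normal forms of level $n$'' (inside the fused box, and at the top level) to ``a disjunction of basic normal forms,'' which is precisely the assertion that conjunctions of basic normal forms can be re-expressed as basic normal forms --- the paper never justifies this, whereas your closure lemma is exactly that justification, with the level bookkeeping carried through the nested induction. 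Your explicit remark that the $\pi$-part, diamond-part, and box-part may individually be empty is also needed (and is confirmed by the paper's Corollary~\ref{cor:cases}), since otherwise the cases $\xi=\Diamond\alpha$ and $\xi=\Box\alpha$ would not produce legal basic normal forms. The paper's depth-induction is slightly more economical in that it dispatches all Boolean structure in one DNF pass; your version is more rigorous and would be the one to follow if the proof were to be formalized.
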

\begin{proof}
By induction on $d(\xi)$. If $d(\xi)=0$ we can just use the disjunctive normal form for propositional logic, i.e.~we simply distribute $\wedge$'s over $\vee$'s via the validity $p\wedge(q\vee r)\equiv (p\wedge q)\vee(p\wedge r)$. If $d(\xi)=n+1$, observe that $\xi$ is a Boolean combination of formulas $\Diamond\varphi$, $\Box\psi$ and propositional variables $p$, where each $d(\varphi),d(\psi)\leq n$. By the inductive hypothesis, we may assume that each such $\varphi$ and $\psi$ are normal forms of level $n$. This means that $\varphi\equiv\bigvee\Phi$ and $\psi\equiv\bigvee\Psi$ for some nonempty finite sets $\Phi,\Psi$ of basic normal forms of level $n$. By Kripke normality we transform $\Diamond\varphi=\Diamond\bigvee\Phi\equiv\bigvee_{\varphi\in\Phi}\Diamond\varphi$, after which $\xi$ is equivalent to a Boolean combination of formulas of the form $\Diamond\varphi$, formulas of the form \revision{$\Box(\psi_0\vee\ldots\vee\psi_m)$} and positive propositional variables $p$, where $\varphi,\psi_0,\ldots,\psi_m$ are all basic normal forms of level $n$.

\revision{Applying the propositional distributive law again we obtain the disjunctive normal form of this Boolean combination, which is a big disjunction over conjunctions of formulas of the form $\Diamond\varphi, \Box(\psi_0\vee\ldots\vee\psi_m)$ and positive propositional variables $p$, where  $\varphi,\psi_0,\ldots,\psi_m$ are all basic normal forms of level $n$). We apply Kripke normality in the form of the equivalence $\bigwedge_i\Box\varphi_i\equiv\Box(\bigwedge_i\varphi_i)$ to conjunctions of formulas of the form $\Box(\psi_0\vee\ldots\vee\psi_m)$, to rewrite then into one formula starting with a $\Box$ and under its scope a conjunction of disjunctions of basic normal forms of level $n$. Applying the distributive law again we rewrite this into a disjunction of conjunctions of basic normal forms of level $n$. Finally then, the whole formula is rewritten to a disjunction of conjunctions of the form $\pi\wedge\Diamond_0\wedge\ldots\wedge\Diamond\varphi_n\wedge\Box(\psi_0\vee\ldots\psi_m)$ where $\varphi,\psi_0,\ldots,\psi_m$ are all basic normal forms of level $n$. That is, the whole formula is equivalently rewritten into a disjunction of basic normal forms of level $n+1$.}
\end{proof}

The following corollary is thus immediate from the fact that every formula in $\mono$ can be brought into the above Fine normal form.

\begin{corollary}\label{cor:cases}
Every basic normal form of level $k$ is of one of the following forms:
\begin{align*}
    &(i)\;\;\;p_1\wedge\ldots\wedge p_n       &\textrm{if}\;n=0\\
    &(ii)\;\;\pi\wedge\Diamond\varphi_0\wedge\ldots\wedge\Diamond\varphi_n   &\textrm{if}\;n>0\\
    &(iii)\;\pi\wedge\Box(\psi_0\vee\ldots\vee\psi_m)
    &\textrm{if}\;n>0\\
    &(iv)\;\;\pi\wedge\Diamond\varphi_0\wedge\ldots\wedge\Diamond\varphi_n\wedge\Box(\psi_0\vee\ldots\vee\psi_m) &\textrm{if}\;n>0
\end{align*}
where each $\varphi_i,\psi_j$ is a basic normal form of level $k-1$ and $\pi$ is a (possibly empty conjunction) of positive atoms.
\end{corollary}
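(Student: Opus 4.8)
The plan is to obtain the corollary directly by unfolding the definition of a basic normal form, performing a case analysis on the level and on which kinds of conjuncts are present. Since Theorem~\ref{thm:FineNormalForm} already guarantees that every $\mono$-formula is equivalent to a disjunction of basic normal forms, no new expressibility claim is needed: the content of the corollary is purely the syntactic observation of what a \emph{single} basic normal form can look like, and the four displayed shapes are read off from the definition. Throughout, $n$ in the case conditions is to be read as the level (so case~(i) is the level-$0$ case and cases~(ii)--(iv) are the level-$>0$ cases).

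First I would dispatch the base case. By definition a basic normal form of level $0$ is a non-empty conjunction of positive literals, which is exactly shape~(i), with the displayed $p_1\wedge\ldots\wedge p_n$ understood to range over at least one atom. For level $k>0$ I would start from the generic form $\pi\wedge\Diamond\varphi_0\wedge\ldots\wedge\Diamond\varphi_n\wedge\Box(\psi_0\vee\ldots\vee\psi_m)$ sanctioned by the definition, where $\pi$ is a possibly-empty conjunction of atoms and each $\varphi_i,\psi_j$ is a basic normal form of level at most $k-1$. The key point is that the diamond block and the $\Box$-conjunct may each independently be absent, and splitting on these two possibilities yields shape~(ii) when only diamonds occur, shape~(iii) when only the box occurs, and shape~(iv) when both occur. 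Should a conjunction happen to contain several $\Box$-conjuncts, they can first be merged into the single required one via Kripke normality $\bigwedge_i\Box\alpha_i\equiv\Box\bigwedge_i\alpha_i$ followed by the distributive law pushing the inner conjunction of disjunctions back into a disjunction of basic normal forms of level $\leq k-1$ --- precisely the manipulation already carried out in the proof of Theorem~\ref{thm:FineNormalForm}.

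Finally I would verify exhaustiveness: at level $k>0$ the overall conjunction is required to be non-empty, so at least one of the diamond block or the $\Box$-conjunct must be present (a bare $\pi$ would simply collapse to the level-$0$ case~(i)). This is exactly why cases~(ii)--(iv) jointly cover every level-$>0$ basic normal form. I do not expect any genuine obstacle here: the only step that uses more than a glance at the definition is the merging of multiple boxes in the previous paragraph, and that is borrowed verbatim from Theorem~\ref{thm:FineNormalForm}; everything else is immediate bookkeeping, which is why the corollary can fairly be called immediate.
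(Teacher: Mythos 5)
Your proof is correct and takes essentially the same route as the paper, which simply declares the corollary immediate from the definition of Fine normal forms: your case analysis (level $0$ gives shape~(i); at positive level the diamond block and the $\Box$-conjunct are each independently optional, yielding (ii)--(iv), with a bare $\pi$ falling back to shape~(i)) is exactly the bookkeeping the paper leaves implicit. The only superfluous element is the box-merging step via Kripke normality: under the paper's definition a basic normal form of level $k>0$ already has at most one $\Box$-conjunct by its very shape, so no rewriting up to equivalence is needed there.
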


Now comes the direct construction of $E^+_{\varphi}$ given $\varphi\in\mono$ in Fine normal form. Define the following  `\textit{gluing}' operator (which we denote by $\blacktriangledown$ to suggest the connection with the syntactic $\nabla$-operator \cite{MOSS1999277}) for building the positive examples.

\begin{definition}
Given a set of pointed models $\mathbb{E}$ and a set of propositional variables $P\subseteq\mathrm{Prop}$ let $\blacktriangledown_P(\mathbb{E})$ be the model obtained by gluing all the examples in $\mathbb{E}$ to a new root $r$ with $v(r)=P$. That is:
\begin{itemize}
    \item Domain $dom(\blacktriangledown_P(\mathbb{E}))=\biguplus_{(E,e)\in\mathbb{E}}\;dom(E)\uplus\{r\}$
    \item Relation $R^{\blacktriangledown}[r]=\{e\mid (E,e)\in\mathbb{E}\}$ and $R^{\blacktriangledown}$ $\restriction dom(E)=R^E$ for all $(E,e)\in\mathbb{E}$
    \item Colouring $v^{\blacktriangledown}\restriction dom(E)=v^E$ for all $(E,e)\in\mathbb{E}$ and $v^{\blacktriangledown}(r)=P$
\end{itemize}

\end{definition}
We write $\blacktriangledown_P(E,e)$ rather than $\blacktriangledown_P(\{(E,e)\})$ for singletons. By Theorem \ref{thm:FineNormalForm}, every $\varphi\in\mono$ is equivalent to a normal form $nf(\varphi)$ of level $d(\varphi)$. In fact, we observed there (Corollary \ref{cor:cases}) that every \rrevision{basic} normal form is of one of the forms (i)-(iv). Hence, we will define the positive examples for these four cases separately. Note that for a conjunction of literals $\pi$, $\hat{\pi}$ denotes the set of propositional variables occurring in $\pi$. Further, we suppress the distinguished element for readability, writing $E$ instead of $(E,e)$.

\begin{definition}\label{def:posexamples}
\revision{For every formula $\varphi$ in $\mono$ in Fine normal form, we define the positive examples of $\varphi$ to be the positive examples for $nf(\varphi)$, in symbols $E^+_{\varphi}:=E^+_{nf(\varphi)}$, where the latter is defined as follows:}
\begin{align*}
    E^+_{\pi}&:=\{\blacktriangledown_{\hat{\pi}}(\emptyloop)\}\\
    E^+_{\pi\wedge\Diamond\varphi_0\wedge\ldots\wedge\Diamond\varphi_n}&:=\{\blacktriangledown_{\hat{\pi}}(\{E_1,\ldots,E_n\}\cup\{\emptyloop\})\mid E_i\in E^+_{\varphi_i}\;\textrm{for all}\;i\leq n\}\\
    E^+_{\pi\wedge\Box(\psi_0\vee\ldots\vee\psi_m)}&:=\{\blacktriangledown_{\hat{\pi}}(\mathbb{E})\mid \mathbb{E}\subseteq\bigcup_{j\leq m} E^+_{\psi_j}\}\\
    E^+_{\pi\wedge\Diamond\varphi_0\wedge\ldots\wedge\Diamond\varphi_n\wedge\Box(\psi_0\vee\ldots\vee\psi_m)}&:=\{\blacktriangledown_{\hat{\pi}}(\{E_1,\ldots,E_n\}\cup\mathbb{E})\mid E_i\in\bigcup_{j\leq m} E^+_{\varphi_i\wedge\psi_j}\;\forall i\leq n\;,\;\emptyset\ne\mathbb{E}\subseteq\bigcup_{j\leq m}E^+_{\psi_j}\}\\
    E^+_{\bigvee_{i\leq n}\varphi_i}&:=\bigcup_{i\leq n} E^+_{\varphi_i}
\end{align*}
where $\hat{\pi}$ is the set of all atoms occurring in the conjunction of literals/term $\pi$, and importantly, in the one but last line, $E^+_{\varphi_i\wedge\psi_j}$ is already defined by inductive hypothesis.%
\footnote{\rrevision{Note that $\varphi_i\wedge\psi_j$ is not necessarily in Fine normal form, but may be rewritten
to one without increasing its modal depth.}}
\end{definition}

\begin{proposition}\label{prop:fitsposexamples}
Every $\varphi\in\mono$ fits $E^+_{\varphi}$ \rrevision{as positive examples}.
\end{proposition}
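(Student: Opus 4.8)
The plan is to prove, by induction on the level of the normal form, the slightly more convenient statement that every pointed model occurring in $E^+_\psi$ satisfies $\psi$, for every (basic) normal form $\psi$. Since every $\varphi\in\mono$ satisfies $\varphi\equiv nf(\varphi)$ by Theorem~\ref{thm:FineNormalForm} and $E^+_\varphi$ is defined as $E^+_{nf(\varphi)}$ in Definition~\ref{def:posexamples}, and since satisfaction is invariant under logical equivalence, the proposition follows at once by applying this claim to $nf(\varphi)$. The induction is organized around the case distinction of Corollary~\ref{cor:cases}, with the top-level disjunction clause treated separately.

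For the base case (case (i)), a level-$0$ basic normal form is a conjunction $\pi$ of positive literals, and its unique positive example $\blacktriangledown_{\hat\pi}(\emptyloop)$ has a root coloured exactly $\hat\pi$. As $\pi$ has modal depth $0$, only the colour of the root is relevant, and the root plainly satisfies $\pi$. For the inductive step I would verify each remaining case simply by inspecting where the gluing operator $\blacktriangledown$ places the successors of the new root. In every case the root is coloured $\hat\pi$ and hence satisfies the propositional part $\pi$. In case (ii), each conjunct $\Diamond\varphi_i$ is witnessed by the glued submodel $E_i\in E^+_{\varphi_i}$, which satisfies $\varphi_i$ by the induction hypothesis; the additional $\emptyloop$-successor is harmless precisely because there is no box constraining the successors. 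In case (iii), every successor of the root is the root of some $E\in\mathbb{E}\subseteq\bigcup_{j} E^+_{\psi_j}$, so by the induction hypothesis it satisfies some disjunct $\psi_j$ and therefore the whole disjunction, yielding $\Box(\psi_0\vee\cdots\vee\psi_m)$ (vacuously when $\mathbb{E}=\emptyset$).

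The case that requires genuine care, and which I expect to be the crux, is the combined case (iv). Here the root must simultaneously witness every diamond $\Diamond\varphi_i$ and keep the box $\Box(\psi_0\vee\cdots\vee\psi_m)$ true, even though the diamond-witnesses are themselves successors of the root and so fall under the scope of that box. This is exactly why Definition~\ref{def:posexamples} draws each $E_i$ from $\bigcup_{j} E^+_{\varphi_i\wedge\psi_j}$ rather than from $E^+_{\varphi_i}$: by the induction hypothesis such an $E_i$ satisfies $\varphi_i\wedge\psi_j$ for some $j$, so it simultaneously witnesses $\Diamond\varphi_i$ and satisfies one of the disjuncts of the box constraint. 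The remaining successors, drawn from $\mathbb{E}\subseteq\bigcup_{j} E^+_{\psi_j}$, satisfy some $\psi_j$ by the induction hypothesis as well, so every successor of the root satisfies the disjunction and the box holds.

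Finally, the top-level disjunction clause $E^+_{\bigvee_i\varphi_i}=\bigcup_i E^+_{\varphi_i}$ is immediate: any model in $E^+_{\varphi_i}$ satisfies $\varphi_i$ by the induction hypothesis and hence satisfies the disjunction. Assembling the cases completes the induction, and the proposition follows by the reduction described above. The only subtlety worth flagging in the write-up is the interplay in case (iv) between the diamonds and the box; everything else is a routine reading-off of the semantics from the shape of $\blacktriangledown$.
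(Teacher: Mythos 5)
Your proposal is correct and follows essentially the same route as the paper's own proof: an induction on the complexity of (basic) normal forms organized around the case distinction of Corollary~\ref{cor:cases}, with the top-level disjunction handled via the union clause, and with the same key observation in case (iv) that the diamond-witnesses are drawn from $\bigcup_{j} E^+_{\varphi_i\wedge\psi_j}$ so that they simultaneously witness $\Diamond\varphi_i$ and satisfy a disjunct under the box. Your explicit flagging of why the extra $\emptyloop$-successor is harmless in case (ii) is a nice touch the paper leaves implicit, but it does not change the argument.
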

\begin{proof}
By induction on the complexity of normal forms. We show that for each basic normal form $\xi$, $\xi$ fits $E^+_{\xi}$ \rrevision{as positive examples}. From this it follows that every normal form fits its set of positive examples by the properties of union and disjunction. By corollary \ref{cor:cases}, it suffices to consider the cases (i)-(iv). Starting with (i) $\xi= p_1\wedge\ldots\wedge p_n$, clearly $\blacktriangledown_{\{p_1,\ldots,p_n\}}(\emptyloop)\models p_1\wedge\ldots\wedge p_n$. For case (ii) $\xi=\pi\wedge\Diamond\varphi_1\wedge\ldots\wedge\Diamond\varphi_m$, let $E_1,\ldots E_m$ positive examples for $\varphi_1,\ldots,\varphi_m$ respectively. Then $\blacktriangledown_{\pi}(E_1,\ldots,E_m,\emptyloop)\models\xi$ because $\pi$ is satisfied at the root and each conjunct of the form $\Diamond\varphi_i$ is witnessed by some successor whose generated model is precisely some example in $E^+_{\varphi_i}$, which are all models of $\varphi_i$ by inductive hypothesis.

For case (iii) $\xi=\pi\wedge\Box(\psi_0\vee\ldots\vee\psi_m)$, let $\mathbb{E}\subseteq\bigcup_{i\leq m} E^{+}_{\psi_i}= E^+_{\psi_0\vee\ldots\vee\psi_m}$. Thus by inductive hypothesis all examples in $\mathbb{E}$ satisfy $\psi_0\vee\ldots\vee\psi_m$ (note that $\mathbb{E}$ may be empty). But then clearly $\blacktriangledown_{\pi}(\mathbb{E})\models\xi$ as the generated model of every successor of the root of this model is in $\mathbb{E}\subseteq E^{+}_{\psi_0\vee\ldots\vee\psi_m}$. Lastly, for case (iv)  $\xi=\pi\wedge\Box(\psi_0\vee\ldots\vee\psi_m)\wedge\Diamond\varphi_0\wedge\ldots\wedge\Diamond\varphi_n$, let $E_1,\ldots E_m$ be positive examples for $\varphi_1\wedge(\psi_0\vee\ldots\vee\psi_m),\;\ldots,\;\varphi_m\wedge(\psi_0\vee\ldots\vee\psi_m)$ respectively and $\mathbb{E}\subseteq E^{+}_{\psi_0\vee\ldots\vee\psi_m}$. Then $\blacktriangledown_{\hat{\pi}}(\{E_1,\ldots, E_n\}\cup\mathbb{E})\models\xi$ because $\pi$ is satisfied at the root, each conjunct $\Diamond\varphi_i$ is witnessed by some $E_i\in\{E_1,\ldots,E_n\}$ and each successor of the root is in $\{E_1,\ldots,E_n\}\cup\mathbb{E}\subseteq\text{mod}(\psi_0\vee\ldots\vee\psi_m)$ by inductive hypothesis.
\end{proof}

\begin{proposition}\label{prop:lhsduality}
For every $\varphi\in\mono$, every positive example for $\varphi$ weakly simulates some example from $E^+_{\varphi}$.
\end{proposition}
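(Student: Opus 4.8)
The plan is to prove the inclusion directly, by building for each positive example $(M,s)$ a suitable member of $E^+_\varphi$ together with a weak simulation into $(M,s)$; together with Proposition~\ref{prop:fitsposexamples} (the reverse containment) and preservation (Theorem~\ref{thm:wSimPreservation}) this identifies $\text{mod}(\varphi)$ with the up-closure of $E^+_\varphi$. Since $\varphi\equiv nf(\varphi)$ and $E^+_\varphi:=E^+_{nf(\varphi)}$ by Definition~\ref{def:posexamples}, I may assume $\varphi$ is a normal form. If $\varphi=\bigvee_i\chi_i$ with each $\chi_i$ a basic normal form and $M,s\models\varphi$, then $M,s\models\chi_i$ for some $i$; as $E^+_{\chi_i}\subseteq E^+_\varphi$, it suffices to treat \emph{basic} normal forms. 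These I handle by induction on the level (equivalently, the modal depth), splitting into the four shapes (i)--(iv) of Corollary~\ref{cor:cases}. Throughout, recall that ``$(M,s)$ weakly simulates $(E,e)$'' means $E,e\simu M,s$, so the goal in each case is to produce $E\in E^+_\xi$ and a weak simulation $Z:E\simu M,s$ sending the glued root $r$ of $E$ to $s$.

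The witness $E$ is read off from the way $s$ verifies the basic normal form $\xi$. Colour the new root $r$ with $\hat\pi$, which is legitimate because $M,s\models\xi$ forces $\hat\pi\subseteq v(s)$. Each conjunct $\Diamond\varphi_i$ is witnessed by a successor $t_i$ of $s$ with $M,t_i\models\varphi_i$; in case~(iv) the box additionally forces $M,t_i\models\psi_{j(i)}$ for some $j(i)$, so $M,t_i\models\varphi_i\wedge\psi_{j(i)}$. The induction hypothesis (applied at strictly smaller modal depth) then supplies a sub-example $E_i\in E^+_{\varphi_i}$ (resp.\ $E^+_{\varphi_i\wedge\psi_{j(i)}}$) with a weak simulation $Z_i:E_i\simu M,t_i$. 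When $\xi$ contains $\Box(\psi_0\vee\dots\vee\psi_m)$, every successor $t$ of $s$ satisfies some $\psi_{j(t)}$, and the induction hypothesis yields a box-cover $F_t\in E^+_{\psi_{j(t)}}$ with $F_t\simu M,t$; I set $\mathbb{E}=\{F_t\mid R^M s t\}$. Gluing the resulting components under $r$ — adding the $\emptyloop$-point in the box-free cases (i),(ii) exactly as prescribed — yields an $E$ of precisely the set-builder shape of Definition~\ref{def:posexamples}, hence $E\in E^+_\xi$. The accompanying weak simulation is $Z=\{(r,s)\}\cup\bigcup_\bullet Z_\bullet\cup\{(c,t)\mid t\in dom(M)\}$, where the $Z_\bullet$ are the sub-simulations transported into their glued copies and the last block (present only in the box-free cases) pairs the $\emptyloop$-point $c$ with every state of $M$.

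Verification splits into the glued-in pairs and the root pair. For a pair inside a glued copy, the $E$-successors coincide with the successors in the original sub-example because gluing only adds edges out of $r$; hence [atom], [forth'] and [back'] transfer verbatim from the relevant $Z_\bullet$, and the $(c,t)$-block is a weak simulation by the argument of Lemma~\ref{lem:Winitialemptyloop}. At the root, [atom] is $\hat\pi\subseteq v(s)$, and [forth'] holds since each successor of $r$ is the root of a $\Diamond$-witness or box-cover (matched to the corresponding successor of $s$) or is $c$ (dispatched by the escape clause).

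The crux — and the step I expect to be the main obstacle — is [back'] at the root, where \emph{every} successor $u'$ of $s$ must be matched. If $u'\bisim\fullloop$ we escape; otherwise the two families of cases diverge, and this is exactly what motivates the asymmetric treatment of $\emptyloop$ in Definition~\ref{def:posexamples}. In the box-free cases, $u'$ is matched to the $\emptyloop$-point via $(c,u')\in Z$, which is why $\emptyloop$ must be glued in to absorb successors of $s$ witnessing no diamond. In the box cases, $\emptyloop$ \emph{cannot} be glued in (it satisfies no $\mono$-formula by Lemma~\ref{lem:emptyloopsatall}, so its presence would break membership in $E^+_\xi$), but now the box rescues us: it forces $M,u'\models\psi_{j(u')}$, so $u'$ is not bisimilar to $\emptyloop$ (Lemma~\ref{lem:emptytypemeansemptyloop}) and is matched to the root of its box-cover $F_{u'}\in\mathbb{E}$. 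A subsidiary point worth care is finiteness: although $s$ may have infinitely many successors, $\mathbb{E}$ is the image of $t\mapsto F_t$ into the fixed finite set $\bigcup_j E^+_{\psi_j}$, hence finite, and the possibly many simulations $Z_t$ sharing one canonical copy $F\in\mathbb{E}$ are simply unioned (a union of weak simulations with a common source being again a weak simulation).
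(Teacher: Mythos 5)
Your proposal is correct and follows essentially the same route as the paper's proof: reduction to basic normal forms via the disjunction clause, induction with the four shapes of Corollary~\ref{cor:cases}, the same glued witness examples (with $\emptyloop$ absorbing unmatched successors in the box-free cases and box-covers handling [back$'$] at the root in the box cases), and the same union-of-subsimulations-plus-root-pair relation. Your explicit remarks on finiteness of $\mathbb{E}$ and on unioning the $Z_t$ that share a canonical copy make precise a point the paper leaves implicit, but the argument is otherwise the paper's own.
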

\begin{proof}
By induction on formula complexity. Consider $\varphi\in\mono$ in normal form, i.e.~$\varphi$ is a disjunction of basic normal forms of level $d(\varphi)$. It suffices to show that the claim holds for all basic normal form disjuncts of $\varphi$, for if $M,s\models\varphi$ then $(M,s)$ satisfies some disjunct, and the positive examples for $\varphi$ are the union of all the positive examples for its disjuncts. 
\begin{itemize}
    \item[(i)] Let $\varphi=\pi$ be a conjunction of positive atoms and observe that $E^+_{\hat{\pi}}=\{\blacktriangledown_{\hat{\pi}}(\emptyloop)\}$. We look at 2 cases: (a) $R^M[s]=\emptyset$ or (b) $R^M[s]\ne\emptyset$. (a) We claim that the relation with a single pair $\{(r,s)\}$, where $r$ is the root of $\blacktriangledown_{\hat{\pi}}(\emptyloop)$, is a weak simulation $\blacktriangledown_{\hat{\pi}}(\emptyloop)\simu M,s$. [atom] $v(r)=\hat{\pi}$ and as $M,s\models\pi$ clearly $v(r)\subseteq v(s)$.
    [forth'] trivial since we can use the escape clause. [back'] trivial since $R^M[s]=\emptyset$. (b) Now we claim that $Z:=\{(r,s)\}\cup(\{r'\}\times dom(M))$ is a weak simulation, where $r'$ is the root of $\emptyloop$. It follows straight from the definition of weak simulation that each pair of the form $(r',t)$ satisfies all the clauses. Hence it suffices to show that all clauses hold of the root-pair $(r,s)\in Z$. [atom] $v(r)=\hat{\pi}$ and as $M,s\models\pi$ clearly $v(r)\subseteq v(s)$. [forth'] the only successor of $r$ is $\emptyloop$ so we can just use the escape clause. [back'] If $R^Mst$ then we clearly have the empty loopstate successor of $r$ with is $Z$-related to $t\in dom(M)$.
    \item[(ii)] Let $M,s\models\pi\wedge\Diamond\varphi_0\wedge\ldots\wedge\Diamond\varphi_n$. This means that for each $i\leq n$ there is a successor $t_i\in R^M[s]$ such that $M,t_i\models\varphi_i$. Therefore, by the inductive hypothesis, for each $i\leq n$ there is an example $(E_i,e_i)\in E^+_{\varphi_i}$ and a weak simulation $Z_i:E_i,e_i\simu M,t_i$, as well as a simulation $Z_{\pi}:\blacktriangledown_{\hat{\pi}}(\emptyloop)\simu M,s$. It follows that $\blacktriangledown_{\hat{\pi}}(\{(E_0,e_0),\ldots,(E_n,e_n)\})\in E^+_{\pi\wedge\Diamond\varphi_0\wedge\ldots\wedge\Diamond\varphi_n}$. The claim is that
    \[Z:=\bigcup_{1\leq i\leq n}Z_i\cup(\{\emptyloop\}\times dom(M))\cup\{(r,s)\}\]
    is a weak simulation $\blacktriangledown_{\hat{\pi}}(\{(E_1,e_1),\ldots,(E_n,e_n)\})\simu M,s$. We already know that all the $Z_i$ are weak simulations. Moreover, every pair of the form $(r',t)$ where $dom(\emptyloop)=\{r'\}$ satisfies all the clauses (cf. Lemma \ref{lem:Winitialemptyloop}). Hence we only have to show that the root-pair $(r,s)$ satisfies all the clauses. [atom] Follows from the [atom] clause of $Z_{\pi}$ (note that $\hat{\pi}$ may be empty here). [forth'] For the empty loopstate successor $r'$ we can use the escape clause, and every other successor of $r$ is of the form $e_i$ for some $i\leq n$. But then there is a matching successor $t_i$ of $s$ with $(e_i,u_i)\in Z_i\subseteq Z$. [back'] We can match every successor $u$ of $t$ with $r'$ since $(r',u)\in Z$ for all $u\in dom(M)$.
    \item[(iii)] Let $M,s\models\pi\wedge\Box(\psi_0\vee\ldots\vee\psi_m)$. By inductive hypothesis, this means that for each $t\in R^M[s]$ there is some example $(E_t,e_t)\in\bigcup_{j\leq m}E^+_{\psi_j}$ with a weak simulation $Z_t:E_t,e_t\simu M,t$, as well as simulation $Z_{\pi}:\blacktriangledown_{\hat{\pi}}(\emptyloop)\simu M,s$. Let $\mathbb{E}:=\{(E_t,e_t)\mid t\in R^M[s]\}$, then $\blacktriangledown_{\hat{\pi}}(\mathbb{E})\in E^+_{\pi\wedge\Box(\psi_0\vee\ldots\vee\psi_m)}$. The claim is that 
    \[Z:=\bigcup_{t\in R^M[s]}Z_t\cup\{(r,s)\}\]
    is a weak simulation $Z:\blacktriangledown_{\hat{\pi}}(\mathbb{E})\simu M,s$. We already know each $Z_t$ is a weak simulation, so it suffices to show that $(r,s)$ satisfies all the clauses. [atom] By the [atom] clause of $Z_{\pi}$ (note that $\hat{\pi}$ may be empty here). [forth'] Every successor of $r$ is of the form $e_t$ and hence there is some successor $t$ of $s$ with $(e_t,t)\in Z_t\subseteq Z$. [back'] For every successor $t$ of $s$ we have by construction a successor $e_t$ of $r$ with $(e_t,t)\in Z_t\subseteq Z$.
    \item[(iv)] Let $M,s\models\pi\wedge\Diamond\varphi_0\wedge\ldots\wedge\Diamond\varphi_n\wedge\Box(\psi_0\vee\ldots\vee\psi_m)$. By inductive hypothesis, this means that for each $t\in R^M[s]$ there is some example $(E_t,e_t)\in\bigcup_{j\leq m}E^+_{\psi_j}$ with a weak simulation $Z_t:E_t,e_t\simu M,t$. Moreover, $R^M[s]\ne\emptyset$ as e.g.  $M,s\models\Diamond\varphi_0$ and $\varphi_0\not\equiv\bot$. Set $\mathbb{E}:=\{(E_t,e_t)\mid t\in R^M[s]\}$ and note that $\mathbb{E}\ne\emptyset$ as $R^M[s]\ne\emptyset$. Moreover, by inductive hypothesis there is also a weak simulation $Z_{\pi}:\blacktriangledown_{\hat{\pi}}(\emptyloop)\simu M,s$ and for each $i\leq n$ there are examples $(E_i,e_i)\in\bigcup_{j\leq m}E^+_{\varphi_i\wedge\psi_j}$ such that there is a simulation $Z_i:E_i,e_i\simu M,t_i$ where $t_i$ is some successor of $s$. Set $\mathbb{E'}:=\{(E_i,e_i)\mid 0<i\leq n\}$. It follows that $\blacktriangledown_{\hat{\pi}}(\mathbb{E}\cup\mathbb{E'})\in E^+_{\pi\wedge\Diamond\varphi_0\wedge\ldots\wedge\Diamond\varphi_n\wedge\Box(\psi_0\vee\ldots\vee\psi_m)}$. We claim that the relation
    \[Z:=\bigcup_{i\leq n}Z_i\cup\bigcup_{t\in R^M[s]}Z_t\cup\{(r,s)\}\]
    is a weak simulation $Z:\blacktriangledown_{\hat{\pi}}(\mathbb{E}\cup\mathbb{E'})\simu M,s$. Again, since we already know each $Z_i$ and $Z_t$ to be weak simulations, it suffices to show that the root-pair $(r,s)$ satisfies all the clauses. [atom] By the [atom] clause of $Z_{\pi}$ (note that $\hat{\pi}$ may be empty here). [forth'] Every successor of the root is either  of the form $e_t$ or (ii) of the form $e_i$. In the former case we get a matching successor $t$ of $s$ with $(e_t,t)\in Z_t\subseteq Z$, and in the latter case we get a matching successor $t_i$ of $s$ with $(e_i,t_i)\in Z_i\subseteq Z$. [back'] every successor $t$ of $s$ has a matching successor $e_t$ of $r$ with $(e_t,t)\in Z_t\subseteq Z$ by construction.
\end{itemize}
\end{proof}

As it turns out, we can define the negative examples for formulas in $\mono$ in terms of the positive ones via the substitution operator $(\cdot)^{\neg}$ we saw in Sections~\ref{sec:category} and Section \ref{chap:fragments} when dualizing the positive result for \rrevision{$\mathcal{L}_{\Diamond,\wedge,\vee,\top,\bot}$} from \cite{tencatedalmauCQ} to \rrevision{$\mathcal{L}_{\Box,\wedge,\vee,\top,\bot}$}. We will do this in terms of the syntactic operator $\triangleleft$ defined below. Intuitively, $\triangleleft(\varphi)$ is the formula $\varphi$ with all logical symbols in it 'dualized' while it does nothing on propositional variables. That is;
\label{dualoperation}
\begin{align*}
    \triangleleft(\Box\varphi)=&\Diamond\triangleleft(\varphi)\\
    \triangleleft(\Diamond\varphi)=&\Box\triangleleft(\varphi)\\
    \triangleleft(\varphi\wedge\psi)=&\triangleleft(\varphi)\vee\triangleleft(\psi)\\
    \triangleleft(\varphi\vee\psi)=&\triangleleft(\varphi)\wedge\triangleleft(\psi)\\
    \triangleleft(p)=&p
\end{align*}
First of all, note that $\varphi\in\mono$ implies that $\triangleleft(\varphi)\in\mono$ because $\triangleleft(p)=p$. Moreover, observe that $\neg\varphi\equiv\triangleleft(\varphi)^{\neg}$. To see this, note that rewriting $\neg\varphi$ into negation normal form is precisely the same procedure as first dualizing all operators with $\triangleleft$ and then substituting $[\neg p/p]$ wherever possible with $(\cdot)^{\neg}$, modulo elimination of double negations in front of atoms.

\begin{definition}\label{def:characterisations}
For each $\varphi\in\mono$ set $\mathbb{E}_{\varphi}=(E^+_{\varphi}, E^-_{\varphi})$ where
\[E^+_{\varphi}:=E^+_{nf(\varphi)}\qquad\qquad\qquad\qquad E^-_{\varphi}:=\{(E^{\neg},e)\mid (E,e)\in E^+_{nf(\triangleleft(\varphi))}\}\]
\end{definition}

\begin{proposition}\label{prop:dualityvianegexamples}
For each $\varphi\in\mono$, $(E^+_{\varphi},E^-_{\varphi})$ is a weak simulation duality that $\varphi$ fits.
\end{proposition}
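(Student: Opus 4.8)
The plan is to verify the two defining conditions separately: that $\varphi$ fits the labelled examples $(E^+_\varphi, E^-_\varphi)$, and that $\upclosure E^+_\varphi$ and $\downclosure E^-_\varphi$ form a partition of $\mathbf{wSim}[\Prop]$. Throughout I would lean heavily on the two facts relating $\varphi$ to its dual $\triangleleft(\varphi)$: first, that $\triangleleft(\varphi)\in\mono$ whenever $\varphi\in\mono$ (since $\triangleleft$ fixes propositional variables), and second, the equivalence $\neg\varphi\equiv\triangleleft(\varphi)^{\neg}$ together with the coherence Proposition~\ref{prop:coherencenegoperations}, which says $M,s\models\psi$ iff $M^{\neg},s\models\psi^{\neg}$.

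For the fitting condition, that $\varphi$ fits $E^+_\varphi$ is exactly Proposition~\ref{prop:fitsposexamples}. For the negative examples I would take an arbitrary $(E^{\neg},e)\in E^-_\varphi$, so that $(E,e)\in E^+_{nf(\triangleleft(\varphi))}$. By Proposition~\ref{prop:fitsposexamples} applied to $\triangleleft(\varphi)$ we have $E,e\models\triangleleft(\varphi)$, and then by Proposition~\ref{prop:coherencenegoperations} we get $E^{\neg},e\models\triangleleft(\varphi)^{\neg}\equiv\neg\varphi$. Thus $\varphi$ is refuted on every negative example, as required.

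The partition condition is where the real work lies, and it splits into covering and disjointness. For covering, I would take an arbitrary pointed model $(M,s)$ and case on whether $M,s\models\varphi$. If it does, Proposition~\ref{prop:lhsduality} places $(M,s)$ in $\upclosure E^+_\varphi$ directly. If it does not, then $M,s\models\neg\varphi\equiv\triangleleft(\varphi)^{\neg}$, so by coherence $M^{\neg},s\models\triangleleft(\varphi)$; applying Proposition~\ref{prop:lhsduality} to the $\mono$-formula $\triangleleft(\varphi)$ yields a weak simulation $Z\colon E,e\simu M^{\neg},s$ from some $(E,e)\in E^+_{nf(\triangleleft(\varphi))}$, and Proposition~\ref{prop:flippingsimus} turns this into $Z^{-1}\colon M,s\simu E^{\neg},e$ (using $M^{\neg\neg}=M$), witnessing $(M,s)\in\downclosure E^-_\varphi$ since $(E^{\neg},e)\in E^-_\varphi$. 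For disjointness I would assume $(M,s)$ lies in both closures and derive a contradiction: membership in $\upclosure E^+_\varphi$ gives some $(E',e')\in E^+_\varphi$ with $E',e'\simu M,s$, and since $(E',e')\models\varphi$ (Proposition~\ref{prop:fitsposexamples}), preservation under weak simulations (Theorem~\ref{thm:wSimPreservation}) forces $M,s\models\varphi$; meanwhile membership in $\downclosure E^-_\varphi$ gives $M,s\simu E^{\neg},e$ for some $(E^{\neg},e)\in E^-_\varphi$, so preservation again forces $E^{\neg},e\models\varphi$, contradicting the fact just established that $\varphi$ is refuted on all of $E^-_\varphi$.

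Since all the heavy lifting has been done in the preceding propositions, the only genuine obstacle is bookkeeping the direction of simulations correctly under the flip $(\cdot)^{\neg}$ — in particular making sure that the covering argument for the ``$M,s\not\models\varphi$'' case applies Proposition~\ref{prop:lhsduality} to $\triangleleft(\varphi)$ on $M^{\neg}$ and then re-flips via Proposition~\ref{prop:flippingsimus}, and that the two preservation invocations in the disjointness step are pointed in mutually compatible directions.
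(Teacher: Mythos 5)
Your proof is correct and follows essentially the same route as the paper's: fitting via Proposition~\ref{prop:fitsposexamples}, Proposition~\ref{prop:coherencenegoperations}, and the equivalence $\neg\varphi\equiv\triangleleft(\varphi)^{\neg}$; covering via Proposition~\ref{prop:lhsduality} applied to $\varphi$ and to $\triangleleft(\varphi)$ followed by re-flipping with Proposition~\ref{prop:flippingsimus}. The only difference is presentational: you spell out the disjointness step (via Theorem~\ref{thm:wSimPreservation}) explicitly, whereas the paper leaves it implicit in the claim that $\upclosure E^+_{\varphi}=\text{finmod}(\varphi)$ and $\downclosure E^-_{\varphi}=\text{finmod}(\neg\varphi)$.
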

\begin{proof}
It remains to show that $\varphi$ fits $E^-_{\varphi}$ \rrevision{as negative examples} and that every negative example for $\varphi$ is weakly simulated by some negative example for $\varphi$ in $E^-_{\varphi}$. Let $(E',e')\in E^-_{\varphi}$. Then $(E',e')=(E,e)^{\neg}=(E^{\neg},e)$ for some $(E^{\neg},e)\in E^+_{nf(\triangleleft(\varphi))}$. So by Proposition~\ref{prop:fitsposexamples} $E^{\neg},e\models\triangleleft(\varphi)$ and by Proposition~\ref{prop:coherencenegoperations} $E,e\models\triangleleft(\varphi)^{\neg}$. But we have seen that that just means that $E,e\not\models\varphi$ since $\neg\varphi\equiv\triangleleft(\varphi)^{\neg}$. It follows that $\varphi$ fits $E^-_{\varphi}$ \rrevision{as negative examples}.

Finally, let $E,e\not\models\varphi$ be any negative example for $\varphi$. Since $\neg\varphi\equiv\triangleleft(\varphi)^{\neg}$, it follows that $E,e\models\triangleleft(\varphi)^{\neg}$. By Proposition~\ref{prop:coherencenegoperations} we have $E^{\neg},e\models\triangleleft(\varphi)$, and then by Proposition~\ref{prop:lhsduality} there is some $(E',e')\in E^+_{\triangleleft(\varphi)}$ such that $E',e'\simu E^{\neg},e$, whence by Proposition \ref{prop:flippingsimus} and idempotence of $(\cdot)^{\neg}$ also $E,e\simu (E')^{\neg},e'$. But if $(E',e')\in E^+_{\triangleleft(\varphi)}$ then $(E',e')^{\neg}=((E')^{\neg},e)$ is in $E^-_{\varphi}$. It follows that $(E,e)\in\downclosure E^-_{\varphi}$, so we have shown that $\downclosure E^-_{\varphi}=\text{finmod}(\neg\varphi)$. Together with Proposition~\ref{prop:lhsduality}, which shows that $\upclosure E^+_{\varphi}= \text{finmod}(\varphi)$, we conclude that $(E^+_{\varphi},E^-_{\varphi})$ is a weak simulation duality.
\end{proof}

We are now ready to prove Theorem~\ref{thm:main-mono}, restated here for convenience:

\thmmainmono*

\begin{proof}
Immediate from Proposition~\ref{prop:dualityvianegexamples} and Theorem~\ref{thm:dualitiesarecharacterisations}. The procedure computes first $nf(\varphi)$, $nf(\triangleleft(\varphi))$ and uses Definition \ref{def:posexamples} and the flipping operation $(\cdot)^{\neg}$ to generate $E^+_{\varphi}$ and $E^-_{\varphi}$.
\end{proof}

\begin{remark}
Theorem~\ref{thm:wsim-characterization} showed that the 
weak-simulation-preserved fragment of first-order logic 
is the fragment $\mono$ extended with $\top$ and $\bot$. 
This raises the question whether Theorem~\ref{thm:main-mono}
holds also for $\mono$ extended with $\top$ and $\bot$.
This is indeed the case:
it follows from Lemma~\ref{lem:emptyloopsatall} that a single pointed model suffices to characterise $\top$ and $\bot$ as individual formulas. We can use the fact that $\top$ does not have negative examples and $\bot$ does not have positive examples, and we can set $E^+_{\top}:=\{\emptyloop\}$ and $E^-_{\bot}:=\{\fullloop\}$.
\end{remark}

\section{Further Results}
\label{sec:further}
In this section, we discuss complexity issues and additional extensions of the modal languages considered that admit finite characterisations. We will begin with discussing the complexity of computing finite characterisations.

\subsection{Size bounds and computational complexity of computing finite characterizations}
\label{sec:complexity}

Recall Theorem \ref{thm:Balder}, which states that $\mathbb{C}(\mathcal{L}_{\Diamond,\wedge,\vee,\top,\bot}[\Prop])$ admits finite characterizations, and whose proof relies on results from~\cite{AlexeCKT2011,tencatedalmauCQ} concerning finite
characterizations for c-acyclic unions of conjunctive queries.
In fact, the latter
results imply that $\mathbb{C}(\posexcon[\Prop])$ admits polynomial-time computable finite characterisations and that $\mathbb{C}(\mathcal{L}_{\Diamond,\wedge,\vee,\top}[\Prop])$ admits exponential-time computable finite characterisations. 
Inspecting our proof of Theorem \ref{thm:Balder}, we see that
this extends also to the corresponding fragments that include
$\bot$. Therefore, $\mathbb{C}(\mathcal{L}_{\Diamond,\wedge,\top,\bot}[\Prop])$ admits polynomial-time computable finite characterisations and that $\mathbb{C}(\mathcal{L}_{\Diamond,\wedge,\vee,\top,\bot}[\Prop])$ admits exponential-time computable characterisations. 

Our construction of finite characterizations for $\mono$-formulas, on the other hand, is non-elementary. Indeed, 
we will now give a non-elementary lower bound on the size of characterisations w.r.t. $\mono$, by showing that any characterisation of $\Box^np$ w.r.t. $\mono[\{p\}]$ must contain at least non-elementarily many positive examples. We will use the following notation for non-elementary functions:
\begin{align*}
    \text{tower}(1,m)&\;:=\;m\\
    \text{tower}(n+1,m)&\;:=\;2^{\text{tower}(n,m)}
\end{align*}
Thus, a non-elementary function $f(n)$ is one that cannot be bounded by any $\text{tower}(k,\text{poly}(n))$ for any fixed $k$ and polynomial function $\text{poly}(n)$. 

We will use the following general strategy for proving the lower bound. Namely, we show that our characterisations are `minimal' in the sense that no proper subset forms a characterisation. Using the fact that our characterisations actually form weak simulation dualities in $\mathbf{wSim}[\{p\}]$, it then follows that the size of the characterisations for individual formulas are actually lower bounds on the size on any such characterisation.

\begin{lemma}\label{lem:minimality}
For each $n$, $E^{+}_{\Box^np}$ is `minimal' in the sense that for all proper subsets $E'\subset E^{+}_{\Box^np}$ we have $\upclosure E'\ne\textup{finmod}(\Box^np)$.
\end{lemma}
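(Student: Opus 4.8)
The plan is to reduce the statement to a \emph{rigidity} property of the example set --- namely that within $E^+_{\Box^n p}$ no example weakly simulates a distinct one --- and then prove this rigidity by induction on $n$. First I would record the concrete shape of $E^+_{\Box^n p}$. Since $\Box^n p = \Box(\Box^{n-1}p)$ is already in Fine normal form (a single basic normal form of type~(iii) in Corollary~\ref{cor:cases}, with empty $\pi$ and the single box-disjunct $\Box^{n-1}p$), Definition~\ref{def:posexamples} unfolds to the recursion
\[
E^+_{\Box^0 p} = \{\blacktriangledown_{\{p\}}(\emptyloop)\}
\qquad\text{and}\qquad
E^+_{\Box^n p} = \{\blacktriangledown_\emptyset(\mathbb{E}) \mid \mathbb{E}\subseteq E^+_{\Box^{n-1}p}\}\ \ (n\geq 1).
\]
Thus every example is a tree with an empty-valued root whose children are roots of examples one level down. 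Writing $M_\mathbb{E} := \blacktriangledown_\emptyset(\mathbb{E})$, the rigidity claim becomes: $M_\mathbb{E}\simu M_\mathbb{F}$ implies $\mathbb{E}=\mathbb{F}$.

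The heart of the argument is the inductive step. Fix a weak simulation $Z:M_\mathbb{E}\simu M_\mathbb{F}$ containing the root pair. The children of either root are roots of level-$(n{-}1)$ examples; by Proposition~\ref{prop:fitsposexamples} each such example satisfies $\Box^{n-1}p\in\mono$, so by Lemma~\ref{lem:emptytypemeansemptyloop} none is bisimilar to $\emptyloop$, and since each fails $\Diamond^{n}p$ (the $p$-labels occur only at depth $n-1$) none is bisimilar to $\fullloop$. Consequently \emph{neither escape clause can fire at the root}: the \textbf{(forth')} clause forces every child $e_E$ ($E\in\mathbb{E}$) to be $Z$-matched to some child $e_F$ ($F\in\mathbb{F}$), and dually \textbf{(back')} matches every $F\in\mathbb{F}$ to some $E\in\mathbb{E}$. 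Because the subtrees are glued disjointly by $\blacktriangledown$, a node inside a subtree has all its successors inside that subtree, so the restriction of $Z$ to $dom(E)\times dom(F)$ is again a weak simulation witnessing $E\simu F$. The induction hypothesis (rigidity at level $n-1$) then collapses $E\simu F$ to $E=F$, giving $\mathbb{E}\subseteq\mathbb{F}$ and $\mathbb{F}\subseteq\mathbb{E}$, hence $\mathbb{E}=\mathbb{F}$. The base case $n=0$ is immediate, as $E^+_{p}$ is a singleton.

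Finally I would derive minimality from rigidity. Every proper subset $E'\subsetneq E^+_{\Box^n p}$ misses some $(E_0,e_0)$, so $E'\subseteq E^+_{\Box^n p}\setminus\{(E_0,e_0)\}$. The example $(E_0,e_0)$ is itself a finite model of $\Box^n p$ (Proposition~\ref{prop:fitsposexamples}), hence lies in $\text{finmod}(\Box^n p)=\upclosure E^+_{\Box^n p}$ (Propositions~\ref{prop:lhsduality} and~\ref{prop:dualityvianegexamples}); but by rigidity no \emph{other} example weakly simulates it, so $(E_0,e_0)\notin\upclosure\!\left(E^+_{\Box^n p}\setminus\{(E_0,e_0)\}\right)$. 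Monotonicity of $\upclosure$ then yields $\upclosure E'\subsetneq\text{finmod}(\Box^n p)$, as required.

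The main obstacle I anticipate is the bookkeeping in the inductive step: verifying rigorously that both escape clauses are genuinely unavailable at the root (this is exactly where Lemma~\ref{lem:emptytypemeansemptyloop} and the boundedness of the $p$-depth of the examples enter), and checking that the restriction of a weak simulation to a pair of glued subtrees is again a weak simulation, so that the induction hypothesis can be applied one level down.
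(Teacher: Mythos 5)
Your proof is correct, and it takes a genuinely different route from the paper's. The paper argues syntactically: by induction on $n$ it constructs, for each example $\blacktriangledown_{\emptyset}(\mathbb{E})\in E^+_{\Box^{n+1}p}$, an explicit separating formula
\[
\psi_{\blacktriangledown_{\emptyset}(\mathbb{E})}:=\Bigl(\bigvee_{(E,e)\in\mathbb{E}}\Box\,\psi_{(E,e)}\Bigr)\vee\Diamond\Bigl(\bigwedge_{(E,e)\in\mathbb{E}}\psi_{(E,e)}\Bigr)\;\in\;\mono
\]
(a ``dualized nabla'') which is refuted on that example but on no other example of $E^+_{\Box^{n+1}p}$; minimality then follows because an example missed by a proper subset $E'$ cannot lie in $\upclosure E'$ without, by preservation (Theorem~\ref{thm:wSimPreservation}), satisfying its own separating formula. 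You instead prove a purely semantic \emph{rigidity} property --- distinct examples of $E^+_{\Box^n p}$ never weakly simulate one another --- by structural induction on the examples, assembled from the right ingredients: Lemma~\ref{lem:emptytypemeansemptyloop} blocks both escape clauses at the root pair (the children satisfy $\Box^{n-1}p$ but refute $\Diamond^n p$), successor-closedness of the glued subtrees lets you restrict $Z$ to a weak simulation one level down, and the induction hypothesis collapses matched children to equality. Your rigidity claim is in fact equivalent to the stated minimality (if some example weakly simulated a distinct one, dropping the simulated example would leave the up-closure unchanged), so your argument is a complete proof of the lemma. The trade-off is this: the paper's proof produces the formulas $\psi_{(E,e)}$ as a by-product, and the tower-lower-bound theorem that follows actually invokes those formulas (``by Lemma~\ref{lem:minimality} \ldots there must be some $\psi_{(E,e)}$ \ldots''), i.e., it relies on the \emph{proof} of the lemma rather than merely its statement; with your proof those formulas are not available, and extracting them from rigidity would require a Hennessy--Milner-style result for weak simulations on finite models (non-simulation implies separation by a $\mono$-formula), which the paper never establishes. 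Conversely, your route avoids all normal-form and formula bookkeeping and isolates a clean structural fact --- the positive examples form a $\simu$-antichain --- that is of independent interest.
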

\begin{proof}
We show this by induction on $n$. The base case $n=0$ is trivial since $\Box^op=p$ and $|E^+_{p}|=1$ has only a single example and clearly $\upclosure\emptyset$ is empty so is not exactly $\text{finmod}(\Box^np)$. Now suppose the claim holds for $E^+_{\Box^np}$ and consider any $\blacktriangledown_{\emptyset}(\mathbb{E})\in E^+_{\Box^{n+1}p}$ where $\mathbb{E}\subseteq E^+_{\Box^np}$. We want to construct a formula in $\mono$ which is only refuted on this particular positive example in $E^+_{\Box^{n+1}p}$, but \rrevision{not on other examples} in $E^+_{\Box^{n+1}p}$. It follows from the fact that $E^+_{\Box^np}$ is minimal that for every positive example $(E,e)\in\mathbb{E}\subseteq E^+_{\Box^np}$ there is some formula $\psi_{(E,e)}\in\mono$ such that $\psi_{(E,e)}$ is only refuted on $(E,e)$, but true on all other examples in $E^+_{\Box^np}$. Define: \footnote{Interestingly, this is precisely the dual of the modal nabla operator (a.k.a. the `cover modality') $\nabla$ \cite{MOSS1999277,janin1995}. }
\[\psi_{\blacktriangledown_{\emptyset}(\mathbb{E})}:=(\bigvee_{(E,e)\in\mathbb{E}}\Box\;\psi_{(E,e)})\vee\Diamond(\bigwedge_{(E,e)\in\mathbb{E}}\psi_{(E,e)})\]
We claim that $\blacktriangledown_{\emptyset}(\mathbb{E})$ is the only example from $E^+_{\Box^{n+1}p}$ that refutes $\psi_{\blacktriangledown_{\emptyset}(\mathbb{E})}$. For take any other $\nabla_{\emptyset}(\mathbb{E'})\in E^+_{\Box^{n+1}p}$ with $\mathbb{E}\ne\mathbb{E}'$. Then either (a) $\mathbb{E}\not\subseteq\mathbb{E'}$ or (b) $\mathbb{E'}\not\subseteq\mathbb{E}$. If (a) then there is some $(E,e)\in\mathbb{E}$ such that $(E,e)\not\in\mathbb{E'}$. But then by the inductive hypothesis there is some $\psi_{(E,e)}\in\mono$ that is refuted at $(E,e)$ but satisfied at all other examples in $E^+_{\Box^np}$, including all examples in $\mathbb{E'}$. It follows that $\nabla_{\emptyset}(\mathbb{E'})\models\Box\psi_{(E,e)}$, which is a disjunct of $\psi_{\blacktriangledown_{\emptyset}(\mathbb{E})}$. Moreover, if (b) then there is some $(E',e')\in\mathbb{E'}$ such that $(E',e')\not\in\mathbb{E}$. \revision{By the inductive hypothesis, for each $(E,e)\in\mathbb{E}$ there is a formula $\psi_{(E,e)}\in\mono$ that is only refuted on $(E,e)$ but \rrevision{not on other examples} in $E^+_{\Box^np}$. It follows that $E',e'\models
\bigwedge_{(E,e)\in\mathbb{E}}\psi_{(E,e)}$} and so since $(E',e')\in\mathbb{E'}$ we get $\nabla_{\emptyset}(\mathbb{E'})\models\Diamond(\bigwedge_{(E,e)\in\mathbb{E}}\psi_{(E,e)})$, which is a disjunct of $\psi_{\blacktriangledown_{\emptyset}(\mathbb{E})}$. 
\end{proof}

\begin{theorem}
Every finite characterisation of $\Box^np$ w.r.t. $\mathbb{C}(\mono[\{p\}])$ has at least $\text{tower}(n,2)$ many positive examples. Dually, every finite characterisation of $\Diamond^np$ w.r.t. $\mathbb{C}(\mono[\{p\}])$ has at least $\text{tower}(n,2)$ many negative examples.
\end{theorem}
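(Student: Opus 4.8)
The plan is to show, for any finite characterisation $(F^+,F^-)$ of $\Box^np$ w.r.t.\ $\mathbb{C}(\mono[\{p\}])$, that $|F^+|\geq |E^+_{\Box^np}|$, and separately to compute $|E^+_{\Box^np}|=\text{tower}(n,2)$. The dual statement for $\Diamond^np$ will then follow by the model-flipping operation $(\cdot)^{\neg}$.

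First I would pin down the size of our own characterisation. Unfolding Definition~\ref{def:posexamples} for $\Box^np=\Box(\Box^{n-1}p)$ (this is case (iii) with empty $\pi$ and a single disjunct) gives the recursion $E^+_{\Box^np}=\{\blacktriangledown_{\emptyset}(\mathbb{E})\mid \mathbb{E}\subseteq E^+_{\Box^{n-1}p}\}$, with base case $E^+_{p}=\{\blacktriangledown_{\{p\}}(\emptyloop)\}$. Writing $a_n=|E^+_{\Box^np}|$, this yields $a_0=1$ and $a_n=2^{a_{n-1}}$, hence $a_n=\text{tower}(n,2)$ for $n\geq 1$---provided the $2^{a_{n-1}}$ glued models indexed by distinct subsets $\mathbb{E}$ are genuinely pairwise non-equivalent. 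This last point is exactly what the proof of Lemma~\ref{lem:minimality} secures: it constructs, for every $(E,e)\in E^+_{\Box^np}$, a formula $\psi_{(E,e)}\in\mono$ that is refuted at $(E,e)$ but satisfied at every other example in $E^+_{\Box^np}$.

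The heart of the argument is to convert the minimality of $E^+_{\Box^np}$ into a lower bound valid for arbitrary characterisations, using these distinguishing formulas. For each $(E,e)\in E^+_{\Box^np}$ consider $\varphi_{(E,e)}:=\Box^np\wedge\psi_{(E,e)}\in\mono$. Since $(E,e)\models\Box^np$ (Proposition~\ref{prop:fitsposexamples}) but $(E,e)\not\models\psi_{(E,e)}$, this formula is \emph{not} equivalent to $\Box^np$, and moreover $\varphi_{(E,e)}\models\Box^np$, so every negative example of $\Box^np$ is also a negative example of $\varphi_{(E,e)}$. Hence $\varphi_{(E,e)}$ fits $F^-$ as negative examples, and as $(F^+,F^-)$ characterises $\Box^np$ uniquely, $\varphi_{(E,e)}$ must fail to fit $F^+$: there is some $(F,f)\in F^+$ with $(F,f)\models\Box^np$ but $(F,f)\not\models\psi_{(E,e)}$. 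I would then argue that the assignment $(E,e)\mapsto(F,f)$ is injective. Since $(F,f)\models\Box^np$, by Proposition~\ref{prop:lhsduality} it weakly simulates some $(E_*,e_*)\in E^+_{\Box^np}$; if $(E_*,e_*)\neq(E,e)$ then $(E_*,e_*)\models\psi_{(E,e)}$, and preservation under weak simulations (Theorem~\ref{thm:wSimPreservation}) would force $(F,f)\models\psi_{(E,e)}$, a contradiction. Thus $(F,f)$ weakly simulates \emph{exactly one} element of $E^+_{\Box^np}$, namely $(E,e)$, so distinct $(E,e)$ receive distinct witnesses and $|F^+|\geq a_n=\text{tower}(n,2)$.

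For the dual claim I would use that $\Diamond^np=\triangleleft(\Box^np)$ together with the flipping isomorphism $(\cdot)^{\neg}$ of $\mathbf{wSim}[\{p\}]$ (Propositions~\ref{prop:flippingsimus}, \ref{prop:isomorphism} and \ref{prop:coherencenegoperations}): flipping the valuations of all examples and swapping the positive and negative labels turns any characterisation $(F^+,F^-)$ of $\Diamond^np$ with $k$ negative examples into a characterisation of $\Box^np$ with $k$ positive examples, exactly the transformation underlying the proof of Theorem~\ref{thm:dualposex}. The first part then forces $k\geq\text{tower}(n,2)$. I expect the main obstacle to be precisely the witness/injectivity step: the conceptual work lies in transferring minimality of our \emph{specific} duality into a bound that holds for \emph{every} characterisation, and in handling the escape-clause behaviour of weak simulations when showing that each witness sits above a unique minimal example.
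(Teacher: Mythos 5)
Your proposal is correct and takes essentially the same route as the paper's proof: both arguments combine the distinguishing formulas $\psi_{(E,e)}$ from Lemma~\ref{lem:minimality} with Proposition~\ref{prop:lhsduality} and preservation under weak simulations (Theorem~\ref{thm:wSimPreservation}), using the fooling formula $\Box^np\wedge\psi_{(E,e)}$ to contradict unique characterisability. The only differences are organizational---you build an explicit injection from $E^+_{\Box^np}$ into $F^+$ where the paper assumes $|F^+|<\text{tower}(n,2)$ and derives a contradiction by pigeonhole---and you additionally spell out the cardinality recursion and the flipping argument for the dual $\Diamond^np$ claim, both of which the paper leaves implicit.
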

\begin{proof}
Suppose for contradiction that $\mathbb{E}=(E^+,E^-)$ is a characterisation of $\Box^np$ w.r.t. $\mono$ with $m=|E^+|<|E^+_{\Box^np}|=\text{tower}(n,2)$ and let us we write $E^+=\{(E_1,e_1),\ldots,(E_m,e_m)\}$. We have shown that $\mathbb{E}_{\Box^np}$ characterises $\Box^np$ w.r.t. $\mono$ and is moreover a weak simulation duality. \rrevision{Hence, it} follows that for each $(E_i,e_i)\in E^+$ with $1\leq i\leq m$ there is some $(E'_i,e'_i)\in E^+_{\Box^np}$ such that $(E'_i,e'_i)\simu(E_i,e_i)$. But as $m< \text{tower}(n,2)=|E^+_{\Box^np}|$, there is some \textit{proper} subset $\mathbb{E}\subset E^+_{\Box^np}$ of size $m$, i.e. $\mathbb{E}=\{(E'_1,e'_1),\ldots,(E'_m,e'_m)\}$) such that $E^+\subseteq\upclosure\mathbb{E}$. But by Lemma \ref{lem:minimality} we know that $\mathbb{E}_{\Box^np}$ is minimal so for any $(E,e)\in E^+_{\Box^np}-\mathbb{E}$ (which is non-empty by assumption), there must be some $\psi_{(E,e)}$ such that $E,e\not\models\psi_{(E,e)}$ but all other examples in $E^+_{\Box^np}$ satisfy $\psi_{(E,e)}$. In particular, each example in $\mathbb{E}$ satisfies $\psi_{(E,e)}$ and hence by preservation each example in $E^+$ does as well as $E^+\subseteq\upclosure\mathbb{E}$. But then $\Box^np\wedge\psi_{(E,e)}$ fits $\mathbb{E}$ yet $\Box^np\not\equiv\Box^np\wedge\psi_{(E,e)}$ as witnessed by $E,e\models\Box^np\wedge\neg\psi_{(E,e)}$, contradicting our assumption that $\mathbb{E}$ characterises $\Box^{n}p$ w.r.t. $\mono$.
\end{proof}

\subsection{Uniform Formulas}

In Section~\ref{chap:fragments}, we showed that $\mathbb{C}(\mathcal{L}_{\Box,\Diamond,\wedge,\bot}[\Prop])$ and $\mathbb{C}(\mathcal{L}_{\Box,\Diamond,\vee,\top}[\Prop])$ do not admit finite characterisations, even for $\Prop=\emptyset$. Specifically, $\bot$ is not finitely characterisable in the former and $\top$ is not finitely characterisable in the latter. It follows that  $\mathbb{C}(\mathcal{L}_{\Box,\Diamond,\wedge,\neg_{\text{at}}}[\Prop])$, for nonempty $\Prop$, does not admit finite characterisations either, because $\bot$ is definable as $p\wedge\neg_{\text{at}}p$, and likewise for the dual class $\mathbb{C}(\mathcal{L}_{\Box,\Diamond,\vee,\neg_{\text{at}}}[\Prop])$. In other words, our positive result for
$\mathbb{C}(\mono)$ crucially relies on the absence of negation.

We will now show that 
our positive result for $\mathbb{C}(\mono[\Prop])$ extends to a larger class of \emph{uniform} $\mathcal{L}_{\Diamond,\Box,\land,\lor,\neg_{\text{at}}}$-formulas, which allow atomic negation only in a very restricted way. For any $P,Q$ disjoint finite sets of propositional variables, and a modal fragment $\mathcal{L}_S$, let $\mathcal{L}_S[P;Q]$ denote the set of $\mathcal{L}$-formulas that only use unnegated occurrences of variables $p\in P$ and only negated occurrences $\neg_{at}q$ for variables $q\in Q$. 

We will show that $\mathbb{C}(\mathcal{L}_{\Box,\Diamond,\wedge,\vee,\neg_{\text{at}}}[P;Q])$ admits finite characterisations, through a suitable reduction to  
$\mathbb{C}(\mathcal{L}_{\Box,\Diamond,\wedge,\vee}[\Prop])$
The idea is that we can treat negated variables from $Q$ \emph{as if they were positive}. For each $q\in Q$ we create fresh variable $q^{\neg}$ that stands for the negation $\neg q$ of $q$. Let $Q^{\neg}:=\{q^{\neg}\mid q\in Q\}$. 

\begin{definition}
Given a pointed model $(M,s)$, let $(M,s)^{\neg Q}:=(M^{\neg},s)$ where $M^{\neg Q}:=(dom(M),R,v^{\neg Q})$ is obtained from $M$ by defining $v^{\neg Q}:dom(M)\to\mathcal{P}(P\cup Q^{\neg})$ as $v^{\neg Q}(t):=v(t)\cup\{q^{\neg}\mid q\in Q\;\text{and}\;q\not\in v(t)\}$ for all $t\in dom(M)$.  Next, given a modal formula $\varphi\in\mathcal{L}_{\text{full}}[P;Q]$ where $Q=\{q_1,\ldots,q_n\}$, let 
\[\varphi^{\neg Q}:=\varphi[q_1^{\neg}/\neg q_1]\;\ldots\;[q_n^{\neg}/\neg q_n]\]
Note that if $\varphi\in\mathcal{L}_{\text{full}}[P;Q]$ then $\varphi^{\neg Q}\in\mathcal{L}_{\text{full}}[P\cup Q^{\neg}]$.
\end{definition}

\begin{lemma}\label{lem:flippinrelativised}
For all modal formulas $\varphi,\psi\in\mathcal{L}_{\text{full}}[P;Q]$ and pointed models $(M,s)$;
\[M,s\models\varphi\;\;\textrm{iff}\;\; M^{\neg Q},s\models\varphi^{\neg Q}\qquad\text{and}\qquad\varphi\equiv\psi\;\;\text{iff}\;\;\varphi^{\neg Q}\equiv\psi^{\neg Q}\]
\end{lemma}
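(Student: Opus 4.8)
The plan is to establish the two biconditionals separately, with the first (the model-level semantic invariance) carrying most of the weight and feeding into the second. First I would prove $M,s\models\varphi$ iff $M^{\neg Q},s\models\varphi^{\neg Q}$ by induction on the structure of $\varphi\in\mathcal{L}_{\text{full}}[P;Q]$. Since formulas in this fragment are in negation normal form with $P$-variables occurring only positively and $Q$-variables only under $\neg_{at}$, there are exactly two nontrivial atomic cases. For $\varphi=p$ with $p\in P$, the translation leaves $p$ unchanged, and by definition of $v^{\neg Q}$ we have $p\in v^{\neg Q}(s)$ iff $p\in v(s)$. For $\varphi=\neg_{at}q$ with $q\in Q$, the translation replaces it by the fresh variable $q^{\neg}$, and by definition $q^{\neg}\in v^{\neg Q}(s)$ iff $q\notin v(s)$, i.e.\ iff $M,s\models\neg_{at}q$. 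The cases for $\top$ and $\bot$ are immediate, and the Boolean and modal cases follow directly from the induction hypothesis, using crucially that $(\cdot)^{\neg Q}$ commutes with all connectives (it only touches atoms) and leaves the domain and accessibility relation $R$ untouched, so that the existential witness for $\Diamond$ and the universal quantification for $\Box$ transfer verbatim between $M$ and $M^{\neg Q}$.

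With the first biconditional in hand, the backward direction of the equivalence claim ($\varphi^{\neg Q}\equiv\psi^{\neg Q}$ implies $\varphi\equiv\psi$) is then immediate: for an arbitrary pointed model $(M,s)$ over $P\cup Q$, the first part gives $M,s\models\varphi$ iff $M^{\neg Q},s\models\varphi^{\neg Q}$ iff $M^{\neg Q},s\models\psi^{\neg Q}$ iff $M,s\models\psi$, so $\varphi$ and $\psi$ have the same models.

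The forward direction ($\varphi\equiv\psi$ implies $\varphi^{\neg Q}\equiv\psi^{\neg Q}$) is where the one real subtlety lies, and it is the step I expect to be the main obstacle. The difficulty is that $\varphi^{\neg Q}$ and $\psi^{\neg Q}$ live over the signature $P\cup Q^{\neg}$, so their equivalence quantifies over \emph{all} $(P\cup Q^{\neg})$-models, not merely those of the form $M^{\neg Q}$. I would resolve this by showing that $(\cdot)^{\neg Q}$ is essentially surjective on models: given any pointed model $(N,t)$ over $P\cup Q^{\neg}$, define a model $M$ on the same domain and relation by ``un-flipping'', setting $p\in v^M(u)$ iff $p\in v^N(u)$ for $p\in P$, and $q\in v^M(u)$ iff $q^{\neg}\notin v^N(u)$ for $q\in Q$. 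A direct check from the definitions shows that $M^{\neg Q}$ and $N$ assign the same truth values to every variable in $P\cup Q^{\neg}$, hence agree on all formulas over that signature. The first part then yields $N,t\models\varphi^{\neg Q}$ iff $M^{\neg Q},t\models\varphi^{\neg Q}$ iff $M,t\models\varphi$, and likewise for $\psi$; combined with $\varphi\equiv\psi$ this gives $N,t\models\varphi^{\neg Q}$ iff $N,t\models\psi^{\neg Q}$, and as $(N,t)$ was arbitrary we conclude $\varphi^{\neg Q}\equiv\psi^{\neg Q}$. The only point requiring care is the minor bookkeeping around the codomain of $v^{\neg Q}$: the original $Q$-variables play no role, since neither $\varphi^{\neg Q}$ nor $\psi^{\neg Q}$ mentions them, which I would handle by working throughout with the restrictions of valuations to the variables that actually occur.
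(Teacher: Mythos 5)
Your proof is correct, and your first biconditional is handled essentially as in the paper: an induction on formula structure in which only the atomic cases are interesting (the paper simply defers to the induction of Proposition~\ref{prop:coherencenegoperations}, noting that the case $\neg_{\text{at}}q \mapsto q^{\neg}$ is the only one that differs). For the second biconditional, however, you take a genuinely different route. The paper argues syntactically: $\varphi\equiv\psi$ iff $\varphi\leftrightarrow\psi$ is valid, and this holds iff $\varphi^{\neg Q}\leftrightarrow\psi^{\neg Q}$ is valid because validity is closed under uniform substitution. You instead argue semantically, and you correctly isolate the real subtlety---that equivalence of $\varphi^{\neg Q}$ and $\psi^{\neg Q}$ quantifies over \emph{all} models in the signature $P\cup Q^{\neg}$, not merely those of the form $M^{\neg Q}$---and resolve it by showing that $(\cdot)^{\neg Q}$ is essentially surjective: any $(N,t)$ over $P\cup Q^{\neg}$ can be ``un-flipped'' to an $M$ such that $M^{\neg Q}$ agrees with $N$ on every variable the formulas actually mention. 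Your route is more self-contained, needing no metatheorem about substitution, and it makes explicit a point the paper's one-line argument glosses over; the paper's route is shorter, but it leaves the reader to verify that the relevant replacements really are uniform substitutions in both directions (note that $q^{\neg}\mapsto\neg q$ is a legitimate substitution of a formula for an atom, whereas $\neg q\mapsto q^{\neg}$ is not; that direction instead needs the substitution $q\mapsto\neg q^{\neg}$ together with the fact that $q\in Q$ occurs only negated, so that double negations can be eliminated).
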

\begin{proof}
Only the atomic case differs from Proposition~\ref{prop:coherencenegoperations} and for any state $s$ and $q\in Q$ we have $s\models\neg_{\text{at}}q$ iff $q\not\in v(s)$ iff $q^{\neg}\in v^{\neg Q}(s)$ iff $s\models q^{\neg}$. Next, observe that $\varphi\equiv\psi$ iff $\varphi\leftrightarrow\psi$ is valid. Furthermore, $\varphi\leftrightarrow\psi$ is valid iff $(\varphi\leftrightarrow\psi)^{\neg Q}=\varphi^{\neg Q}\leftrightarrow\psi^{\neg Q}$ is valid because validity is closed under uniform substitution \cite{bluebook}, which of course holds iff $\varphi^{\neg Q}\equiv\psi^{\neg Q}$.
\end{proof}

\begin{theorem}
For all disjoint finite sets $P,Q\subseteq\Prop$, the concept class $\mathbb{C}(\mathcal{L}_{\Box,\Diamond,\wedge,\vee,\neg_{\text{at}}}[P;Q])$ admits finite characterisations.
\end{theorem}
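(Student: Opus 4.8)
The plan is to reduce the problem to the finite characterisability of $\mathbb{C}(\mono[P\cup Q^\neg])$, which is guaranteed by Theorem~\ref{thm:main-mono}, using the truth- and equivalence-preserving translation $(\cdot)^{\neg Q}$ of Lemma~\ref{lem:flippinrelativised}. The first thing I would record is that $(\cdot)^{\neg Q}$, viewed as an operation on pointed models, is a \emph{frame-preserving bijection} from the example space of $\mathbb{C}(\mathcal{L}_{\Box,\Diamond,\wedge,\vee,\neg_{\text{at}}}[P;Q])$, i.e.\ pointed models over $P\cup Q$, onto the example space of $\mathbb{C}(\mono[P\cup Q^\neg])$, i.e.\ pointed models over $P\cup Q^\neg$. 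Indeed, at each state the map leaves the variables of $P$ untouched and sends the truth value of $q\in Q$ to the opposite truth value of the fresh variable $q^\neg$; this is visibly invertible on valuations and does not alter the underlying frame, so it restricts to a bijection between the \emph{finite} pointed models over the two vocabularies.

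Next, given $\varphi\in\mathcal{L}_{\Box,\Diamond,\wedge,\vee,\neg_{\text{at}}}[P;Q]$, I would form $\varphi^{\neg Q}$, which lies in $\mono[P\cup Q^\neg]$ because replacing each $\neg q$ by the positive atom $q^\neg$ removes all negations while introducing only variables from $P\cup Q^\neg$. By Theorem~\ref{thm:main-mono}, $\varphi^{\neg Q}$ has a finite characterisation $(E^+,E^-)$ w.r.t.\ $\mathbb{C}(\mono[P\cup Q^\neg])$, consisting of finite pointed models over $P\cup Q^\neg$. Using the bijection above, I define $F^+$ and $F^-$ to be the unique preimages of $E^+$ and $E^-$ under $(\cdot)^{\neg Q}$, so that $E^+=\{(M,s)^{\neg Q}\mid (M,s)\in F^+\}$ and $E^-=\{(M,s)^{\neg Q}\mid (M,s)\in F^-\}$; these are finite sets of finite pointed models over $P\cup Q$, and $(F^+,F^-)$ is the proposed characterisation of $\varphi$.

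Finally I would verify that $(F^+,F^-)$ finitely characterises $\varphi$. That $\varphi$ fits $(F^+,F^-)$ is immediate from the first equivalence of Lemma~\ref{lem:flippinrelativised}: for $(M,s)\in F^+$ we have $(M,s)^{\neg Q}\in E^+$, so $(M,s)^{\neg Q}\models\varphi^{\neg Q}$ and hence $M,s\models\varphi$; dually for $F^-$. For uniqueness, suppose $\psi\in\mathcal{L}_{\Box,\Diamond,\wedge,\vee,\neg_{\text{at}}}[P;Q]$ fits $(F^+,F^-)$; then $\psi^{\neg Q}\in\mono[P\cup Q^\neg]$, and chasing the bijection together with Lemma~\ref{lem:flippinrelativised} shows that $\psi^{\neg Q}$ fits $(E^+,E^-)$. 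Since $(E^+,E^-)$ characterises $\varphi^{\neg Q}$ within $\mono[P\cup Q^\neg]$, we get $\psi^{\neg Q}\equiv\varphi^{\neg Q}$, whence the second equivalence of Lemma~\ref{lem:flippinrelativised} gives $\psi\equiv\varphi$. There is no serious obstacle here; the one point demanding care is the exactness of the correspondence, namely that the translation keeps every formula of the fragment inside negation-free $\mono[P\cup Q^\neg]$ and that the model map is a genuine bijection of example spaces, so that fitting transfers in both directions with no examples lost or duplicated. I would also note that, since the bijection is computable and Theorem~\ref{thm:main-mono} yields effectively computable characterisations, the resulting characterisations of $\varphi$ are effectively computable as well.
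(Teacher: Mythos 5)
Your proposal is correct and follows essentially the same route as the paper: translate $\varphi$ to $\varphi^{\neg Q}\in\mono[P\cup Q^{\neg}]$, invoke Theorem~\ref{thm:main-mono}, transfer the resulting examples back to pointed models over $P\cup Q$, and use Lemma~\ref{lem:flippinrelativised} for both the fitting and the uniqueness direction. The only (cosmetic) difference is that you take preimages under the model-level bijection, whereas the paper writes the transfer as a forward application of $(\cdot)^{\neg Q}$ to the examples, implicitly using that this map is an involution up to identifying $q$ with $q^{\neg\neg}$ --- your phrasing is in fact the more careful one.
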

\begin{proof}
Take any $\varphi\in\mathcal{L}_{\Box,\Diamond,\wedge,\vee,\neg_{\text{at}}}[P;Q]$ and consider $\varphi^{\neg Q}\in\mathcal{L}_{\Box,\Diamond,\wedge,\vee}[P\cup Q^{\neg}]$. By Theorem~\ref{thm:main-mono}, $\varphi^{\neg Q}$ has a finite characterisation $(E^+,E^-)$ with respect to $\mathcal{L}_{\Box,\Diamond,\wedge,\vee}[P\cup Q^{\neg}]$. Define
\[E^+_{\varphi}:=\{(E,e)^{\neg Q}\mid (E,e)\in E^+_{\varphi^{\neg Q}}\}\qquad E^-_{\varphi}:=\{(E,e)^{\neg Q}\mid (E,e)\in E^-_{\varphi^{\neg Q}}\}\]
and consider any $\psi\in\mathcal{L}_{\Box,\Diamond,\wedge,\vee}[P;Q]$ that fits $(E^+_{\varphi}, E^-_{\varphi})$. By Theorem \ref{lem:flippinrelativised} it follows that $\psi^{\neg Q}$ fits $(E^+_{\varphi^{\neg Q}},E^-_{\varphi^{\neg Q}})$ and hence $\varphi^{\neg Q}\equiv\psi^{\neg Q}$ by the fact that this is a characterisation of $\varphi^{\neg Q}$. But then by lemma \ref{lem:flippinrelativised} we get that $\varphi\equiv\psi$.
\end{proof}





\subsection{Poly-modal formulas}
So far we have only looked at modal languages with a single modality $\Diamond$ or with both the $\Diamond$ modality and its dual $\Box$, and the corresponding Kripke models with only \emph{one} accessibility relation. However, our results generalise straightforwardly to their respective poly-modal generalisations. This is relevant from the point of
view of potential applications in description logic.
Our proofs extend ``pointwise'' to the poly-modal setting. 
In particular, one obtains generalised versions of Fine normal form, where basic normal forms are of the form
\[\pi\wedge\bigwedge_{i}\Box_i\varphi_i\wedge\Diamond_i\varphi_{i,0}\wedge\ldots\wedge\Diamond\varphi_{i,n_i}\]
Further, the definition of weak simulations and our category-theoretic analysis of $\mathbf{wSim}[\Prop]$ extend straightforwardly to the poly-modal setting. The relevant construction of positive examples for poly-modal formulas in basic normal form proceeds by recursively computing the characterisations for individual component basic normal forms with one modality and merging their roots. Again, negative examples can be obtained from the positive ones via the endo-functor $(\cdot)^\neg$.

\section{Conclusion}
In this paper, we investigated under what conditions modal formulas admit finite characterisations. First, we focused on the full modal language and showed that even over restricted classes of frames it does not admit finite characterisations, except in trivial cases. This motivated our further investigation into modal fragments, where we fixed the frame class parameter to the class of all (finite) frames, but 
we restrict the set of connectives. Here, we obtain a number
of positive and negative results. In particular, in section \ref{sec:mono} we proved our main result that $\mono$ admits finite characterisations. Finally, in section \ref{sec:further} we proved tight non-elementary bounds on the size of characterisations for $\mono$, and extended our main result to classes of uniform formulas and indicated how poly-modal generalisations could be obtained.

In future work, we would like to close the question on whether $\mathbb{C}(\mathcal{L}_{\Box,\Diamond,\wedge,\top})$ and $\mathbb{C}(\mathcal{L}_{\Diamond,\wedge,\vee,\neg_{\text{at}}})$ admit finite characterisations. Furthermore, we conjecture that $\mathbb{C}(\mathcal{L}_{\Box,\Diamond,\wedge})$ and even $\mathbb{C}(\mathcal{L}_{\Box,\Diamond,\wedge,\top})$ admits polynomial-sized characterisations, and might even support a polynomial-time exact learning algorithm with membership queries. 

Finally, we leave the question of the existence of characterisations for fragments w.r.t. restricted frame classes wide open. Recent results in~\cite{cate2023rightadjoints} may provide an initial starting point here.

\begin{acks}
Koudijs is supported by the NFR project ``Learning Description Logic Ontologies'', grant number 316022. Ten Cate is supported by the European Union’s Horizon 2020 research and innovation programme under grant MSCA-101031081. We are grateful to Nick Bezhanishvili, Malvin Gattinger and Frank Wolter for comments on earlier versions of this material. We also thank the anonymous reviewers for their detailed and helpful feedback.
\end{acks}

\bibliographystyle{ACM-Reference-Format}
\bibliography{bibfile}

\appendix

\end{document}